\newcommand{\takeout}[1]{\empty}
\renewcommand*\showkeyslabelformat[1]{%
\@ifundefined{hideNextShowKeysLabel}{%
\noexpandarg%
% instead of \textvisiblespace you can also put in ~
% if you want to keep a plain space at space characters
\StrSubstitute{#1}{ }{\textvisiblespace}[\TEMP]%
\parbox[t]{\marginparwidth}{\raggedright\normalfont\small\ttfamily\{{\color{red!50!black}\expandafter\seqsplit\expandafter{\TEMP}}\}}%
}{}% end of \@ifundefined
}
\begin{document}

%
% Frontmatter
%
\title[Proper Functors]{Proper Functors and Fixed Points for Finite Behaviour}

\author[S.~Milius]{Stefan Milius}
\address{Lehrstuhl f\"{u}r Informatik 8 (Theoretische Informatik),
  \newline Friedrich-Alexander-Universit\"{a}t Erlangen-N\"{u}rnberg, Germany}
\email{mail@stefan-milius.eu}
\thanks{Supported by the Deutsche Forschungsgemeinschaft (DFG) under project MI~717/5-1}

%\subjclass{F.3.2~Semantics of Programming Languages}

\keywords{proper functor, proper semiring, coalgebra, rational fixed point}

\begin{abstract}
  \smnote[inline]{Maybe reformulate to include universal property of
    $\phi F$.}
  The rational fixed point of a set functor is well-known to capture
  the behaviour of finite coalgebras. In this paper we consider
  functors on algebraic categories. For them the rational fixed point
  may no longer be fully abstract, i.e.~a subcoalgebra of the final
  coalgebra. Inspired by \'Esik and Maletti's notion of a proper
  semiring, we introduce the notion of a proper functor. We show that
  for proper functors the rational fixed point is determined as the
  colimit of all coalgebras with a free finitely generated algebra as
  carrier and it is a subcoalgebra of the final coalgebra. Moreover,
  we prove that a functor is proper if and only if that colimit is a
  subcoalgebra of the final coalgebra. These results serve as
  technical tools for soundness and completeness proofs for
  coalgebraic regular expression calculi, e.g.~for weighted automata.
\end{abstract}

\maketitle
\smnote{This is a new title!}

%
% end of frontmatter - start of paper
%
\section{Introduction}

\takeout{% Stoffsammlung
\begin{itemize}
\item rational fixed point as fully abstract 
\item locally finite fixed point as subcoalgebra 
\item picture with 4 fixed points (coalgebras) and hierarchy 
\end{itemize}}% end takeout

Coalgebras allow to model many types of systems within a uniform and
conceptually clear mathematical framework~\cite{rutten}. One of the
key features of this framework is \emph{final semantics}; the final
coalgebra provides a fully abstract domain of system behaviour
(i.e.~it identifies precisely the behaviourally equivalent
states). For example, the standard coalgebraic modelling of
deterministic automata (without restricting to finite state sets)
yields the set of formal languages as final coalgebra. Restricting to
finite automata, one obtains precisely the regular
languages~\cite{Rutten:1998:ACE}. It is well-known that this
correspondence can be generalized to locally finitely presentable
(lfp) categories~\cite{ar}, where \emph{finitely presentable} objects
play the role of finite sets. For a finitary functor $F$ (modelling a
coalgebraic system type) one then obtains the \emph{rational fixed
  point} $\rho F$, which provides final semantics to all coalgebras
with a finitely presentable carrier~\cite{m_linexp}. Moreover, the
rational fixed point is \emph{fully abstract}, i.e.~$\rho F$ is a
subcoalgebra of the final one $\nu F$, whenever the classes of
finitely presentable and finitely generated objects agree in the base
category and $F$ preserves non-empty
monomorphisms~\cite[Section~5]{mpw17}. While the latter
assumption on $F$ is very mild, the former one on the base category is more
restrictive. However, it is still true for many categories used in the
construction of coalgebraic system models (e.g.~sets, posets, graphs,
vector spaces, commutative monoids, nominal sets and positively convex
algebras). Hence, in these cases the rational fixed point $\rho F$ is
the canonical domain of \emph{regular} behaviour, i.e.~the behaviour
of `finite' systems of type $F$. 

In this paper we will consider rational fixed points in algebraic
categories (a.k.a.~finitary varieties), i.e.~categories of algebras
specified by a signature of operation symbols with finite arity and a set of
equations (equivalently, these are precisely the Eilenberg-Moore
categories for finitary monads on sets). Being the target of
generalized determinization~\cite{sbbr13}, these categories provide a
paradigmatic setting for coalgebraic modelling beyond sets. For
example, non-deterministic automata, weighted or probabilistic
automata~\cite{JacobsEA15}, or context-free grammars~\cite{wbr13} are
coalgebraically modelled over the categories of join-semilattices,
modules for a semiring, positively convex algebras, and idempotent semirings,
respectively. In algebraic categories one would like that the rational
fixed point, in addition to being fully abstract, is determined
already by those coalgebras carried by free finitely generated
algebras, i.e.~precisely those coalgebras arising by generalized
determinization. In particular, this feature is used in completeness
proofs for generalized regular expressions
calculi~\cite{brs_lmcs,sbbr13,bms13}; there one proves that the
quotient of syntactic expressions modulo axioms of the calculus is
(isomorphic to) the rational fixed point by establishing its universal
property as a final object for that quotient. A key feature of the settings in
loc.~cit.~is that it suffices to verify the finality only
w.r.t.~coalgebras with a free finitely generated carrier.

The purpose of the present paper is to provide sufficient conditions
on the algebraic base category and coalgebraic type functor that
ensure that the rational fixed point is fully abstract and that such
finality proofs are sound. To this end we form a coalgebra that serves
as the semantics domain of all behaviours of target coalgebras of
generalized determinization (modulo bisimilarity on the level of these
coalgebra). More precisely, let $T: \Set \to \Set$ be a finitary monad
on sets and $F: \Set^T \to \Set^T$ be a finitary endofunctor
preserving surjective $T$-algebra morphisms (note that the last
assumption always holds if $F$ is lifted from some endofunctor on
$\Set$). Now form the colimit $\phi F$ of the inclusion functor of the
full subcategory $\coafr F$ formed by all $F$-coalgebras of the form
$TX \to FTX$, where $X$ is a finite set. Urbat~\cite{Urbat17} has
shown that $\phi F$ is a fixed point of $F$. We first first provide a
characterization of $\phi F$ that uniquely determines it up to
isomorphism: based on Ad\'amek et al.'s notion of a Bloom
algebra~\cite{ahm14}, we introduce the new notion of an
\emph{ffg-Bloom algebra}, and we prove that, considered as an algebra
for $F$, $\phi F$ is the initial ffg-Bloom algebra
(Theorem~\ref{thm:Bloomini}).

Then we turn to the full abstractness of the rational fixed point
$\rho F$ and the soundness of the above mentioned finality
proofs. Inspired by \'Esik and Maletti's notion of a proper semiring
(which is in fact a notion concerning weighted automata), we introduce
\emph{proper functors} (Definition~\ref{dfn:proper}), and we prove
that for a proper functor on an algebraic category the rational fixed
point is determined by the coalgebras with a free finitely generated
carrier. More precisely, if $F$ is proper, then the rational fixed
point $\rho F$ is (isomorphic to) initial Bloom algebra $\phi
F$. Moreover, we show that a functor $F$ is proper if and only if
$\phi F$ is a subcoalgebra of the final coalgebra $\nu F$
(Theorem~\ref{thm:main}). As a consequence we also obtain the desired
result that for a proper functor $F$ the finality property of $\rho F$ can
be established by only verifying that property for all coalgebras from
$\coafr F$ (Corollary~\ref{cor:final}).

In addition, we provide more easily established sufficient conditions on
$\Set^T$ and $F$ that ensure properness: $F$ is proper if finitely
generated algebras of $\Set^T$ are closed under kernel pairs and $F$
maps kernel pairs to weak pullbacks in $\Set$. For a lifting $F$ this
holds whenever the lifted functor on sets preserves weak pullbacks; in
fact, in this case the above conditions were shown to entail
Corollary~\ref{cor:final} in previous
work~\cite[Corollary~3.36]{bms13}. However, the type functor (on the
category of commutative monoids) of weighted automata with weights
drawn from the semiring of natural numbers provides an example of a
proper functor for which the above condition on $\Set^T$ fails.

Another recent related work concerns the so-called \emph{locally
  finite fixed point} $\theta F$~\cite{mpw17}; this provides a fully
abstract behavioural domain whenever $F$ is a finitary endofunctor on
an lfp category preserving non-empty monomorphisms. In loc.~cit.~it was shown
that $\theta F$ captures a number of instances that cannot be captured
by the rational fixed point, e.g.~context free languages~\cite{wbr13},
constructively algebraic formal power-series~\cite{PetreS09,wbr15},
Courcelle's algebraic trees~\cite{courcelle,amv_secalg} and the
behaviour of stack machines~\cite{gms14}. However, as far as we know,
$\theta F$ is not amenable to the simplified finality check mentioned
above unless $F$ is proper.
%In general, in
%an algebraic category these form a larger class of objects than
%finitely presentable ones, which in turn form a larger class than free
%finitely generated ones.

Putting everything together, in an algebraic category we obtain the
following picture of fixed points of $F$ (where $\epito$ denotes
quotient coalgebras and $\monoto$ a subcoalgebra):
\begin{equation}\label{diag:hierarchy}
  \phi F \epito \rho F \epito \theta F \monoto \nu F.
\end{equation}
We exhibit an example, where all four fixed points are
different. However, if $F$ is proper and preserves monomorphisms, then
$\phi F$, $\rho F$ and $\theta F$ are isomorphic and fully abstract,
i.e.~they collapse to a subcoalgebra of the final one:
$\phi F \cong \rho F \cong \theta F \monoto \nu F$.

At this point, note that Urbat's above mentioned recent
work~\cite{Urbat17} also provides a framework which covers the four
above fixed points as four instances of one theory. This provides, for
example, a uniform proof of the fact that they are fixed points and
their universal properties (in the case of $\rho F$, $\theta F$ and
$\phi F$). However, Urbat's paper does not study the relationship
between the four fixed points.

The rest of the paper is structured as follows: in
Section~\ref{sec:prelim} we collect some technical preliminaries and
recall the rational and locally finite fixed points more in
detail. Section~\ref{sec:phi} introduces the new fixed point $\phi F$
and establishes the picture in~\eqref{diag:hierarchy}. Next,
Section~\ref{sec:univ} provides the characterization of $\phi F$ as
the initial ffg-Bloom algebra for $F$. Section~\ref{sec:prop}
introduces proper functors and presents our main results, while in
Section~\ref{sec:proof} we present the proof of
Theorem~\ref{thm:main}. Finally, Section~\ref{sec:con} concludes the
paper.

This paper is a reworked full version of the conference
paper~\cite{milius17}. We have included detailed proofs, and in
addition, we have added the new results in Section~\ref{sec:univ}.

\paragraph{\bf Acknowledgments} I would like to thank Ji\v{r}\'i
Ad\'amek, Henning Urbat and Joost Winter for helpful discussions. I am
also grateful to the anonymous reviewers whose constructive comments
have helped to improve the presentation of this paper.

\section{Preliminaries}
\label{sec:prelim}

In this section we recall a few preliminaries needed for the
subsequent development. We assume that readers are familiar with basic
concepts of category theory. 

We denote the coproduct of two objects $X$ and $Y$ of a category $\A$
by $X+Y$ with injections $\inl: X \to X+Y$ and $\inr: Y \to X+Y$. 
%, and we write $\can$ for the canonical morphism
%\[
%  \can = [F\inl, F\inr]: FX + FY \to F(X+Y)
%\]
%for every endofunctor $F$ on $\A$. 

\begin{rem}
  Recall that a \emph{strong epimorphism} in a category $\A$ is an
  epimorphism $e: A \epito B$ of $\A$ that has the \emph{unique
  diagonal property} w.r.t.~any monomorphism. More precisely,
  whenever the outside of the following square 
  \[
    \xymatrix{
      A \ar@{->>}[r]^-e \ar[d]_f 
      & B \ar[d]^g \ar@{-->}[ld]_d \\
      C \ar@{ >->}[r]_-m  & D}
  \]
  commutes, where $m:C \monoto D$ is a
  monomorphism, then there exists a unique morphism $d: B \to C$ with
  $d \cdot e = f$ and $m \cdot d = g$. \iffull

  Similarly, a jointly epimorphic family $e_i: A_i \to B$, $i \in I$,
  is \emph{strong} if it has the following similar unique diagonal property: for
  every monomorphism $m: C \monoto D$ and morphisms $g: B \to D$ and
  $f_i: A_i \to C$, $i \in I$, such that $m \cdot f_i = g \cdot e_i$
  holds for all $i \in I$, there exists a unique $d: C \to D$ such
  that $m \cdot d = g$ and $d \cdot e_i = f_i$ for all $i \in I$.\fi
\end{rem}

\iffull On several occasions we will make use of the following fact.
\begin{lem}\label{lem:strepi}
  Let $D: \D \to \C$ be a diagram with a colimit cocone $\inj_d: Dd
  \to C$. Then the colimit injections $\inj_d$ form a strongly
  epimorphic family. 
\end{lem}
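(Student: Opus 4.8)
The plan is to verify directly the two clauses in the definition of a strongly epimorphic family recalled above: that the cocone $(\inj_d)_{d\in\D}$ is jointly epimorphic, and that it enjoys the unique diagonalisation property against monomorphisms.

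First I would check joint epimorphicity. If $u, v\colon C \to Z$ satisfy $u \cdot \inj_d = v \cdot \inj_d$ for every object $d$ of $\D$, then $(u \cdot \inj_d)_{d\in\D}$ is a cocone over $D$ (being the post-composition of the colimit cocone with $u$, equivalently with $v$), and both $u$ and $v$ are mediating morphisms for it; hence $u = v$ by the uniqueness clause in the universal property of the colimit.

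For the diagonalisation property, let $m\colon C' \monoto D'$ be a monomorphism and suppose we are given $g\colon C \to D'$ and $f_d\colon Dd \to C'$, $d \in \D$, with $m \cdot f_d = g \cdot \inj_d$ for all $d$. The key observation is that the compatible family $(f_d)_{d\in\D}$ is automatically a cocone over $D$: for a morphism $h\colon d \to d'$ of $\D$ we have
\[
  m \cdot f_{d'} \cdot Dh = g \cdot \inj_{d'} \cdot Dh = g \cdot \inj_d = m \cdot f_d,
\]
and since $m$ is monic this yields $f_{d'} \cdot Dh = f_d$. Thus there is a unique mediating morphism $\bar f\colon C \to C'$ with $\bar f \cdot \inj_d = f_d$ for all $d$, and moreover $m \cdot \bar f = g$ follows from joint epimorphicity, since $m \cdot \bar f \cdot \inj_d = m \cdot f_d = g \cdot \inj_d$ for every $d$. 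Hence $\bar f$ is the required diagonal. Uniqueness is immediate: any morphism $C \to C'$ that composes with the $\inj_d$ to give the $f_d$ is a mediating morphism for the cocone $(f_d)_{d\in\D}$, hence equals $\bar f$.

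The argument has no real obstacle; the only point worth singling out is that monicity of $m$ is exactly what is needed to promote the compatible family $(f_d)_{d\in\D}$ to a genuine cocone over $D$, after which both the existence and the uniqueness of the diagonal are direct consequences of the universal property of the colimit.
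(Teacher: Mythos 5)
Your proof is correct and follows essentially the same route as the paper: monicity of $m$ promotes the family $(f_d)$ to a cocone, the mediating morphism gives the diagonal, and $m \cdot \bar f = g$ follows since the injections are jointly epimorphic. The only difference is that you spell out joint epimorphicity and the uniqueness of the diagonal, which the paper leaves implicit.
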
  
\begin{proof}
  First, it is easy to see that the $\inj_d$ form a jointly epimorphic
  family. To see that it is strong, suppose we have a monomorphism
  $m: M \monoto N$ and morphisms $g: C \to N$ and $f_d: Dd \to M$ for
  every object $d$ in $\D$ such that $m \cdot f_d = g \cdot
  \inj_d$. Then the $f_d: Dd \to M$ form a cocone of $D$. Indeed,
  for every morphism $h: d \to d'$ of $\D$ we have
  \[
    m \cdot f_{d'} \cdot Dh = g \cdot \inj_{d'} \cdot Dh = g \cdot \inj_{d}
    = m \cdot f_{d}, 
  \]
  which implies that $f_{d'} \cdot Dh = f_d$ since $m$ is a
  monomorphism. Therefore there exists a unique $i: C \to M$ such that
  $f_d = i \cdot \inj_d$ for every $d$ in $\D$. It follows that also
  $m \cdot i = g$ since this equation holds when extended by every
  $\inj_d$; then use that the $\inj_d$ form an epimorphic family.
\end{proof}
\fi% end full version

\subsection{Algebras and Coalgebras}
\label{sec:algcoalg}
\takeout{% Stoffsammlung taken out
\begin{itemize}
\item monad with unit $\eta$ and multiplication $\mu$; Kleisli-triple
  $(T,\eta, \kl{(-)})$
\item algebras for a monad, Eilenberg-Moore category $\Set^T$
\item just write the carrier
\item coalgebras for a functor $F$; homomorphisms; final coalgebra
  $(\nu F,t)$ and behavioural equivalence $\sim$; notation $\fin c: C
  \to \nu F$
  for the unique coalgebra morphism.

\item leading example are weighted automata; so recall semirings,
  weighted automata, weighted languages, Noetherian semirings and examples 
\end{itemize}}% end takeout

We also assume that readers are familiar with algebras and coalgebras for
an endofunctor. Given an endofunctor $F$ on some category $\A$ we
write $(\nu F,t)$ for the final $F$-coalgebra (if it exists). Recall, that
the final $F$-coalgebra exists under mild assumptions on $\A$ and $F$,
e.g.~whenever $\A$ is locally presentable and $F$ an accessible
functor (see~\cite{ar}). For any coalgebra $c: C \to FC$ we will write
$\fin c: C \to \nu F$ for the unique coalgebra morphism. We write
\[
  \coa F
\]
for the category of $F$-coalgebras and their morphisms. Recall that
all colimits in $\coa F$ are formed on the level of $\A$, i.e.~the
canonical forgetful functor $\coa F \to \A$ creates all colimits
(see e.g.~\cite[Prop.~4.3]{adamek_survey}).

If $\A$ is a concrete category, i.e.~equipped with a faithful functor
$\undl \cdot: \A \to \Set$, one defines \emph{behavioural equivalence}
as the following relation $\sim$: given two $F$-coalgebras $(X,c)$ and
$(Y,d)$ then $x \sim y$ holds for $x \in \undl X$ and $y\in \undl Y$ if
there is another $F$-coalgebra $(Z,e)$ and $F$-coalgebra morphisms $f:
X \to Z$ and $g: Y \to Z$ with $\undl f(x) = \undl g(y)$. 

The base categories $\A$ of interest in this paper are the
\emph{algebraic categories}, i.e.~categories of Eilenberg-Moore
algebras (or $T$-algebras, for short) for a finitary monad $T$ on
$\Set$. Recall that, for a monad $T$ on $\Set$ with unit $\eta$ and
multiplication $\mu$, a $T$-algebra is a pair $(A,\alpha)$ where
$\alpha: TA \to A$, called the \emph{algebra structure}, is a map such
that the diagram below commutes:
\[
  \xymatrix{
    A \ar[r]^-{\eta_A} \ar@{=}[rd] & TA \ar[d]^\alpha & TTA \ar[l]_-{\mu_A}
    \ar[d]^{T\alpha} \\
    & A & TA \ar[l]^-\alpha
  }
\]
Morphisms of $T$-algebras are just the usual morphisms of functor
algebra, i.e.~a $T$-algebra morphism $h: (A,\alpha) \to (B,\beta)$ is a
map $h: A \to B$ such that the square below commutes:
\[
  \xymatrix{
    TA \ar[r]^-\alpha \ar[d]_{Th} & A \ar[d]^h \\
    TB \ar[r]_-\beta & B
    }
\]
The category of $T$-algebras and their morphisms is denoted by
$\Set^T$ as usual. Equivalently, those categories are precisely the finitary
varieties, i.e.~category of $\Sigma$-algebras for a signature
$\Sigma$, whose operation symbols have finite arity,
satisfying a set of equations (e.g.~the categories of
monoids, groups, vector spaces, or join-semilattices).

We will frequently make use of the fact that $(TX, \mu_X)$ is the
\emph{free} $T$-algebra on the set $X$ (of generators). This means
that for every $T$-algebra $(A, \alpha)$ and every map $f: X \to A$
there exists a unique extension of $f$ to a $T$-algebra morphisms,
i.e., there exists a unique $T$-algebra morphism $\kl f$ such that
$\kl f \cdot \eta_X = f$:
\[
  \xymatrix{
    X \ar[r]^{\eta_X} \ar[rd]_f & TX \ar[d]^{\kl f} & TTX \ar[l]_-{\mu_X}
    \ar[d]^{T\kl f}
    \\
    & A & TA \ar[l]^-\alpha
    }
\]
Moreover, it is easy to verify that $\kl f = \mu_A \cdot Tf$ holds.

A free $T$-algebra $(TX, \mu_X)$ where $X$ is a
finite set is called \emph{free finitely generated}.  

In the following we will often drop algebra structures when we
discuss a $T$-algebra $(A,\alpha)$ and simply speak of the algebra $A$. 

\begin{exa}
  \label{ex:leading}
  \begin{enumerate}
  \item  The leading example in this paper are weighted automata considered
  as coalgebras. Let $(\S, +,\cdot , 0, 1)$ be a semiring,
  i.e.~$(\S, +, 0)$ is a commutative monoid, $(\S, \cdot, 1)$ a monoid
  and the usual distributive laws hold: $r \o 0 = 0 = 0 \o r$,
  $r\o (s + t) = r \o s + r \o t$ and $(r + s)\o t = r \o t + s \o
  t$. We just write $\S$ to denote a semiring. As base category $\A$
  we consider the category $\smod$ of $\S$-semimodules; recall that a
  (left) \emph{$\S$-semimodule} is a commutative monoid $(M, +, 0)$
  together with an action $\S \times M \to M$, written as
  juxtaposition $sm$ for $r \in \S$ and $m \in M$, such that for every
  $r,s \in \S$ and every $m, n \in M$ the following laws hold:
  $$
  \begin{array}{r@{\ }c@{\ }l@{\qquad}r@{\ }c@{\ }l@{\qquad}r@{\ }c@{\ }l}
    (r+s)m & = & rm + sm & 0m &  = & 0 & 1m &  = & m \\
    r(m+n) & = & rm + rn & r0 & = & 0 & r(sm) & = & (r \o s) m
  \end{array}
  $$
  An $\S$-semimodule morphism is a monoid homomorphism
  $h\colon M_1 \to M_2$ such that $h(rm) = rh(m)$ for each $r \in \S$
  and $m \in M_1$.

  An $\S$-weighted automaton over the fixed input alphabet $\Sigma$ is a
  triple $(i, (M^a)_{a \in \Sigma}, o)$, where $i$ and $o$ are a row and a
  column vector in $\S^n$, respectively, of input and output weights,
  respectively, and $M_a$ is an $n\times n$-matrix over $\S$, for some
  natural number $n$. This number $n$ is the number of states of the
  weighted automaton and the matrices $M^a$ represent $\S$-weighted
  transitions; in fact, $M^a_{i,j}$ is the weight of the
  $a$-transition from state $i$ to state $j$ (with a weight of $0$
  meaning that there is no $a$-transition). Every weighted automaton
  accepts a \emph{formal power series} (or \emph{weighted language})
  $L: \Sigma^* \to \S$ defined in the following way:
  $L(w) = i \cdot M^w \cdot o$ where $M^w$ is the obvious inductive
  extension of $a \mapsto M^a$ to words in $\Sigma^*$: $M^\eps$ is the
  identity matrix and $M^{av} = M^a \cdot M^v$ for every $a \in \Sigma$ and
  $v \in \Sigma^*$.
  
  Now consider the functor $FX = \S \times X^\Sigma$ on
  $\smod$. Clearly, a weighted automaton (without its initial vector)
  on $n$ states is equivalently an $F$-coalgebra on $\S^n$; in fact,
  to give a coalgebra structure $\S^n \to \S \times (\S^n)^\Sigma$ amount
  to specifying two $\S$-semimodule morphisms $o: \S^n \to \S$
  (equivalently, a column vector over in $\S$) and
  $t: \S^n \to (\S^n)^\Sigma$ (equivalently, an $\Sigma$-indexed family of
  $\S$-semimodule morphisms on $\S^n$ each of which can be represented
  by an $n\times n$-matrix).

  The final $F$-coalgebra is carried by the set $\S^{\Sigma^*}$ of all
  weighted languages over $\Sigma$
  with the obvious (coordinatewise) $\S$-semimodule structure and with
  the $F$-coalgebra structure given by
  $\langle o, t\rangle: \S^{\Sigma^*} \to \S \times (\S^{\Sigma^*})^\Sigma$ with
  $o(L) = L(\eps)$ and $t(L)(a) = \lambda w. L(aw)$; it is
  straightforward to verify that $o$ and $t$ are $\S$-semimodule
  morphisms and form a final coalgebra. Moreover, for every
  $F$-coalgebra on $\S^n$ the unique coalgebra morphism $\S^n \to
  \S^{\Sigma^*}$ assigns to every element $i$ of $\S^n$ (perceived as the
  row input vector of the weighted automaton associated to the given
  coalgebra) the weighted language accepted by that automaton. 

\item An important special case of $\S$-weighted automata are ordinary
  non-deterministic automata. One takes $\S = \{0,1\}$ the Boolean
  semiring for which the category of $\S$-semimodules is (isomorphic
  to) the category of join-semilattices. Then
  $FX = \{0,1\} \times X^\Sigma$ is the coalgebraic type functor of
  deterministic automata with input alphabet $\Sigma$, and there is a
  bijective correspondence between an $F$-coalgebra on a free
  join-semilattice and non-deterministic automata. In fact in one
  direction one restricts $\powf X \to \{0,1\} \times (\powf X)^\Sigma$ to
  the set $X$ of generators, and in the other direction one performs
  the well-known subset construction. The final coalgebra is
  carried by the set of all formal languages on $\Sigma$ in this case.

\item
  Another special case is where $\S$ is a field. In this case,
  $\S$-semimodules are precisely the vector spaces over the field
  $\S$. Moreover, since every field is freely generated by its basis,
  it follows that the $\S$-weighted automata are precisely those
  $F$-coalgebras whose carrier is a finite dimensional vector space
  over $\S$.
\end{enumerate}
\end{exa}

We will now recall a few properties of algebraic categories $\Set^T$,
where $T$ is a finitary set monad, needed for our proofs.
\begin{rem}\label{rem:proj}
  \begin{enumerate}
  \item\label{rem:proj1} Recall that every strong epimorphism $e$ in $\Set^T$ is
    \emph{regular}, i.e.~$e$ is the coequalizer of some pair of $T$-algebra
    morphisms. It follows that the classes of strong and regular
    epimorphisms coincide, and these are precisely the surjective
    $T$-algebra morphisms. Similarly, jointly strongly epimorphic
    families of morphisms are precisely the jointly surjective
    families. Finally, monomorphisms in $\Set^T$ are precisely the
    injective $T$-algebra morphisms since the canonical forgetful
    functor $\Set^T \to \Set$ creates all limits (and pullbacks in
    particular). 
  \item\label{rem:proj2} Every free $T$-algebra $TX$ is \emph{(regular)
      projective}, i.e.~given any surjective $T$-algebra morphism $q:
    A \epito B$ then for every $T$-algebra morphism $h: TX \to B$ there
    exists a $T$-algebra morphism $g: TX \to A$ such that $q \cdot g =
    h$\iffull\/:
    \[
      \xymatrix{
        & A \ar@{->>}[d]^q \\
        TX \ar@{-->}[ru]^-g \ar[r]_h & B
        }
    \]\else\/.\fi
  \item\label{rem:proj3} Furthermore, note that every finitely
    presentable $T$-algebra $A$ is a regular ($=$ strong) quotient of
    a free $T$-algebra $TX$ with a finite set $X$ of
    generators. Indeed, $A$ is presented by finitely many generators
    and relations. So by taking $X$ as a finite set of generators of
    $A$, the unique extension of the embedding $X \subto A$ yields a
    surjective $T$-algebra morphism $TX \epito A$.
  \end{enumerate}
\end{rem}

\subsection{The Rational Fixed Point}

\takeout{% Stoffsammlung
\begin{itemize}
\item finitely presentable and finitely generated objects
\item lfp categories; examples
\item finitary functors; examples
\end{itemize}}% end takeout

As we mentioned in the introduction the canonical domain of behaviour
of `finite' coalgebras is the rational fixed point of an
endofunctor. Its theory can be developed for every finitary
endofunctor on a locally finitely presentable category. We will now
recall the necessary background material.

A \emph{filtered colimit} is the colimit of a diagram $\D \to \C$
where $\D$ is a filtered category (i.e.~every finite subcategory
$\D_0 \subto \D$ has a cocone in $\D$), and a \emph{directed colimit}
is a colimit whose diagram scheme $\D$ is a directed poset. A functor
is called \emph{finitary} if it preserves filtered (equivalently
directed) colimits. An object $C$ is called \emph{finitely
  presentable} (fp) if the hom-functor $\C(C, -)$ preserves filtered
(equivalently directed) colimits, and \emph{finitely generated} (fg)
if $\C(C, -)$ preserves directed colimits of monos (i.e.~colimits of
directed diagrams $D: \D \to \C$ where all connecting morphisms $Df$
are monic in $\C$). Clearly, every fp object is fg, but the converse
fails in general. In addition, fg objects are closed under strong epis
(quotients), which fails for fp objects in general.

A cocomplete category $\C$ is called \emph{locally finitely presentable}
(lfp) if there is a set of finitely presentable objects in $\C$ such that every object of $\C$ is a filtered colimit of objects from that set. We
refer to~\cite{ar} for further details.

\begin{exa}
  Examples of lfp categories are the categories of sets,
  posets and graphs, with finitely presentable objects precisely the
  finite sets, posets, and graphs, respectively.  The category of
  vector spaces over the field $k$ is lfp with finite-dimensional
  spaces being the fp-objects. Every algebraic category is lfp. The
  finitely generated objects are precisely the finitely generated
  algebras (in the sense of general algebra), and finitely presentable
  objects are precisely those algebras specified by finitely many
  generators and finitely many relations.
\end{exa}

\begin{exa}
  \label{ex:finfunc}
  Finitary functors abound. We just mention a few
  examples of finitary functors on $\Set$.

  Constant functors and the identity functor are, of course,
  finitary. For every finitary signature $\Sigma$,
  i.e.~$\Sigma = (\Sigma_n)_{n < \omega}$ is a sequence of sets with
  $\Sigma_n$ containing operation symbols of the finite arity $n$, the
  associated polynomial functor given by
  \[
    F_\Sigma X = \coprod_{n< \omega} \Sigma_n \times X^n
  \]
  is finitary. The finite power-set functor given by
  $\powf X = \{ Y \mid Y \subseteq X, \text{Y finite}\}$ is finitary
  (while the full power-set functor is not) and so is the bag functor
  mapping a set $X$ to the set of finite multisets on $X$. The class
  of finitary functors enjoys good closure properties: it is closed
  under composition, finite products, arbitrary coproducts, and, in
  fact, arbitrary colimits. As we have mentioned already, the finitary
  monads (i.e. whose functor part is finitary) on $\Set$ are precisely
  those monads whose Eilenberg-Moore category $\Set^T$ is (isomorphic
  to) a finitary variety of algebras.
\end{exa}

\begin{asms}
  For the rest of this section we assume that $F$ denotes a finitary
  endofunctor on the lfp category $\A$.
\end{asms}
\begin{rem}\label{rem:lfpprop}
  Recall that an lfp category, besides being cocomplete, is complete
  and has (strong epi, mono)-factorizations of morphisms~\cite{ar},
  i.e.~every morphism $f: X \to Y$ can be decomposed as
  $f = m \cdot e$ where $e: X \epito I$ is a strong epi and
  $m: I \monoto Y$ a mono. One should think of $I$ as the image of $X$
  in $Y$ under $f$.
\end{rem}

The rational fixed point is a fully abstract model of behaviour for
all $F$-coalgebras whose carrier is an fp-object. We now recall
its construction~\cite{amv_atwork}. 
\begin{nota}\label{nota:rho}
  Denote by $\coaf F$ the full subcategory of all $F$-coalgebras on fp
  carriers, and let $(\rho F, r)$ be the colimit of the inclusion
  functor of $\coaf F$ into $\coa F$:
  \iffull\[\else\/$\fi (\rho F, r) = \colim(\coaf F \subto \coa F)
    \iffull\]\else\/$ \fi with the colimit injections
  $\ha a: A \to \rho F$ for every coalgebra $a: A \to FA$ in
  $\coaf F$.

  We call $(\rho F, r)$ the \emph{rational fixed point} of $F$;
  indeed, it is a fixed point:
\end{nota}
%
%That this name is justified follows from the following
%
\begin{prop}[$\!\!$\cite{amv_atwork}]
  The coalgebra structure $r: \rho F \to F(\rho F)$ is an isomorphism.
\end{prop}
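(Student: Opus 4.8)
The plan is to show that $r\colon \rho F \to F(\rho F)$ is an isomorphism by exhibiting an inverse, following the standard Lambek-style argument adapted to the filtered-colimit setting. The key observation is that $\coaf F$ is a filtered category: since $F$ is finitary and $\A$ is lfp, the fp objects are closed under the relevant finite colimits, and given finitely many coalgebras on fp carriers one can form their (coalgebra) coproduct, whose carrier is again fp, and similarly coequalize; hence every finite subdiagram of $\coaf F$ has a cocone. Consequently $\rho F = \colim(\coaf F \subto \coa F)$ is a \emph{filtered} colimit in $\A$, and since $F$ is finitary, $F(\rho F) = \colim_{(A,a)\in\coaf F} FA$ with colimit injections $F\ha a$.

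First I would use that $\coa F$ creates colimits from $\A$ (recalled in the excerpt) to note that $(\rho F, r)$ really is an $F$-coalgebra and that the $\ha a$ are coalgebra morphisms; in particular $r \cdot \ha a = F\ha a \cdot a$ for every $(A,a)$ in $\coaf F$. Now consider the family of morphisms $a\colon A \to FA \xrightarrow{F\ha a} F(\rho F)$; I claim this is a cocone over the diagram $\coaf F \subto \coa F \xrightarrow{U} \A$ (where $U$ is the forgetful functor). Indeed, for a coalgebra morphism $h\colon (A,a)\to(B,b)$ we have $Fh\cdot a = b\cdot h$, so $F\ha b \cdot b \cdot h = F\ha b\cdot Fh\cdot a = F(\ha b\cdot h)\cdot a = F\ha a\cdot a$, using $\ha b\cdot h = \ha a$. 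Since $\rho F$ is the colimit of exactly this diagram, there is a unique $s\colon \rho F \to F(\rho F)$ with $s\cdot\ha a = F\ha a\cdot a$ for all $(A,a)$; but $r$ already satisfies this, so $s = r$. This reinterpretation is what makes the inverse visible: because $F$ is finitary, $F(\rho F)$ is the colimit of the $FA$ with injections $F\ha a$, and the family $Fa\colon FA \to F(FA) = F(UF A)$—wait, more precisely the family $a$ viewed into $F(\rho F)$—lets us build a map $r^{-1}\colon F(\rho F)\to \rho F$.

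Concretely, for each $(A,a)$ in $\coaf F$ the object $FA$ carries the coalgebra structure $Fa\colon FA\to F(FA)$, and $a\colon A\to FA$ is a coalgebra morphism $(A,a)\to(FA,Fa)$; moreover, since $F$ is finitary and $A$ is fp, $FA$ is again fp (this is the one place where finitariness of $F$ is essential), so $(FA,Fa)\in\coaf F$. Thus the colimit injection $\widehat{Fa}\colon FA\to\rho F$ exists, and the family $\bigl(\widehat{Fa}\colon FA\to\rho F\bigr)_{(A,a)}$ forms a cocone over the diagram $(A,a)\mapsto FA$ whose colimit is $F(\rho F)$; this yields a unique $i\colon F(\rho F)\to\rho F$ with $i\cdot F\ha a = \widehat{Fa}$ for all $(A,a)$. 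One then checks $i\cdot r = \mathrm{id}$ and $r\cdot i = \mathrm{id}$ by precomposing with the jointly epimorphic families $\ha a$ and $F\ha a$ respectively (Lemma~\ref{lem:strepi}), using that $a\colon(A,a)\to(FA,Fa)$ is a coalgebra morphism so $\ha a = \widehat{Fa}\cdot a$, together with the defining equations $r\cdot\ha a = F\ha a\cdot a$ and $i\cdot F\ha a = \widehat{Fa}$.

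The main obstacle—and the only genuinely non-formal point—is verifying that $\coaf F$ is filtered, or at least that $\rho F$ can be computed as a filtered colimit, so that $F$ being finitary actually applies; equivalently, one must know that fp objects are closed under the finite colimits used to build cocones of finite subdiagrams and that $FA$ is fp whenever $A$ is. Both are standard facts about lfp categories and finitary functors (see~\cite{ar}, \cite{amv_atwork}), and once they are in hand the rest is the routine diagram-chase with jointly epic families sketched above.
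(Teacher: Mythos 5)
The construction of your inverse $i\colon F(\rho F)\to\rho F$ hinges on the claim that $FA$ is finitely presentable whenever $A$ is, ``since $F$ is finitary'' --- and this claim is false; it is exactly where the proposal breaks down. A finitary functor preserves filtered colimits, but it does not preserve fp objects: already on $\Set$ the finitary functor $FX=\Nat\times X$ (the same shape as in Examples~\ref{ex:nonex} and~\ref{ex:rat}) sends the finite, hence fp, set $1$ to the infinite set $\Nat\times 1$, which is not finitely presentable. Consequently $(FA,Fa)$ need not be an object of $\coaf F$, the colimit injection $\widehat{Fa}\colon FA\to\rho F$ you invoke need not exist, and with it the whole cocone defining $i$ disappears. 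The rest of your argument is sound: $\coaf F$ is closed under finite colimits of coalgebras (these are computed on the carriers, and fp objects are closed under finite colimits), hence filtered, so $\rho F$ is a filtered colimit in $\A$ and $F(\rho F)=\colim FA$ with injections $F\ha a$; the identification of $r$ with the induced comparison morphism and your verification of the two inverse laws by precomposing with the jointly epic families would also go through \emph{if} the cocone $FA\to\rho F$ existed. But that single missing cocone is the heart of the matter, so as written the proof has a genuine gap; note also that your closing remark declares ``$FA$ is fp whenever $A$ is'' to be a standard fact about finitary functors, which it is not.

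For comparison, the paper itself gives no proof of this proposition but quotes it from \cite{amv_atwork}; the argument there does not proceed by placing $(FA,Fa)$ inside $\coaf F$, but obtains the fixed point property by other means, via the theory of iterative algebras and the universal properties of the colimit, precisely because a finitary $F$ need not map fp carriers to fp carriers. To salvage a direct Lambek-style argument along your lines you would have to produce, for each $(A,a)$ in $\coaf F$, a canonical coalgebra-compatible morphism $FA\to\rho F$ whose existence does not presuppose that $FA$ is fp --- and that is where the real work of the proof lies.
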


The rational fixed point can be characterized by a universal property
both as a coalgebra and as an algebra for $F$: as a coalgebra
$\rho F$ is the \emph{final locally finitely presentable
coalgebra}~\cite{m_linexp}, and as an algebra it is the \emph{initial
iterative algebra}~\cite{amv_atwork}. We will not recall the latter
notion as it is not needed for the technical development in this
paper.
Locally finitely presentable (locally fp, for short) coalgebras for $F$ can be
characterized as precisely those $F$-coalgebra obtained as a filtered
colimit of a diagram of coalgebras from $\coaf F$:
\begin{prop}[$\!\!$\cite{m_linexp}, Corollary~III.13]
  An $F$-coalgebra is locally fp if and only if it is a colimit of some
  filtered diagram $\D \to \coaf F \subto \coa F$. 
\end{prop}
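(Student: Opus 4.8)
The plan is to reduce the statement to a cofinality argument about the \emph{canonical diagram} of a coalgebra. Recall that $(C,c)$ is, by definition, locally fp when it is the colimit of its canonical diagram: the comma category $\coaf F\downarrow(C,c)$ of all coalgebra morphisms $f\colon(P,p)\to(C,c)$ with $(P,p)$ in $\coaf F$, together with the forgetful projection $J$ to $\coa F$ and the obvious canonical cocone. Since colimits in $\coa F$ are formed in $\A$ and finitely presentable objects of $\A$ are closed under finite colimits, $\coaf F$ is closed under finite colimits in $\coa F$; the usual argument (binary coproducts provide common targets, coequalizers merge parallel pairs) then shows the canonical diagram is filtered. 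Hence the forward implication is immediate: a locally fp coalgebra is, via its canonical diagram, a colimit of a filtered diagram landing in $\coaf F$.

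For the converse, suppose $(C,c)=\colim_{d\in\D}(D_d,d_d)$ for a filtered diagram $D\colon\D\to\coaf F\subto\coa F$ with colimit injections $\inj_d\colon(D_d,d_d)\to(C,c)$. Each pair $((D_d,d_d),\inj_d)$ is an object of the canonical diagram, so $d\mapsto((D_d,d_d),\inj_d)$ defines a functor $H\colon\D\to\coaf F\downarrow(C,c)$. I claim $H$ is final; then a cocone over $J$ is colimiting iff its restriction along $H$ is, and since the canonical cocone of $(C,c)$ restricts along $H$ precisely to the given colimiting cocone $(\inj_d)_{d\in\D}$, it follows that $(C,c)$ is the colimit of $J$, i.e.\ $(C,c)$ is locally fp. Finality of $H$ amounts to: for every object $((P,p),f)$ of the canonical diagram, the category of pairs $(d,\alpha)$ with $\alpha\colon(P,p)\to(D_d,d_d)$ a coalgebra morphism satisfying $\inj_d\cdot\alpha=f$ is nonempty and connected.

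Nonemptiness is the substantive point. As $P$ is fp and $C=\colim_d D_d$ is a filtered colimit in $\A$, the map $f\colon P\to C$ factors as $f=\inj_d\cdot g_0$ for some $d$ and some $g_0\colon P\to D_d$ in $\A$ — which need not be a coalgebra morphism. But because $F$ is finitary, $FC=\colim_d FD_d$ with injections $F\inj_d$; combining $Ff\cdot p=c\cdot f$ with $F\inj_d\cdot d_d=c\cdot\inj_d$ gives $F\inj_d\cdot(Fg_0\cdot p)=F\inj_d\cdot(d_d\cdot g_0)$ as morphisms $P\to FC$. Since $P$ is fp, these two morphisms into $FD_d$ already agree after some connecting morphism $D\delta\colon D_d\to D_{d'}$; putting $g_1=D\delta\cdot g_0$ and using that $D\delta$ is a coalgebra morphism, one checks $Fg_1\cdot p=d_{d'}\cdot g_1$ and $\inj_{d'}\cdot g_1=f$, so $(d',g_1)$ is the desired object. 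Connectedness is analogous but easier: given $(d_1,\alpha_1)$ and $(d_2,\alpha_2)$, choose $d_3$ above both in $\D$; the two resulting coalgebra morphisms $P\to D_{d_3}$ become equal after a further connecting morphism (again by fp-ness of $P$ applied to the filtered colimit $C=\colim_d D_d$), yielding a common target.

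I expect the one delicate step to be exactly the passage from the mere $\A$-morphism $g_0$ to the coalgebra morphism $g_1$ at a later stage of $\D$: this is where finitariness of $F$ (so that $F$ preserves the defining colimit, giving $FC=\colim_d FD_d$) and finite presentability of $P$ (so that the equation holding in $FC$ is reflected to a finite stage) are used in tandem. Everything else is routine bookkeeping with filtered colimits and comma categories.
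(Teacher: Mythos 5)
Your argument is correct. Note that the paper itself gives no proof of this proposition \textemdash{} it is recalled from \cite{m_linexp} \textemdash{} so there is no in-paper proof to compare against; your cofinality argument (closing $\coaf F$ under finite colimits, computed in $\A$, to see the canonical diagram is filtered, and then proving the given filtered diagram is final in $\coaf F\downarrow(C,c)$ via the two standard consequences of finite presentability of $P$: factorization of $f$ through some stage, and essential uniqueness at a later stage, the latter applied to the filtered colimit $FC=\colim_d FD_d$ which exists because the finitary $F$ preserves $C=\colim_d D_d$) is essentially the standard proof of that corollary, and all the individual verifications you indicate (that $g_1=D\delta\cdot g_0$ is a coalgebra morphism compatible with $f$, and the cospan construction for connectedness) do go through. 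The one caveat is that you had to supply the definition of ``locally fp'', which this paper never states; the one you adopt (colimit of the canonical diagram over $\coaf F\downarrow(C,c)$) is the one used in the cited source, so the forward direction is as immediate as you claim, but with a different choice of definition that half of the equivalence would need revisiting.
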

For $\A = \Set$ an $F$-coalgebra $(X,c)$ is locally fp iff it is
\emph{locally finite}, i.e.~every element of $X$ is contained in a
finite subcoalgebra. Analogously, for $\A$ the category of vector
spaces over the field $k$ an $F$-coalgebra $(X,c)$ is locally fp iff
it is \emph{locally finite dimensional}, i.e.~every element of $X$ is
contained in a finite dimensional subcoalgebra.

Of course, there is a unique coalgebra morphism $\rho F \to \nu
F$. Moreover, in many cases $\rho F$ is \emph{fully abstract} for
locally fp
coalgebras, i.e.~besides being the final locally fp coalgebra the above
coalgebra morphism is monic; more precisely, if the classes of fp- and
fg-objects coincide and $F$ preserves non-empty monos, then $\rho F$
is fully abstract (cf.~Theorem~\ref{thm:fp=fg} below). The assumption
that the two object classes coincide is often true:

\begin{exa} \label{ex:fp=fg}
  \begin{enumerate}
  \item In the category of sets, posets, and graphs, fg-objects are fp
    and those are precisely the finite sets, posets, and graphs,
    respectively. 
  \item A \emph{locally finite variety} is a variety of algebras,
    where every free algebra on a finite set of generators is
    finite. It follows that fp- and fg-objects coincide and are
    precisely the finite algebras. Concrete examples are the
    categories of Boolean algebras, distributive lattices and
    join-semilattices.
  \item In the category of $\S$-semimodules for a semiring $\S$ the fp-
    and fg-objects need not coincide in general. However, if the
    semiring $\S$ is \emph{Noetherian} in the sense of \'Esik and
    Maletti~\cite{em_2011}, i.e.~every subsemimodule of a finitely
    generated $\S$-semimodule is itself finitely generated, then fg-
    and fp-semimodules coincide. Examples of Noetherian semirings are:
    every finite semiring, every field, every principal ideal domain
    such as the ring of integers and therefore every finitely
    generated commutative ring by Hilbert's Basis Theorem. The tropical
    semiring $(\Nat \cup \{\infty\}, \min, +, \infty, 0)$ is not
    Noetherian~\cite{em_2010}. The usual semiring of natural numbers
    is also not Noetherian: the $\Nat$-semimodule $\Nat \times \Nat$
    is finitely generated but its subsemimodule generated by the
    infinite set $\{(n,n+1) \mid n \geq 1\}$ is not. However,
    $\Nat$-semimodules are precisely the commutative monoids, and for
    them fg- and fp-objects coincide (this is known as Redei's
    theorem~\cite{redei}; see Freyd~\cite{freyd68} for a very short proof).
  \item\label{ex:fp=fg4} The category $\PCA$ of \emph{positively convex algebras}~\cite{Doberkat06,Doberkat08} is the
    Eilenberg-Moore category for the monad $\mathcal D$ of finitely supported
    subprobability distributions on sets. This monad maps a set $X$ to 
    \[
      \mathcal D X = \{d: X \to [0,1] \mid \text{$\supp d$ is finite and
        $\sum\limits_{x \in X} d(x) \leq 1$} \},
    \]
    where $\supp d = \{x \in X \mid d(x) \neq 0\}$, and a function
    $f: X \to Y$ to $\mathcal D f: \mathcal D X \to \mathcal D Y$ with
    \[
      \mathcal D f(d) = \lambda y. \sum\limits_{f x = y} d(x).
    \]
    More concretely, a positively convex algebra is a set $X$
    equipped with finite convex sum operations: for every $n$ and $p_1,
    \ldots, p_n \in [0,1]$ with $\sum\limits_{i = 1}^n p_i \leq 1$ we
    have an $n$-ary operation assigning to $x_1, \ldots, x_n \in X$
    an element $\bigboxplus\limits_{i=1}^n p_i x_i$ subject to the following
    axioms:
    \begin{enumerate}
    \item $\bigboxplus\limits_{i=1}^n p_i^kx_i = x_k$ whenever $p_k^k = 1$ and $p_i^k = 0$ for $i \neq k$, and
    \item $\bigboxplus\limits_{i=1}^n p_i \left(\bigboxplus\limits_{j=1}^k q_{i,j} x_j\right) = \bigboxplus\limits_{j=1}^k \left(\sum\limits_{i = 1}^n p_iq_{i,j}\right) x_j$.
    \end{enumerate}
    For $n = 1$ we write the convex sum operation for $p \in [0,1]$
    simply as $px$. The morphisms of $\PCA$ are maps preserving finite convex sums in the obvious sense. 

    The point of mentioning this example at length is that $\PCA$ is
    used for the coalgebraic modelling of the trace semantics of
    probabilistic systems (see e.g.~\cite{SilvaS11}), and recently, it
    was established by Sokolova and Woracek~\cite{SokolovaW15} that in
    $\PCA$, the classes of fp- and fg-objects coincide. We shall come
    back to this example in Section~\ref{sec:prop} when we introduce
    and discuss proper functors.
  \end{enumerate} 
\end{exa}

\begin{exa}\label{ex:rat}
  We list a number of examples of rational fixed points for cases
  where they do form subcoalgebras of the final coalgebra.
  \begin{enumerate}
  \item For the functor $FX = \{0,1\} \times X^A$ on $\Set$ the
    finite coalgebras are deterministic automata, and the rational fixed
    point is carried by the set of regular languages on the alphabet
    $A$.
  \item For every finitary signature $\Sigma$, the final coalgebra for
    the associated polynomial functor $F_\Sigma$ (see Example~\ref{ex:finfunc})
    is carried by the set of all (finite and infinite) $\Sigma$-trees,
    i.e.~rooted and ordered trees where each node with $n$-children is
    labelled by an $n$-ary operation symbol. The rational fixed point
    is the subcoalgebra given by \emph{rational} (or
    \emph{regular}~\cite{courcelle}) $\Sigma$-trees, i.e.~those
    $\Sigma$-trees that have only finitely many different subtrees (up
    to isomorphism) -- this characterization is due to
    Ginali~\cite{ginali}. For example, for the signature $\Sigma$ with
    a binary operation symbol $*$ and a constant $c$ the following
    infinite $\Sigma$-tree (here written as an infinite term) is
    rational:
    \[
      c * (c * (c* \cdots )));
    \]
    in fact, its only subtrees are the whole tree and the single node
    tree labelled by $c$.
  \item For the functor $FX = \Real \times X$ on $\Set$ the final
    coalgebra is carried by the set $\Real^\omega$ of real streams,
    and the rational fixed point is carried by its subset of
    eventually periodic streams (or lassos). Considered as a functor
    on the category of vector spaces over $\Real$, the final
    coalgebra $\nu F$ remains the same, but the rational fixed point
    $\rho F$ consists of all rational streams~\cite{rutten_rat}.
  \item For the functor $FX = \S \times X^A$ on the category $\smod$
    of $\S$-semimodules for the semiring $\S$ we already mentioned
    that $\nu F = \S^{A^*}$ consists of all formal
    power-series. Whenever the classes of fg- and fp-semimodules
    coincide, e.g.~for every Noetherian semiring $\S$ or the semiring
    of natural numbers, then $\rho F$ is formed by the \emph{recognizable}
    formal power-series; from the Kleene-Schützenberger
    theorem~\cite{schuetzenberger} (see also~\cite{br88}) it follows
    that these are, equivalently, the \emph{rational} formal
    power-series.

  \item On the category of presheaves $\Set^{\mathcal F}$, where
    $\mathcal F$ is the category of all finite sets and maps between
    them, consider the functor $FX = V + X \times X + \delta(X)$, where
    $V:  \mathcal F \subto \Set$ is the embedding and
    $\delta (X)(n) =X(n+1)$. This is a paradigmatic example of a
    functor arising from a \emph{binding signature} for which initial
    semantics was studied by Fiore et al.~\cite{fpt}.

    The final coalgebra $\nu F$ is carried by the presheaf of all
    $\lambda$-trees modulo $\alpha$-equivalence: $\nu F(n)$ is the set
    of (finite and infinite) $\lambda$-trees in $n$ free variables
    (note that such a tree may have infinitely many bound
    variables). And $\rho F$ is carried by the rational
    $\lambda$-trees, where an $\alpha$-equivalence class is called
    \emph{rational} if it contains at least one $\lambda$-tree which
    has (up to isomorphism) only finitely many different subtrees
    (see~\cite{amv_horps_full} for details). Rational $\lambda$-trees
    also appear as the rational fixed point of a very similar functor
    on the category of nominal sets~\cite{mw15}. An analogous
    characterization can be given for every functor on nominal sets
    arising from a binding signature~\cite{msw16}.
  \end{enumerate}
\end{exa}

As we mentioned previously, whether fg- and fp-objects coincide is
currently unknown in some base categories used in the coalgebraic
modelling of systems, for example, in idempotent semirings (used in
the treatment of context-free grammars \cite{wbr13})\smnote{Do we have
  a counterexample or not? Check with Joost Winter.}, in
algebras for the stack monad (used for modelling configurations of
stack machines~\cite{gms14}); or it even fails, for example in
the category of finitary monads on sets (used in the categorical study
of algebraic trees \cite{amv_secalg}) or in Eilenberg-Moore
categories for a monad in general (the target categories of generalized
determinization \cite{sbbr13}).

As a remedy, in recent joint work with Pattinson and
Wi\ss\/mann~\cite{mpw17}, we have introduced the \emph{locally finite
  fixed point} which provides a fully abstract model of finitely
generated behaviour. Its construction is very similar to that of the
rational fixed point but based on fg- in lieu of fp-objects. In more
detail, one considers the full subcategory $\coafg F$ of all $F$-coalgebras
carried by an fg-object and takes the colimit of its inclusion
functor:
\[
  (\theta F, \ell) = \colim(\coafg F \subto \coa F).
\]
\begin{thm}[\!\!\cite{mpw17}, Theorems~3.10 and 3.12]\label{thm:lffsub}
  Suppose that the finitary functor $F: \A \to\A$ preserves
  non-empty monos. Then $(\theta F, \ell)$ is a fixed point for $F$, and it is a
  subcoalgebra of $\nu F$. 
\end{thm}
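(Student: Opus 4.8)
The plan is to prove the two assertions --- that $\ell$ is an isomorphism, and that the unique coalgebra morphism $\fin{\ell}\colon\theta F\to\nu F$ is monic --- by reducing everything to the single structural fact that $\coafg F$ is \emph{filtered} and, up to cofinality, is exhausted by the finitely generated subcoalgebras of $\nu F$. To begin with, $\coafg F$ is essentially small and has finite colimits, all computed at the level of $\A$ (where $\coa F$ creates colimits): the coalgebra on the initial object is initial in $\coafg F$; a finite coproduct of coalgebras with finitely generated carriers again has a finitely generated carrier, since finitely generated objects are closed under finite coproducts in an lfp category; and a coequalizer of coalgebra morphisms between such coalgebras has carrier a strong quotient of a finitely generated object, hence finitely generated, since finitely generated objects are closed under strong quotients. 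Thus $\coafg F$ is filtered, so $(\theta F,\ell)=\colim(\coafg F\subto\coa F)$ is a filtered colimit, realized in $\A$ as $\theta F=\colim_{(C,c)}C$ with colimit injections $\ha c\colon C\to\theta F$; since $F$ is finitary it preserves this colimit, so $F\theta F=\colim_{(C,c)}FC$ with injections $F\ha c$, and $\ell$ is the unique morphism with $\ell\cdot\ha c=F\ha c\cdot c$ for all $(C,c)\in\coafg F$.

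For the fixed-point property I would construct an inverse of $\ell$, following the blueprint used for the rational fixed point (cf.~\cite{amv_atwork}). The assignment $(C,c)\mapsto(FC,Fc)$, $h\mapsto Fh$, extends $F$ to an endofunctor of $\coa F$, and $c$ is a coalgebra morphism $(C,c)\to(FC,Fc)$. Using finitarity of $F$ and preservation of non-empty monomorphisms one checks that each $(FC,Fc)$ is again \emph{locally finitely generated} (a filtered colimit of coalgebras from $\coafg F$), and therefore carries a canonical coalgebra morphism $q_c\colon(FC,Fc)\to(\theta F,\ell)$ into the colimit; by naturality of the colimit injections these assemble into a cocone $(q_c\colon FC\to\theta F)_{(C,c)}$ over the diagram $(C,c)\mapsto FC$, inducing $j\colon F\theta F\to\theta F$ with $j\cdot F\ha c=q_c$. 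One then verifies $j\cdot\ell=\mathrm{id}$ and $\ell\cdot j=\mathrm{id}$ by pre- resp.\ post-composing with the jointly strongly epimorphic families $(\ha c)$ and $(F\ha c)$ (Lemma~\ref{lem:strepi}), using the identity $q_c\cdot c=\ha c$; this is a routine diagram chase.

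For the subcoalgebra property I would argue through image factorizations in $\A$. For each $(C,c)\in\coafg F$ factor the unique coalgebra morphism $\fin{c}\colon C\to\nu F$ as $C\epito I_c\monoto\nu F$; then $I_c$ is finitely generated, being a strong quotient of $C$, and --- since the final structure $t\colon\nu F\to F\nu F$ is an isomorphism and $F$ preserves non-empty monomorphisms, so that applying $F$ to the monomorphism $I_c\monoto\nu F$ again yields a monomorphism --- the unique-diagonal property equips $I_c$ with an $F$-coalgebra structure making $I_c\monoto\nu F$ a monic coalgebra morphism and $C\epito I_c$ a coalgebra morphism, the case of a trivial image being immediate. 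Thus $I_c$ is a finitely generated subcoalgebra of $\nu F$. Let $\mathcal S$ be the poset of finitely generated subcoalgebras of $\nu F$: it is directed, since the join of two of them is a strong quotient of their coproduct and hence again a finitely generated subcoalgebra. The factorizations above exhibit the full inclusion $\mathcal S\subto\coafg F$ as cofinal, so $\theta F\cong\colim_{\mathcal S}S$; this last colimit is a directed union of subobjects of the fixed object $\nu F$, so the induced map into $\nu F$ --- which one identifies with $\fin{\ell}$ --- is monic, using that directed colimits of monomorphisms are monomorphisms in an lfp category. Hence $(\theta F,\ell)$ is a subcoalgebra of $\nu F$.

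The bulk of the actual work lies behind the two uses of the hypothesis that $F$ preserves non-empty monomorphisms; concretely, the two points that require genuine argument are: (1) that applying the finitary functor $F$ to an $F$-coalgebra with finitely generated carrier again yields a \emph{locally} finitely generated coalgebra --- so that the colimit $(\theta F,\ell)$ receives a canonical coalgebra morphism from it --- and (2) that images of coalgebra morphisms into $\nu F$ inherit subcoalgebra structures and that finitely generated subcoalgebras of $\nu F$ are closed under finite joins. Everything else --- filteredness, preservation of filtered colimits by $F$, cofinality, directed unions of subobjects, and the diagram chases --- is bookkeeping. I expect (1) to be the main obstacle: it amounts to showing that the subcoalgebra-generation operation behaves finitarily on finitely generated inputs, which is exactly where the mild hypotheses on $F$ must be exploited.
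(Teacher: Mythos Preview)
The paper does not supply its own proof of this theorem: it is quoted verbatim from~\cite{mpw17} (Theorems~3.10 and~3.12), and no argument is given here. So there is nothing in the present paper to compare your proposal against.

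That said, your outline is broadly in the spirit of the cited source: the subcoalgebra part via image factorizations and a cofinal directed system of finitely generated subcoalgebras of $\nu F$ is essentially the argument used in~\cite{mpw17}, and your verification that $\coafg F$ has finite colimits (hence is filtered) is correct and standard. One small point worth making explicit in the cofinality step: for each $(C,c)\in\coafg F$ the comma category $(C,c)\!\downarrow\!\mathcal S$ is not just nonempty but has $C\epito I_c$ as an \emph{initial} object, since every coalgebra morphism from $(C,c)$ into a subcoalgebra $S\monoto\nu F$ must, by uniqueness of~$\fin c$, factor through the image $I_c\subseteq S$; this is what actually yields finality of the inclusion $\mathcal S\hookrightarrow\coafg F$.

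For the fixed-point part, however, your sketch rests on the claim that $(FC,Fc)$ is a locally finitely generated coalgebra whenever $C$ is finitely generated, which you yourself flag as the ``main obstacle'' and do not prove. This is indeed the substantive step, and it does not follow immediately from finitarity of $F$ and preservation of non-empty monos: $FC$ need not be finitely generated, so one has to exhibit $(FC,Fc)$ as a filtered colimit of fg-carried subcoalgebras, which requires the development of the theory of locally fg coalgebras (the analogue of~\cite[Section~III]{m_linexp}). In~\cite{mpw17} this is handled by first establishing that $\theta F$ is the final locally fg coalgebra and then deriving the fixed-point property from that finality; your proposal reverses the order and thereby leaves the hardest lemma unproved. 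As written, then, the fixed-point half of your argument is a plausible plan but not yet a proof.
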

\begin{rem}\label{rem:fs}
  \begin{enumerate}
  \item Note that for an arbitrary (not necessarily concrete) lfp category
    $\A$ the notion of a non-empty monomorphisms needs explanation: a
    monomorphism $m: X \monoto Y$ is said to be \emph{empty} if its
    domain $X$ is a strict initial object of $\A$, where recall that the
    initial object $0$ of $\A$ is \emph{strict} provided that every
    morphism $A \to 0$ is an isomorphism.  In particular, if the initial
    object of $\A$ is not strict, then all monomorphisms are
    non-empty.
  \item\label{i:fs2} For a functor $F: \A \to \A$ preserving non-empty monos the
    category $\coa F$ of all $F$-coalgebras inherits the (strong epi,
    mono)-factorization system from $\A$ (see Remark~\ref{rem:lfpprop})
    in the following sense: every coalgebra morphism
    $f: (X,c) \to (Y,d)$ can be factorized into coalgebra morphisms
    $e$ and $m$ carried by a strong epi and a mono in $\A$,
    respectively. In fact, one (strong epi, mono)-factorizes
    $f = m \cdot e$ in $\A$ and obtains a unique coalgebra
    structure on the `image' $I$ such that $e$ and $m$ are coalgebra
    morphisms:
    \[
      \xymatrix{
        X \ar[r]^-c \ar@{->>}[d]_e & FX \ar[d]^-{Fe} \\
        I \ar@{-->}[r] \ar@{ >->}[d]_m & FI \ar@{ >->}[d]^{Fm} \\
        Y \ar[r]_-d & FY
      }
    \]
    Indeed, if $m$ is a non-empty mono, we know that $Fm$ is monic by
    assumption and we use the unique diagonal property. Otherwise, $m$
    is an empty mono, which implies that $e: X \epito I$ is an
    isomorphism since $I$ is a strict initial object. Then $Fe \cdot c
    \cdot e^{-1}$ is the desired coalgebra structure on $I$.
  \end{enumerate}
\end{rem}
Furthermore, like its brother, the rational fixed point, $\theta F$
is characterized by a universal property both as a coalgebra and as
an algebra: it is the final locally finitely generated coalgebra and
the initial fg-iterative algebra~\cite[Theorem~3.8 and Corollary~4.7]{mpw17}. 

Under additional assumptions, which all hold in every algebraic
category, we have a close relation between $\rho F$ and $\theta F$; in
fact, the following is a consequence of~\cite[Theorem~5.4]{mpw17}:
\begin{thm}
  \label{thm:rhotheta}
  Suppose that $\A$ is an lfp category such that every fp-object is a
  strong quotient of a strong epi projective fp-object, and let $F: \A
  \to \A$ be finitary and preserving non-empty monos.  
  Then $\theta F$ is the image of $\rho F$ in the final coalgebra.
\end{thm}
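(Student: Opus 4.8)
The plan is to prove that $\theta F$ is, up to isomorphism, the image of the unique coalgebra morphism $\rho F \to \nu F$ in the (strong epi, mono)-factorization system which $\coa F$ inherits from $\A$ by Remark~\ref{rem:fs}(\ref{i:fs2}). Since every fp-object is fg we have $\coaf F \subseteq \coafg F$, so composing the inclusion $\coaf F \subto \coafg F$ with the colimit cocone of $\theta F = \colim(\coafg F \subto \coa F)$ gives a cocone under $\coaf F \subto \coa F$, and the universal property of $\rho F$ thus yields a unique coalgebra morphism $j\colon \rho F \to \theta F$ with $j \cdot \ha a = \inj_a$ for all $(A,a) \in \coaf F$, where $\inj_a\colon A \to \theta F$ denotes the colimit injection. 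By Theorem~\ref{thm:lffsub}, $\theta F$ is a subcoalgebra of $\nu F$, say via $m_\theta\colon \theta F \monoto \nu F$, and $m_\theta \cdot j$ is the unique coalgebra morphism $\rho F \to \nu F$ by finality of $\nu F$. Hence, if $j$ is a strong epimorphism, then $m_\theta \cdot j$ is \emph{the} (strong epi, mono)-factorization of $\rho F \to \nu F$ (with strong epi part $j$ and mono part $m_\theta$), and the claim follows. So the whole task reduces to showing that $j$ is a strong epimorphism.

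For this I would establish the following covering property: $(\star)$ every coalgebra in $\coafg F$ is a strong quotient, in $\coa F$, of a coalgebra in $\coaf F$. Granting $(\star)$, pick for each $(B,b) \in \coafg F$ a strong epimorphism of $F$-coalgebras $q_B\colon (A_B,a_B) \epito (B,b)$ with $(A_B,a_B) \in \coaf F$. By Lemma~\ref{lem:strepi} the colimit injections $\inj_b\colon B \to \theta F$ form a strongly epimorphic family, and a routine diagonal argument shows that precomposing each of them with the strong epimorphism $q_B$ again gives a strongly epimorphic family. Since $\inj_b \cdot q_B = \inj_{a_B} = j \cdot \ha{a_B}$, the family $\{\, j \cdot \ha{a_B} \mid (B,b) \in \coafg F \,\}$ is strongly epimorphic. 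Now take a (strong epi, mono)-factorization $j = m \cdot e$ in $\coa F$; every member of that family factors through the mono $m$, so the diagonal property of strongly epimorphic families produces a morphism splitting $m$, whence $m$ is an isomorphism and $j = m \cdot e$ is a strong epimorphism.

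It remains to prove $(\star)$, which is the crux. Given $(B,b)$ with $B$ an fg-object, I would first write $B$ as a strong quotient of an fp-object ($\A$ is lfp; see~\cite{ar}), and then, using the hypothesis on $\A$, write that fp-object as a strong quotient of a strong-epi-projective fp-object $P$; since strong epimorphisms compose, this yields a strong epimorphism $q\colon P \epito B$ with $P$ fp and strong-epi-projective. The remaining work is to lift the coalgebra structure along $q$, i.e.\ to produce $a_B\colon P \to FP$ with $Fq \cdot a_B = b \cdot q$, so that $q\colon (P,a_B) \to (B,b)$ becomes the required morphism in $\coaf F$. One argues that $b \cdot q$ factors through the image $J \monoto FB$ of $Fq$, and then lifts the resulting map $P \to J$ along the strong epimorphism $FP \epito J$ by projectivity of $P$; composing back with $FP \epito J \monoto FB$ gives $a_B$ with $Fq \cdot a_B = b \cdot q$.

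I expect this last lift to be the main obstacle. In the intended algebraic instances $\A = \Set^T$ the functor $F$ preserves surjective $T$-algebra morphisms (automatically so when $F$ is lifted from a set functor), so $Fq$ is itself a strong epimorphism and the lift is immediate from projectivity of $P$; but a general finitary endofunctor of an lfp category need not preserve strong epimorphisms, and obtaining the factorization of $b \cdot q$ through the image of $Fq$ in that generality requires the finer analysis behind \cite[Theorem~5.4]{mpw17} (exploiting, among other things, that $\coa F$ is itself locally finitely presentable and that coalgebras with fp-carrier are finitely presentable objects of $\coa F$). Everything else — the two reductions above and the bookkeeping that the maps involved are genuine coalgebra morphisms — is routine.
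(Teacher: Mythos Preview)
The paper does not actually prove this theorem; it merely records it as ``a consequence of~\cite[Theorem~5.4]{mpw17}''.  So there is no detailed argument in the paper to compare against.

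Your reduction is correct and is the natural one: writing the unique morphism $\rho F \to \nu F$ as $m_\theta \cdot j$ with $m_\theta\colon \theta F \monoto \nu F$ the subcoalgebra inclusion of Theorem~\ref{thm:lffsub}, it suffices to show that $j$ is a strong epimorphism, and this follows from the covering property~$(\star)$ by the strongly-epimorphic-family argument you sketch (that argument is fine).  In the algebraic setting used throughout the present paper, where $F$ preserves surjections (Assumptions~\ref{ass:ass}), $(\star)$ holds by exactly the projectivity lift you describe, and indeed this is the very argument the paper uses in the proof of Proposition~\ref{prop:hepi}.  So for all applications in this paper your proof is complete.

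The gap is at the stated generality of the theorem, where $F$ is only assumed to preserve non-empty monos.  Your proposed route to the lift---that $b \cdot q$ factors through the image of $Fq$---is unmotivated: there is no a~priori reason the coalgebra structure $b\colon B \to FB$ should take values in the image of $Fq$, and you give no argument beyond a pointer to~\cite{mpw17}.  I would not expect this particular factorization claim to be what~\cite{mpw17} proves, nor to hold without further hypotheses.  Thus in full generality you have reduced to the right statement~$(\star)$ but not established it---which is exactly the position the paper is in, since it too simply cites the reference.  In short: your outline is sound, your argument is complete in the algebraic case, and in the general lfp case both you and the paper defer to~\cite[Theorem~5.4]{mpw17}.
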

More precisely, taking the (strong epi, mono)-factorization of the
unique $F$-coalgebra morphism $\rho F \to \nu F$ yields $\theta F$,
i.e.~for $F$ preserving monos on an algebraic category we have the
following picture:
\[
  \rho F \epito \theta F \monoto \nu F.
\]
A sufficient condition under which $\rho F$ and $\theta F$ coincide
is the following (cf.~\cite[Section~5]{mpw17}):
\begin{thm}\label{thm:fp=fg}
  Suppose that in addition to the assumption in
  Theorem~\ref{thm:rhotheta} the classes of fg- and fp-objects
  coincide in $\A$. Then $\rho F \cong \theta F$, i.e.~the left-hand
  morphism above is an isomorphism.
\end{thm}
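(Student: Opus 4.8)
The plan is to show directly that $\rho F$ and $\theta F$ are colimits of \emph{the same} diagram, so that the canonical comparison morphism between them is an isomorphism. Recall that $\coaf F$ and $\coafg F$ are both \emph{full} subcategories of $\coa F$: the former is spanned by the $F$-coalgebras whose carrier is fp, the latter by those whose carrier is fg. By hypothesis the classes of fp- and fg-objects of $\A$ coincide, so these two full subcategories have exactly the same objects, and hence $\coaf F = \coafg F$ as subcategories of $\coa F$. Therefore the inclusion diagrams $\coaf F \subto \coa F$ and $\coafg F \subto \coa F$ are identical, and both $(\rho F, r)$, with colimit injections $\ha a\colon A \to \rho F$, and $(\theta F, \ell)$, with colimit injections $j_a\colon A \to \theta F$, are colimit cocones for it. By the uniqueness of colimits there is thus a unique $F$-coalgebra isomorphism $\iota\colon \rho F \to \theta F$ satisfying $\iota \cdot \ha a = j_a$ for every coalgebra $a\colon A \to FA$ in $\coaf F = \coafg F$.

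It then remains only to identify this $\iota$ with the morphism $e\colon \rho F \epito \theta F$ appearing in the factorization $\rho F \epito \theta F \monoto \nu F$ of Theorem~\ref{thm:rhotheta}, so that that morphism is an isomorphism as claimed. Here I would use the finality of $\nu F$ together with the fact (Remark~\ref{rem:fs}) that both $e$ and the mono $\theta F \monoto \nu F$ are coalgebra morphisms: composing $\theta F \monoto \nu F$ with $e \cdot \ha a$, respectively with $j_a = \iota \cdot \ha a$, yields in both cases a coalgebra morphism $A \to \nu F$, hence the unique such morphism $\fin a$. Since $\theta F \monoto \nu F$ is monic, this gives $e \cdot \ha a = j_a = \iota \cdot \ha a$ for every $a$ in $\coaf F$, and since the injections $\ha a$ are jointly epic (Lemma~\ref{lem:strepi}) we conclude $e = \iota$.

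I do not expect a serious obstacle: the substance is the purely bookkeeping observation $\coaf F = \coafg F$, which immediately collapses the two colimit constructions. The only step requiring a little care is the last one, namely checking that the factorization morphism $e$ really is the canonical colimit comparison $\iota$ and not merely \emph{some} coalgebra morphism $\rho F \to \theta F$; for this the finality of $\nu F$ (which pins down coalgebra morphisms into $\nu F$) together with the jointly epic colimit injections into $\rho F$ is exactly what is needed.
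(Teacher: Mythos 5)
Your proof is correct. Note that the paper itself does not spell out an argument for this theorem --- it simply refers to~\cite[Section~5]{mpw17} --- so your self-contained proof is welcome, and it is essentially the argument one expects behind that citation: once fp- and fg-objects coincide, the \emph{full} subcategories $\coaf F$ and $\coafg F$ of $\coa F$ have the same objects and hence are literally the same diagram, so $\rho F$ and $\theta F$ are two colimits of one diagram and are related by a unique isomorphism $\iota$ compatible with the injections. Your second step is also the right bookkeeping and is where the hypotheses of Theorem~\ref{thm:rhotheta} actually enter: preservation of non-empty monos ensures (Remark~\ref{rem:fs}) that both factors of $\rho F \epito \theta F \monoto \nu F$ are coalgebra morphisms, finality of $\nu F$ then forces both $n\cdot e\cdot \ha a$ and $n\cdot j_a$ to equal $\fin a$, monicity of $n$ gives $e\cdot\ha a = j_a = \iota\cdot\ha a$, and joint epicity of the injections (Lemma~\ref{lem:strepi}) yields $e=\iota$, so the left-hand morphism of the factorization is indeed an isomorphism and not merely isomorphic-up-to-relabelling. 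I see no gap.
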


In the introduction we briefly mentioned a number of interesting
instances of $\theta F$ that are not (known to be) instances of the
rational fixed point; see~\cite{mpw17} for details.  
A concrete example, where $\rho F$ is not a subcoalgebra of $\nu F$ (and
hence not isomorphic to $\theta F$) was given
in~\cite[Example~3.15]{bms13}. We present a new, simpler example based on
similar ideas:
\begin{exa}\label{ex:nonex}
  \begin{enumerate}
  \item Let $\A$ be the category of algebras for the signature
    $\Sigma$ with two unary operation symbols $u$ and $v$. The natural
    numbers $\Nat$ with the successor function as both operations
    $u^\Nat$ and $v^\Nat$ form an object of $\A$. We consider the
    functor $FX = \Nat \times X$ on $\A$. Coalgebras for $F$ are
    automata carried by an algebra $A$ in $\A$ equipped with two
    $\Sigma$-algebra morphisms: an output morphism $A \to \Nat$ and a
    next state morphism $A \to A$. The final coalgebra is carried by 
    the set $\Nat^\omega$ of streams of natural numbers with the coordinatewise
    algebra operations and with the coalgebra structure given by the usual
    head and tail functions.

    Note that the free $\Sigma$-algebra on a set $X$ of generators is
    $TX \cong \{u,v\}^* \times X$; we denote its elements by $w(x)$
    for $w \in \{u,v\}^*$ and $x \in X$.  The operations are given by prefixing
    words by the letters $u$ and $v$, respectively: $s^{TX}: w(x) \mapsto sw(x)$
    for $s = u$ or $v$. 

    Now one considers the $F$-coalgebra $a: A \to FA$, where
    $A = T\{x\}$ is free $\Sigma$-algebra on one generator $x$ and $a$
    is determined by $a(x) = (0, u(x))$. Recall our notation
    $\fin a: A \to \nu F$ for the unique coalgebra morphism. Clearly,
    $\fin a(x)$ is the stream $(0,1,2,3,\cdots)$ of all natural
    numbers, and since $\fin a$ is a $\Sigma$-algebra morphism we have
    \[
      \fin a (u(x)) = \fin a (v(x)) = (1,2,3,4,\cdots).
    \]
    Since $A$ is (free) finitely generated, it is of course, finitely
    presentable as well. Thus, $(A,a)$ is a coalgebra in $\coaf
    F$.
    \iffull
    However, we shall now prove that the (unique) $F$-coalgebra morphism
    $\ha a: A \to \rho F$ maps $u(x)$ and $v(x)$ to two distinct
    elements of $\rho F$. 

    We prove this by contradiction. So suppose that
    $\ha a(u(x)) = \ha a(v(x))$. By the construction of $\rho F$ as a
    filtered colimit (see Notation~\ref{nota:rho}) we know that there
    exists a coalgebra $b: B \to FB$ in $\coaf F$ and an $F$-coalgebra
    morphism $h: A \to B$ with
    \begin{equation}\label{eq:h}
      h(u(x)) = h(v(x)).
    \end{equation}
    Since $B$ is a finitely presented $\Sigma$-algebra it is the
    quotient in $\A$ of a free algebra $A'$ via some surjective
    $\Sigma$-algebra morphism $q: A' \epito B$, say. Next observe,
    that there is a coalgebra structure $a': A' \to FA'$ such that $q$
    is an $F$-coalgebra morphism from $(A', a')$ to $(B,b)$: for $Fq$
    is a surjective $\Sigma$-algebra morphism and so we obtain $q'$ by
    using projectivity of $A'$ w.r.t.~$b \cdot q: A' \to FB$
    (cf.~Remark~\ref{rem:proj}(\ref{rem:proj2})):
    \[
      \xymatrix{
        A' \ar@{-->}[r]^{a'} \ar@{->>}[d]_q & FA'\ar@{->>}[d]^{Fq} \\
        B \ar[r]_-b & FB
        }
    \]

    Now choose a term $t_x$ in $A'$ with $q(t_x) = h(x)$. Using that
    $q$ and $h$ are $\Sigma$-algebra morphisms we see that $q(u(t_x))
    = q(v(t_x))$ as follows:
    \begin{equation}\label{eq:q}
      q(u(t_x)) 
      =
      u^B(q(t_x))
      =
      u^B(h(x))
      =
      h(u(x))
      =
      v^B(h(x))
      =
      v^B(q(t_x))
      =
      q(v(t_x)).
    \end{equation}
    Since $h$ is an $F$-coalgebra morphism, we obtain
    from~\eqref{eq:h} that $h$ merges the right-hand components of
    $a(u(x))$ and $a(v(x))$, in symbols: $h(uu(x)) = h(vu(x))$. It
    follows that $q$ satisfies $q(uu(t_x)) = q(vu(t_x))$ using a
    similar argument as in~\eqref{eq:q} above.

    Continuing to use that $h$ and $q$ are $F$-coalgebra morphisms, we
    obtain the following infinite list of elements (terms) of $A'$
    that are merged by $q$ (we write these pairs as equations):
    \begin{equation}
      \label{eq:rels}
      q(u^{n+1}(t_x)) = q(vu^n(t_x)) \qquad \text{for $n \in \Nat$.}
    \end{equation}
    
    We need to prove that there exists no finite set of relations
    $E \subseteq A' \times A'$ generating the above congruence on $A'$
    given by $q: A' \epito B$. So suppose the contrary, and let $A_0'$
    be the $\Sigma$-subalgebra of $A'$ generated by $\{t_x\}$,
    i.e.~$A_0' \cong \{u,v\}^* \times \{t_x\}$. Since $q(t_x) = h(x)$
    and $q$ and $h$ are both coalgebra morphisms we know that
    $\fin a' = \fin b \cdot q$ and $\fin b \cdot h = \fin a$ and
    therefore
    \[
      \fin a' (t_x) = \fin b(q(t_x)) = \fin b (h(x)) = \fin a(x) =
      (0,1,2,3, \cdots).
    \]
    Since $\fin a'$ is a $\Sigma$-algebra morphism it follows that for
    a word $w \in \{u,v\}^*$ of length $n$ we have
    \begin{equation}\label{eq:cp}
      \fin a'(w(t_x)) = (n, n+1, n+2, n+3, \cdots).
    \end{equation}
    Thus, when $w, w' \in \{u,v\}^*$ are of different length, then the pair 
    $(w(t_x), w'(t_x))$ cannot be in the congruence generated by $E$;
    otherwise we would have $q(w(t_x)) = q(w'(t_x))$ which implies
    $\fin a'(w(t_x)) = \fin a'(w'(t_x))$ contradicting~\eqref{eq:q}.

    Now let $\ell$ be the maximum length of words from $\{u,v\}^*$
    occurring in any pair contained in the finite set $E$. Then the pair
    $(u^{\ell+2}(t_x), vu^{\ell+1}(t_x))$ obtained from the
    $\ell+1$-st equation in~\eqref{eq:rels} is not in the congruence
    generated by $E$; for if any pair of terms of height greater then
    $\ell$ are related by that congruence, these two
    terms must have the same head symbol. Thus we arrive at a
    contradiction as desired. 
    \else
    However, one can prove that the (unique) $F$-coalgebra morphism 
    $\ha a: A \to \rho F$ satisfies $\ha a(u(x)) \neq \ha a(v(x))$,
    see the full paper for details~\cite{milius17}.
    \fi
 
  \item In this example we also have that $\theta F$ and $\nu F$ do
    not coincide. To see this we use that $\theta F$ is the union of
    images of all $\fin c: TX \to \nu F$ where $(TX,c)$ ranges over
    those $F$-coalgebras whose carrier $TX$ is free finitely generated
    (i.e.~$TX \cong \{u,v\}^* \times X$ for some finite set $X$)
    \cite[Theorem~6.5]{mpw17}. Hence, each such algebra $TX$ is
    countable, and there exist only countably many of them, up to
    isomorphism. Furthermore, note that on every free finitely
    generated algebra $TX$ there exist only countably many coalgebra
    structures $c: TX \to FTX$, since $FTX = \Nat \times TX$ is
    countable and $c$, being a $\Sigma$-algebra morphism, is
    determined by its action on the finitely many generators.
    Thus, $\theta F$ is countable because it is the above union of
    countably many countable coalgebras. However, $\nu F$ being carried
    by the set $\Nat^\omega$ of all streams over $\Nat$ is uncountable. 

    \takeout{% simpler argument using cardinality above
  Note that being a $\Sigma$-algebra morphism any coalgebra structure
  $a: TX \to FTX$ is determined by its action on the generators. And
  from the form of any $TX$ we know that for any $x \in X$ there
  exist $k,n_i \in \Nat$, $w_i \in \{u,v\}^*$
  and $x_i \in X$, $i = 1, \ldots k$, such that $x = x_0$ and
  \begin{align*}
    a(x_i) & = (n_i, w_i(x_{i+1})) & \text{ for $i = 0, \ldots, k-1$
                                     and} \\
    a(x_k) & = (n_k, w_k(x_j)) &  \text{ for some $j \in \{0, \ldots, k\}$.}   
  \end{align*}
  Now let $m_i = |w_i|$, $i = 1, \ldots, k$, be the lengths of words. Then
  it follows that 
  \[
    \fin a (x_0) = (n_0, m_0 + n_1, m_0 + m_1 + n_2, 
    \cdots, m_0 + \cdots + m_{k-1} + n_k, m_0 + \cdots m_k +
    n_j,\cdots).
  \]
  Let $m$ be the maximum of all $n_i$ and $m_i$. Then it is clear that
  the $n$-th entry of $\fin a(x_0)$ can be at most $(n+1) \cdot m$. It
  follows that for any $w \in \{u,v\}^*$ the $n$-th entry of
  $\fin a(w(x))$ is bounded above by $(n+1)\cdot m + |w|$. Thus, 
  the entries of every stream in $\theta F$ grow at
  most linearly. However, there are streams in $\nu F$ for
  which this is not the case, e.g.~the stream $(1,2,4,8, \cdots)$ of
  powers of $2$. Hence $\theta F$ %and $\nu F$ are not isorphic. 
  does not coincide with $\nu F$.}% end takeout
\end{enumerate} 
\end{exa}

\section{A Fixed Point Based on Coalgebras Carried by Free Algebras}
\label{sec:phi}

In this section we study coalgebras for a functor $F$ on an
algebraic category $\Set^T$ whose carrier is a free finitely generated
algebra. These coalgebras are of interest because they are precisely
those coalgebras arising as the results of the generalized
determinization~\cite{sbbr13}.

We shall see that their colimit yields yet another fixed point of $F$
(besides the rational fixed point and the locally finite
one). Moreover, in the next section we show that this fixed point is
characterized by a universal property as an algebra.
 
\takeout{
The purpose of this section is to study the situation where the
rational fixed point for a functor $F$ on an algebraic category
$\Set^T$ coincides with the locally finite one, and moreover, both can
be constructed just from those coalgebras whose carrier is a free
finitely generated coalgebra. The latter coalgebras are precisely
those coalgebras arising as the results of the generalized
determinization~\cite{sbbr13}.
}% end takeout

\begin{asms}\label{ass:ass}
  Throughout the rest of the paper we assume that $\A$ is an
  \emph{algebraic category}, i.e.~$\A$ is (equivalent to) the
  Eilenberg-Moore category $\Set^T$ for a finitary monad $T$ on $\Set$.
  Furthermore, we assume that $F: \A \to \A$ is a finitary
  endofunctor preserving surjective $T$-algebra morphisms.
  \smnote[inline]{Do I need that $F$ preserves non-empty monos?}
\end{asms}

\begin{rem}\label{rem:bialg}
  \begin{enumerate}
  \item Note that we do not assume here that $F$ preserves non-empty
    monomorphisms (cf.~Theorems~\ref{thm:lffsub} and~\ref{thm:rhotheta}) as this
    assumption is not needed for our main result
    Theorem~\ref{thm:main}. However, we will make this assumption at
    the end, in order to obtain the picture in~\eqref{diag:hierarchy} (see
    Corollary~\ref{cor:fp=fg}). 

  \item The most common instance of a functor $F$ on an algebraic category
    $\A$ is a lifting of an endofunctor $F_0: \Set \to \Set$, i.e.~we
    have a commutative square 
    \[
      \xymatrix{
        \Set^T \ar[r]^-F \ar[d]_U & \Set^T \ar[d]^{U} \\
        \Set \ar[r]_-{F_0} & \Set
      }
    \]
    where $U: \A \to \Set$ is the forgetful functor. Recall that
    monomorphisms in $\Set^T$ are precisely the injective $T$-algebra
    morphisms (see Remark~\ref{rem:proj}(\ref{rem:proj1})). Hence, a
    lifting $F$ preserves all non-empty monos since the lifted set
    functor $F_0$ does so. Similarly, $F$ preserves surjective $T$-algebra
    morphisms since $F_0$ preserves surjections (which
    are split epis in $\Set$). Finally, $F$ is finitary whenever $F_0$
    is so because filtered colimits in $\Set^T$ are created by $U$.

  \item It is well known that liftings $F: \Set^T \to \Set^T$ are in
    bijective correspondence with distributive laws of the monad $T$
    over the functor $F_0$, i.e.~natural transformations
    $\lambda: TF_0 \to F_0T$ satisfying two obvious axioms w.r.t.~the
    unit and multiplication of $T$ (see
    e.g.~Johnstone~\cite{johnstone_lift}):
    \[
      \xymatrix{
        F_0 \ar[r]^-{F_0\eta} \ar[rd]_-{\eta F_0} & TF_0
        \ar[d]^{\lambda}
        \\
        & F_0T
      }
      \qquad
      \xymatrix{
        TTF_0
        \ar[r]^-{T\lambda}
        \ar[d]_{\mu F_0}
        & TF_0T \ar[r]^-{\lambda T} &
        F_0TT \ar[d]^{F_0 \mu} \\
        TF_0 \ar[rr]_-{\lambda} && F_0T
        }
    \]
    Moreover, coalgebras for the lifting $F$ are precisely the
    \emph{$\lambda$-bialgebras}, i.e.~sets $X$ equipped with an
    Eilenberg-Moore algebra structure $\alpha: TX \to X$ and a coalgebra
    structure $c: X \to F_0X$ subject to the following commutativity
    condition
    \[
      \xymatrix{
        TX \ar[d]_\alpha \ar[r]^-{Tc} 
        &  
        TF_0X \ar[r]^-{\lambda_X}  
        & 
        F_0TX \ar[d]^{F_0 \alpha} 
        \\
        X \ar[rr]_-c 
        & &
        F_0X
      }
    \]
    which states that $c$ is a $T$-algebra morphism from $(X,\alpha)$
    to $F(X,\alpha)$.
    
\item Let $F_0: \Set \to \Set$ have a lifting to $\Set^T$. \emph{Generalized
    determinization}~\cite{sbbr13} is the process of turning a given
  coalgebra $c: X \to F_0TX$ in $\Set$ into the coalgebra
  $\kl c: TX \to FTX$ in $\Set^T$. For example,
  for the functor $F_0X = \{0,1\} \times X^\Sigma$ on $\Set$ and the
  finite power-set monad $T = \powf$, $F_0T$-coalgebras are precisely
  non-deterministic automata and generalized determinization is the
  construction of a deterministic automaton by the well-known subset
  construction. The unique $F$-coalgebra morphism $\fin{(\kl c)}$
  assigns to each state $x \in X$ the language accepted by $x$ in the
  given non-deterministic automaton (whereas the final semantics for
  $F_0T$ on $\Set$ provides a kind of process semantics taking the
  non-deterministic branching into account).

  Thus studying the behaviour of $F$-coalgebras whose carrier is a
  free finitely generated $T$-algebra $TX$ is precisely the study of a
  \emph{coalgebraic language semantics} of finite $F_0T$-coalgebras.
\end{enumerate}
\end{rem}

\begin{nota}
  We denote by 
  \iffull\[\else\/$\fi
    \coafr F
  \iffull\]\else\/$ \fi
  the full subcategory of $\coa F$ given by all coalgebras
  $c: TX \to FTX$ whose carrier is a free finitely generated
  $T$-algebra, i.e.~where $X$ is a finite set $X$.

  The colimit of the inclusion functor of $\coafr F$ into the category
  of all $F$-coalgebras is denoted by
  \iffull\[\else\/$\fi
    (\phi F, \zeta ) = \colim(\coafr F \subto \coa F)
  \iffull\]\else\/$ \fi
  with the colimit injections $\inj_c: TX \to \phi F$ for every $c: TX \to
  FTX$ in $\coafr F$. 
\end{nota}

\takeout{% this is wrong
\begin{rem}
  The coalgebra $\phi F$ is an lfp coalgebra. Indeed, as we have just
  seen $\phi F$ is the colimit of a filtered diagram of $F$-coalgebras with a
  finitely presentable carrier; indeed, each $TX$ with $X$ finite is
  finitely presentable, and finitely presentable objects are closed
  under coequalizers. 
\end{rem}}%end takeout
\begin{nota}\label{not:h}
  Since every free finitely generated algebra $TX$ is clearly fp
  (being presented by the finite set $X$ of generators and no
  relations), $\coafr F$ is a full subcategory of $\coaf F$. Therefore,
  the universal property of the colimit $\phi F$ induces a coalgebra
  morphism denoted by $h: \phi F \to \rho F$. Furthermore we write
  $m: \phi F \to \nu F$ for the unique $F$-coalgebra morphisms into
  the final coalgebra, respectively.
\end{nota}
We shall show in Proposition~\ref{prop:hepi} that $h$ is a strong
epimorphism. Thus, whenever $F$ preserves non-empty monos, we have the
picture~\eqref{diag:hierarchy} from the introduction.

\begin{rem}
  \label{rem:sifted}
  We will also use that the colimit $\phi F$ is a sifted colimit. 
  \begin{enumerate}
  \item Recall that a small category $\D$ is called
    \emph{sifted}~\cite{arv} if finite products commute with colimits
    over $\D$ in $\Set$. More precisely, $\D$ is sifted iff given any
    diagram $D: \D \times \J \to \Set$, where $\J$ is a finite discrete
    category, the canonical map
    \[
      \colim\limits_{d \in \D} \Big(\prod\limits_{j \in \J}
        D(d,j)\Big) 
      \to
      \prod\limits_{j \in \J} (\colim\limits_{d \in \D} D(d,j))
    \]
    is an isomorphism. A \emph{sifted colimit} is a colimit of a
    diagram with a sifted diagram scheme. 

  \item It is well-known that the forgetful functor $\Set^T \to \Set$
    preserves and reflects sifted colimits; this follows
    from~\cite[Corollary~11.9]{arv}.\smnote{Previously, I had
      in~\cite[Proposition~2.5]{arv} here.}

  \item\label{rem:sifted3} Further recall~\cite[Example~2.16]{arv} that every small
    category $\D$ with finite coproducts is sifted. Thus, from
    Lemma~\ref{lem:coprod} below it follows that $\D = \coafr F$ is
    sifted, and therefore $\phi F$ is a sifted colimit.
  \end{enumerate} 
\end{rem}
\begin{lem}
  \label{lem:coprod}
  The category $\coafr F$ is closed under finite coproducts in $\coa F$.  
\end{lem}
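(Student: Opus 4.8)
The plan is to show that the coproduct of two coalgebras from $\coafr F$ is again (isomorphic to) a coalgebra in $\coafr F$, using that coproducts of $F$-coalgebras are formed on the carrier level and that the left adjoint $X \mapsto TX$ preserves coproducts. First I would recall that, since the forgetful functor $\coa F \to \A$ creates all colimits (stated in Section~\ref{sec:algcoalg}), the coproduct of two $F$-coalgebras $(C_1, c_1)$ and $(C_2, c_2)$ in $\coa F$ is carried by the coproduct $C_1 + C_2$ in $\A = \Set^T$, with coalgebra structure the unique morphism $C_1 + C_2 \to F(C_1+C_2)$ induced by $F\inl \cdot c_1$ and $F\inr \cdot c_2$. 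So it suffices to observe that if $C_1 = TX_1$ and $C_2 = TX_2$ for finite sets $X_1, X_2$, then $C_1 + C_2$ is again a free finitely generated $T$-algebra.

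The key step is that the free-algebra functor $T(-)\colon \Set \to \Set^T$, being a left adjoint to the forgetful functor, preserves all colimits and in particular coproducts; hence $TX_1 + TX_2 \cong T(X_1 + X_2)$ in $\Set^T$, where $X_1 + X_2$ is the disjoint union of sets. Since $X_1$ and $X_2$ are finite, so is $X_1 + X_2$, and therefore $T(X_1 + X_2)$ is a free finitely generated $T$-algebra. Consequently the coproduct $(TX_1, c_1) + (TX_2, c_2)$ in $\coa F$ has carrier (isomorphic to) $T(X_1+X_2)$, so it lies in $\coafr F$ up to isomorphism; since $\coafr F$ is a \emph{full} subcategory of $\coa F$ closed under isomorphism, this means $\coafr F$ is closed under binary coproducts. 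The empty coproduct is the initial $F$-coalgebra, carried by the initial object $0$ of $\Set^T$, which is $T\emptyset$ — again free finitely generated — with its unique coalgebra structure; so $\coafr F$ contains an initial object as well. Together these give closure under all finite coproducts.

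I do not expect a genuine obstacle here; the only point that needs a little care is the bookkeeping between "closed under finite coproducts" as an abstract closure property and the fact that the coproduct computed in $\coa F$ need only be \emph{isomorphic} to (rather than literally equal to) an object of $\coafr F$ — but since $\coafr F$ is full and we are free to replace $TX_1 + TX_2$ by the canonically isomorphic $T(X_1+X_2)$, this is harmless. One should also note explicitly that the induced coalgebra structure on $T(X_1+X_2)$ is then automatically a $T$-algebra morphism (it is a coproduct injection–induced morphism in $\coa F$), so no further verification of the bialgebra condition is required.
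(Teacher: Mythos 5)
Your proof is correct and follows essentially the same route as the paper: the coproduct in $\coa F$ is created on the level of carriers, $TX_1 + TX_2 \cong T(X_1+X_2)$ in $\Set^T$ because the free-algebra functor preserves coproducts, and the empty coproduct is the coalgebra on the initial algebra $T\emptyset$ given by the unique $T$-algebra morphism $T\emptyset \to FT\emptyset$. The extra remarks about closure up to isomorphism and the structure map being a $T$-algebra morphism are fine but not needed beyond what the paper already records.
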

\begin{proof}
  The empty map $0 \to FT0$ extends uniquely to a $T$-algebra morphism
  $T0 \to FT0$, i.e.~an $F$-coalgebra, and this coalgebra is the
  initial object of $\coafr F$. 

  Given coalgebras $c: TX \to FTX$ and $d: TY \to FTY$ one uses that
  $T(X+Y)$ together with the injections $T\inl: TX \to T(X+Y)$ and
  $T\inr: TY \to T(X+Y)$ form a coproduct in $\Set^T$. This implies
  that forming the coproduct of $(TX, c)$ and $(TY, d)$ in $\coa F$ we
  obtain an $F$-coalgebra on $T(X+Y)$, and this is an object of
  $\coafr F$ since $X+Y$ is finite. 
\end{proof}
\begin{thm}[Urbat~\cite{Urbat17}, Lemma~4.5]\label{thm:phifix}
  If $F$ preserves sifted colimits, then $\phi F$ is a fixed point of
  $F$, i.e.~$\zeta: \phi F \to F(\phi F)$ is an isomorphism.
\end{thm}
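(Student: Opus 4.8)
The plan is to exploit the general principle that if a functor $F$ preserves the colimit defining a fixed-point candidate, then the induced comparison map is an isomorphism. Here the relevant colimit is the sifted colimit $(\phi F, \zeta) = \colim(\coafr F \subto \coa F)$, which by Remark~\ref{rem:sifted}(\ref{rem:sifted3}) is genuinely sifted (since $\coafr F$ has finite coproducts by Lemma~\ref{lem:coprod}), and we are assuming $F$ preserves sifted colimits. First I would recall that colimits in $\coa F$ are computed on the level of $\A$ (the forgetful functor $\coa F \to \A$ creates colimits), so $\phi F = \colim_{c \in \coafr F} TX$ in $\A$, with injections $\inj_c : TX \to \phi F$, and the coalgebra structure $\zeta : \phi F \to F(\phi F)$ is the unique morphism satisfying $\zeta \cdot \inj_c = F\inj_c \cdot c$ for every $c : TX \to FTX$ in $\coafr F$.

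Next I would observe that applying $F$ to the colimit cocone $(\inj_c)$ yields a colimit cocone $(F\inj_c : FTX \to F(\phi F))$ of the diagram $c \mapsto FTX$, since $F$ preserves this sifted colimit. The family of maps $c \mapsto (FTX \xrightarrow{c}$ wait, rather the family $(\zeta \cdot \inj_c = F\inj_c \cdot c)_c$ forms a cocone over the original diagram $c \mapsto TX$ with apex $F(\phi F)$. Now I want to build a morphism $F(\phi F) \to \phi F$ going the other way. The key point is that each coalgebra $c : TX \to FTX$ in $\coafr F$ gives rise to a new coalgebra on $FTX$: since $F$ preserves free finitely generated carriers? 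No — $FTX$ need not be free. The cleaner route is the standard one: show $\zeta : \phi F \to F(\phi F)$ is a coalgebra morphism from $(\phi F, \zeta)$ to $(F(\phi F), F\zeta)$, hence factors through the colimit injections in a way that produces an inverse. Concretely, for each $c : TX \to FTX$ in $\coafr F$, the composite $F c \cdot (\text{structure on } FTX)$... — the honest statement is that $(FTX, Fc)$ is again an $F$-coalgebra, but not in $\coafr F$, so one instead uses that $\zeta$ itself, being an $F$-coalgebra morphism into a final-like object, together with the cocone structure, yields mutually inverse maps. Following Urbat, the argument is: $F(\phi F)$ with the cocone $(F\inj_c \cdot c)_c$ is a colimit of the same diagram $c \mapsto TX$ that $\phi F$ is a colimit of, because $(F\inj_c)$ is a colimit cocone and precomposition with the family $(c)$ merely reindexes; more precisely, one checks that $(F\inj_c \cdot c : TX \to F(\phi F))_c$ is a cocone and that it has the universal property, using that each $TX$ is finitely presentable and every object of $\coafr F$ maps into one where the relevant identification already holds. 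Then $\zeta$ and the induced comparison morphism $F(\phi F) \to \phi F$ are mutually inverse by the uniqueness part of the universal property.

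So the step order is: (1) recall $\phi F = \colim TX$ in $\A$ and the defining equation for $\zeta$; (2) apply $F$ and use sifted-colimit preservation to get that $(F\inj_c)$ is a colimit cocone; (3) verify $(F\inj_c \cdot c)_c$ is a cocone over $c \mapsto TX$ and identify $\zeta$ as the induced mediating morphism; (4) show this cocone is itself colimiting — this is the crux — and conclude that $\zeta$ is an isomorphism, with inverse the mediating morphism from $\phi F$'s own universal property applied to $(F\inj_c \cdot c)_c$; finally check the two composites are identities by uniqueness. The main obstacle is step~(4): proving that the cocone $(F\inj_c \cdot c : TX \to F(\phi F))_c$ is again a colimit of the diagram $(TX)_{c \in \coafr F}$. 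This requires relating the indexing category $\coafr F$ to itself through the assignment $c \mapsto c$ in a way that shows no information is lost — essentially that every coalgebra in $\coafr F$ admits a morphism into one of the form arising from $Fc$, or more slickly, invoking that $\zeta \cdot \inj_c = F\inj_c \cdot c$ already exhibits $\zeta$ as an isomorphism onto its image on each generator and that these images cover $F(\phi F)$ by the strong-epimorphicity of colimit injections (Lemma~\ref{lem:strepi}). Since this is exactly the content of Urbat's Lemma~4.5, I would cite Theorem~\ref{thm:phifix} as already established and, if a self-contained argument is wanted, fill in step~(4) via the sifted-colimit interchange in Remark~\ref{rem:sifted}(1) together with the fact that $F$ preserves sifted colimits.
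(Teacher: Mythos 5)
The paper does not prove Theorem~\ref{thm:phifix} at all: it is quoted from Urbat~\cite{Urbat17}, Lemma~4.5, so there is no in-paper argument to compare yours with. Judged as a proof, your proposal has a genuine gap, and you name it yourself: step~(4). The setup is fine --- colimits in $\coa F$ are created in $\A$, $\zeta$ is determined by $\zeta \cdot \inj_c = F\inj_c \cdot c$, the diagram is sifted (Lemma~\ref{lem:coprod} and Remark~\ref{rem:sifted}), and preservation of sifted colimits makes $(F\inj_c \colon FTX \to F(\phi F))$ a colimit cocone, with $(F\inj_c \cdot c)_c$ a cocone on the original diagram whose mediating morphism is $\zeta$. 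But the entire content of the theorem is the remaining claim that this second cocone is again colimiting, equivalently that one can construct an inverse $F(\phi F) \to \phi F$, i.e.\ a cocone of morphisms $FTX \to \phi F$ compatible with all morphisms of $\coafr F$. Nothing in your sketch produces such a cocone: $FTX$ is in general neither free nor finitely generated, so it has no colimit injection of its own, and $\coafr F$ is not closed under $(TX,c) \mapsto (FTX, Fc)$; this is exactly where the real work (and the real use of sifted colimits, beyond the formal step~(2)) lies.

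The shortcuts you float do not close this gap. Joint (strong) epimorphicity of the $F\inj_c$ (Lemma~\ref{lem:strepi} applied to the preserved colimit) does not give surjectivity of $\zeta$: the identity $\zeta \cdot \inj_c = F\inj_c \cdot c$ only shows that the image of $\zeta$ contains the images of the composites $F\inj_c \cdot c$, which are in general strictly smaller than the images of the $F\inj_c$, so assuming they jointly cover $F(\phi F)$ is assuming the surjectivity you are trying to prove; and ``$\zeta$ is an isomorphism onto its image on each generator'' establishes nothing about injectivity of $\zeta$ on $\phi F$. The closing move --- citing Theorem~\ref{thm:phifix} ``as already established'' --- is circular; citing Urbat's Lemma~4.5 instead is legitimate (it is precisely what the paper does), but then the proposal is a citation, not a proof. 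A self-contained argument must actually construct the compatible family $FTX \to \phi F$, e.g.\ by approximating $FTX$ by free finitely generated algebras (every $T$-algebra is a sifted colimit of these) and extending $c$ to coalgebras on enlarged finite sets of generators; that construction is the missing idea, and the paper simply defers it to Urbat.
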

\smnote{TODO: it could be a good idea to include an explicit proof of
  this fact here.}  
Recall that every finitary endofunctor on $\Set$ preserves sifted
colimits (this follows from~\cite[Corollary~6.30]{arv}). Thus, so does
every lifting $F:\Set^T \to \Set^T$ of a finitary endofunctor on
$\Set$, using Remark~\ref{rem:sifted}(2). In general, finitary
functors need not preserve sifted colimits~\cite[Example~7.11]{arv}.

One might now expect that $\phi F$ is characterized as a coalgebra by a
universal property similar to finality properties that characterize
$\rho F$ and $\theta F$. However, Urbat~\cite{Urbat17} shows that this
is not the case. In fact, he provides the following example of a
coalgebra $c: TX \to FTX$ where $\inj_c: TX \to \phi F$ is not the
only $F$-coalgebra morphism:
\begin{exa}\label{ex:nonproper}
  \begin{enumerate}
  \item Let $\A$ be the category of algebras for the signature with
    one unary operation symbol $u$ (and no equations), and let
    $F= \Id$ be the identity functor on $\A$. Let $A$ be the free
    (term) algebra on one generator $x$, and let $B$ be the free
    algebra on one generator $y$ (i.e.~both $A$ and $B$ are
    isomorphic to $\Nat$). We equip $A$ and $B$ with the $F$-coalgebra
    structures $a = \id: A \to A$ and $b: B \to B$ given by
    $b(y) = u(y)$. 
    \iffull%
    Then the mapping $t \mapsto u(t)$ clearly is an
    $F$-coalgebra morphism from $B$ to itself, i.e.~a morphism in
    $\coafr F$. Therefore we have $\inj_b (y) = \inj_b(u(y))$. 

    Now define a morphism $g: A \to \phi F$ in $\A$ by $g(x) =
    \inj_b(y)$. Then $g$ is an $F$-coalgebra morphism since
    \[
      g \o a(x) = g(x) = \inj_b(y) = \inj_b(u(y)) = \inj_b(b(y)) =
      \zeta(\inj_b(y)) = \zeta(g(x)),
    \]
    where $\zeta: \phi F \to F(\phi F)$ is the coalgebra structure on
    $\phi F$.
    
    We prove the following property: for every morphism $f$ in
    $\coafr F$ from $\alpha: TX \to TX$ to $\beta: TY \to TY$, any
    $t \in TX$ reaches finitely many states iff $f(t)$ does so, more
    precisely:
    \[
      \{ \alpha^n(t) \mid n \in \Nat\} \text{ is finite} \qquad \iff
      \qquad
      \{\beta^n(f(t)) \mid n \in \Nat\} \text{ is finite}. 
    \]
    Indeed, if $t$ reaches finitely many states, then the $f(\alpha^n(t))$,
    for $n \in \Nat$, form a finite set, and $\beta^n(f(t))$, $n \in
    \Nat$ is the same set since $f$ is a coalgebra morphism. 

    Conversely, suppose that $t$ reaches infinitely many states. Since
    $f$ is a morphism in $\A$, we know that if $\alpha^n(t)$ is
    $u^k(x)$ for some $x \in X$ then $f(\alpha^n(t)) = \beta^n(f(t))$
    must be $u^l(y)$ with $l \geq k$ for some $y \in Y$. Thus, $f(t)$
    must also reach infinitely many states.

    We can now conclude that $g, \inj_a: A \to \phi F$ are different
    coalgebra morphisms. Indeed, $\inj_a(x)$ reaches only itself since $x$
    does so, but $g(x) = \inj_b(y)$ reaches infinitely many states
    since $y$ does so. Thus, $g(x) \neq \inj_a(x)$. 
    
    It follows that $|\phi F| \geq 2$, while
    $\rho F = \theta F = \nu F = 1$; to see the latter equation use
    that $\id: 1 \to 1$ is a coalgebra in $\coaf F$ since $1$ is the
    object of $\A$ presented by one generator $z$ and one relation
    $z = u(z)$.
    \else
    Define $g: A \to \phi F$ by $g(x) = \inj_b(y)$. Then one can show
    that $g$ is an $F$-coalgebra morphism different from the
    $F$-coalgebra morphism $\inj_a: A \to \phi F$. 
    \fi

  \item Using similar ideas as in the previous point one can show
    that, for the category $\A$ and $FX = \Nat \times X$ from
    Example~\ref{ex:nonex}, $\phi F$ and $\rho F$ do not
    coincide\iffull\/. \else\/, see the full paper~\cite{milius17}. \fi 
    Consequently, in this example, none of the arrows
    in~\eqref{diag:hierarchy} is an isomorphism.

    \iffull
    In order to see that $\phi F$ and $\rho F$ do not coincide, consider
    the two coalgebras $a: A \to FA$ and $b:B \to FB$ with $A =
    T\{x\}$ and $B = T\{y\}$ and with the coalgebra structure given by
    $a(x) = (0,u(x))$ and $b(y) = (0,v(y))$. These coalgebras both
    lie in $\coafr F$. Consider also the coalgebra $p: P \to FP$ where
    $P$ is presented by one generator $z$ and one relation $u(z) =
    v(z)$, i.e.~$P = T\{z\}/\mathord{\sim}$, where $\sim$ is the
    smallest congruence with $u(z) \sim v(z)$. Hence, $w(z) \sim
    w'(z)$ for $w,w' \in \{u,v\}^*$ iff $w$ and $w'$ have the same
    length. The coalgebra structure is defined by $p([w(x)]) = (0,
    [uw(x)])$. The coalgebra $(P,p)$ lies in $\coaf F$. Now $f: A \to
    P$ and $g: B \to P$ determined by $f(x) = z = g(y)$ are easily
    seen to be $F$-coalgebra morphisms, and therefore $\ha a = \ha p
    \cdot f$ and $\ha b = \ha p \cdot g$. Therefore 
    \[
      \ha a(x) = \ha p (f(x)) = \ha p(z) = \ha p (g(y)) = \ha b (y).
    \]

    However, we will prove that $\inj_a (x) \neq \inj_b (x)$.  For any
    $(TX, c)$ in $\coafr F$ and $t \in TX$, we say that 
    \emph{$t$-reachable states are $u$-bounded} if there exists a natural
    number $k$ such that, for any state $s = w(x)$ reachable from $t$
    via the next state function, 
    the number $|w|_u$ of $u$'s in $w$ is at most $k$. Now we prove for any
    morphism $f: (TX, c) \to (TY, d)$ in $\coafr F$ and
    any $t \in TX$ the following claim:
    \[
      \text{$t$-reachable states are $u$-bounded iff
        $f(t)$-reachable states are $u$-bounded.}
    \]
    Indeed, a state $s = w(x)$ is reachable from $t$ iff $f(s) = wf(x)$
    is reachable from $f(t)$.  
    Then the 'only if' direction is clear: if $t$-reachable states are not
    $u$-bounded, then neither are $f(t)$-reachable states. 
    For the 'if' direction suppose $t$-reachable states are
    $u$-bounded by $k$. Then $f(t)$-reachable states are bounded by $k
    + \max\{|f(x)|_u \mid x \in X\}$. 
    \fi
  \end{enumerate}
\end{exa}

\takeout{
In this section we are going to investigate when the first three fixed
points in~\eqref{diag:hierarchy} collaps to one, i.e.~$\phi F \cong \rho F
\cong \theta F$. As a consequence, it follows
that finality of a given locally fp coalgebra for $F$ can be established by
checking the universal property only for the coalgebras in $\coafr F$ 
(Corollary~\ref{cor:final}).   
}% end takeout

\noindent
Coming back to the discussion of properties that $\phi F$ does
have, the following proposition shows that $\rho F$ is always a strong
quotient of $\phi F$. Recall from  Notation~\ref{not:h} the canonical coalgebra
morphism $h$ from $\phi F$ to $\rho F$:
\begin{prop}
  \label{prop:hepi}
  The morphism $h: \phi F \epito \rho F$  is a strong epimorphism in $\A$.
\end{prop}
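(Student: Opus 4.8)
The plan is to show that $h$ is surjective; by Remark~\ref{rem:proj}(\ref{rem:proj1}) this is equivalent to being a strong epimorphism in $\A=\Set^T$. The crucial construction is to replace an arbitrary coalgebra of $\coaf F$ by one of $\coafr F$ lying above it. Given $a\colon A\to FA$ in $\coaf F$, use Remark~\ref{rem:proj}(\ref{rem:proj3}) to choose a free finitely generated algebra $TX$ (with $X$ finite) together with a surjective $T$-algebra morphism $q\colon TX\epito A$. Since $F$ preserves surjective $T$-algebra morphisms (Assumptions~\ref{ass:ass}), the morphism $Fq\colon FTX\epito FA$ is again surjective, and since $TX$ is (regular) projective by Remark~\ref{rem:proj}(\ref{rem:proj2}) we may lift $a\cdot q\colon TX\to FA$ along $Fq$ to obtain $a'\colon TX\to FTX$ with $Fq\cdot a'=a\cdot q$. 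Then $q$ is an $F$-coalgebra morphism $(TX,a')\to(A,a)$, and $(TX,a')$ lies in $\coafr F$, hence (as $TX$ is fp) also in $\coaf F$.

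Next I would record two triangle identities coming from the colimit cocones. As $q$ is a morphism of $\coaf F$, compatibility of the colimit cocone defining $\rho F$ gives $\ha{a'}=\ha a\cdot q$; and since $(TX,a')\in\coafr F$ and $h$ is the comparison morphism induced by the cocone $(\ha c)_{c\in\coafr F}$, we have $h\cdot\inj_{a'}=\ha{a'}$, whence $h\cdot\inj_{a'}=\ha a\cdot q$. Now, by Lemma~\ref{lem:strepi} the colimit injections $\ha c\colon C\to\rho F$ with $c$ ranging over $\coaf F$ form a strongly epimorphic family, so by Remark~\ref{rem:proj}(\ref{rem:proj1}) they are jointly surjective. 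Hence every element of $\rho F$ is of the form $\ha a(z)$ for some $a\colon A\to FA$ in $\coaf F$ and $z\in A$; picking $t\in TX$ with $q(t)=z$ (possible as $q$ is surjective) yields $\ha a(z)=\ha a(q(t))=h(\inj_{a'}(t))$, so $h$ is surjective and therefore a strong epimorphism.

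If one prefers a point-free argument, the same input finishes abstractly: take a (strong epi, mono)-factorization $h=m\cdot e$ in $\A$; then $m\cdot e\cdot\inj_{a'}=h\cdot\inj_{a'}=\ha a\cdot q$, and the diagonal fill-in for the strong epi $q$ and the mono $m$ produces $d_a\colon A\to I$ with $m\cdot d_a=\ha a$. Since the $\ha a$ ($a\in\coaf F$) are jointly strongly epimorphic and all factor through $m$, applying their joint diagonal property to the mono $m$ and the identity on $\rho F$ yields a right inverse of $m$; a mono that is split epi is an isomorphism, so $m$ is iso and $h=m\cdot e$ is a strong epi. The only step needing care is the lifting: one must use that $F$ preserves surjective $T$-algebra morphisms (so $Fq$ is a strong epi) together with projectivity of $TX$; everything afterwards is bookkeeping with the universal properties of the two colimits and the defining equations $h\cdot\inj_c=\ha c$ for $c\in\coafr F$.
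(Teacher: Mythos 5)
Your proof is correct and follows essentially the same route as the paper's: present each coalgebra $a\colon A\to FA$ in $\coaf F$ as a quotient $q\colon TX\epito A$ of a free finitely generated algebra, lift $a\cdot q$ along the surjection $Fq$ using projectivity of $TX$ to get a coalgebra in $\coafr F$ with $h\cdot\inj_{a'}=\ha a\cdot q$, and conclude surjectivity of $h$ from the joint surjectivity of the cocone $(\ha a)_{a\in\coaf F}$. Your concluding point-free variant is also sound and is close in spirit to (though a bit more streamlined than) the purely categorical proof in the paper's appendix.
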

The following proof is set theoretic and makes explicit use of the
fact that $\A$ is algebraic over $\Set$, i.e.~we use that strong
epimorphisms in $\A$ are precisely surjective $T$-algebra morphisms.
In the appendix we provide a purely category theoretic proof, which
is somewhat longer, however. That proof shows that
the above result holds for more general base categories than sets. 
\begin{proof}
  We first prove the following fact:
  \[
    \text{every coalgebra in $\coaf F$ is a regular quotient of some
    coalgebra in $\coafr F$.}
  \]
  Indeed, given any $a: A \to FA$ in $\coaf F$ we know that its
  carrier is a regular quotient of some free $T$-algebra $TX$ with $X$
  finite, via $q: TX \epito A$, say (see
  Remark~\ref{rem:proj}.\ref{rem:proj3}). Since $F$ preserves regular
  epis ($=$ surjections) we can use projectivity of $TX$ (see
  Remark~\ref{rem:proj}.\ref{rem:proj2}) to obtain a coalgebra
  structure $c$ on $TX$ making $q$ an $F$-coalgebra morphism:
  \[
    \xymatrix{
      TX 
      \ar@{-->}[r]^-{c} 
      \ar@{->>}[d]_q
      &
      FTX 
      \ar@{->>}[d]^{Fq}
      \\
      A \ar[r]_-a & FA
    }
  \]
  This implies that we have $\ha c = \ha a \cdot q$. 
  
  Now let $p \in \rho F$. Since $\rho F$ is the colimit of all
  coalgebras in $\coaf F$, we know from Lemma~\ref{lem:strepi} that
  there exists some coalgebra $a: A \to FA$ in $\coaf F$ and $r \in A$
  such that $\ha a (r) = p$. By the above fact, we have $(TX, c)$ in
  $\coafr F$ and the surjective coalgebra morphism $q: TX \epito
  A$. Hence there exists some $s \in TX$ with $q(s) = r$. By the
  finality of $\rho F$ we have the commuting square below:
  \[
    \xymatrix{
      TX \ar@{->>}[r]^-q \ar[d]_{\inj_c} & A \ar[d]^{\ha a} \\
      \phi F \ar@{->>}[r]_-h & \rho F
    }
  \]
  Thus we have $p = \ha a (q(s)) = h(\inj_c(s))$, which shows that $h$
  is surjective as desired. 
\end{proof}

\begin{cor}\label{cor:hepi}
  If $F$ preserves non-empty monomorphisms, then we obtain the situation
  displayed in~\eqref{diag:hierarchy}:
  \[
    \phi F \epito \rho F \epito \theta F \monoto \nu F.
  \]
\end{cor}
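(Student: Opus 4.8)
The plan is to obtain the displayed chain by assembling results already in hand, with the only real work being the verification of hypotheses. The first arrow $\phi F \epito \rho F$ is precisely Proposition~\ref{prop:hepi}: the canonical morphism $h$ from Notation~\ref{not:h} is carried by a surjective $T$-algebra morphism, which by Remark~\ref{rem:proj}(\ref{rem:proj1}) is the same as a strong epimorphism, so $\rho F$ is a quotient coalgebra of $\phi F$.

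For the remaining two arrows I would invoke Theorem~\ref{thm:rhotheta}, which says that, under suitable hypotheses, the $(\text{strong epi}, \text{mono})$-factorization of the unique coalgebra morphism $\rho F \to \nu F$ has the shape $\rho F \epito \theta F \monoto \nu F$. Thus the substantive step is to check that the hypotheses of Theorem~\ref{thm:rhotheta} are met in the present setting. Concretely: (i) $\A = \Set^T$ is an lfp category, as every algebraic category is; (ii) every fp-object of $\A$ is a strong quotient of a strong-epi-projective fp-object, which follows from Remark~\ref{rem:proj}, since every finitely presentable $T$-algebra is a regular ($=$ strong) quotient of a free algebra $TX$ on a finite set $X$ (part~(\ref{rem:proj3})), and such a $TX$ is itself finitely presentable, being presented by $X$ and no relations, as well as regular projective (part~(\ref{rem:proj2})); and (iii) $F$ is finitary by Assumptions~\ref{ass:ass} and preserves non-empty monomorphisms by the hypothesis of the corollary. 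Hence Theorem~\ref{thm:rhotheta} applies and yields the last two arrows. (Alternatively, for the final arrow alone one could cite Theorem~\ref{thm:lffsub}.)

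Finally I would observe that the three morphisms fit together coherently: the composite $\phi F \to \rho F \to \theta F \monoto \nu F$ equals the unique coalgebra morphism $m\colon \phi F \to \nu F$ of Notation~\ref{not:h} by finality of $\nu F$, so the chain is a genuine factorization of the canonical maps rather than an unrelated assembly of arrows. I do not anticipate a serious obstacle here — the argument is bookkeeping. The one point needing care is (ii), where the special structure of $\Set^T$ is used, namely that free algebras on finite sets simultaneously serve as the finitely presentable objects and as the projective presentations required by Theorem~\ref{thm:rhotheta}; everything else is immediate from the standing assumptions.
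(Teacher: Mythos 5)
Your proposal is correct and follows exactly the paper's route: the first arrow is Proposition~\ref{prop:hepi}, and the remaining two come from Theorem~\ref{thm:rhotheta}, whose hypotheses hold in any algebraic category (as you verify via Remark~\ref{rem:proj}) together with the assumed preservation of non-empty monos. The paper's proof is just the one-line citation of these two results, so your extra hypothesis-checking is simply an expanded version of the same argument.
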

Indeed, this follows from Proposition~\ref{prop:hepi} and
Theorem~\ref{thm:rhotheta}. 

\section{A Universal Property of \texorpdfstring{$\phi F$}{phi F}}
\label{sec:univ}

We have seen in Example~\ref{ex:nonproper}(1) that $\phi F$, unlike
$\rho F$ and $\theta F$, does not enjoy a finality property as a
coalgebra. In this section we will prove that, as an algebra for $F$,
$\phi F$ is characterized by a universal property. This property then
determines $\phi F$ uniquely up to isomorphism. To this end we make
the
\begin{asm}
  In addition to Assumptions~\ref{ass:ass} we assume in this section
  that $F$ preserves sifted colimits (cf.~Remark~\ref{rem:sifted}). 
\end{asm}
By Theorem~\ref{thm:phifix}, we know that $\phi F$ is then a fixed
point of $F$ so that by inverting its coalgebra structure we may
regard it as the $F$-algebra $\zeta^{-1}: F(\phi F) \to \phi F$. 

We have already mentioned that both $\rho F$ and $\theta F$ are
characterized by universal properties as $F$-algebras: they are the
initial iterative and initial fg-iterative algebras,
respectively. However, those properties entail that there exists a
unique $F$-coalgebra morphism from every coalgebra in $\coaf F$ to
$\rho F$, and from every coalgebra in $\coafg F$ to $\theta F$,
respectively. That means that simply adjusting the definition of the
notion of iterative algebra does not yield the desired universal
property of $\phi F$, again due to Example~\ref{ex:nonproper}(1).

The key to establishing a universal property of $\phi F$ is to
consider algebras which admit canonical (rather than unique)
coalgebra-to-algebra homomorphisms. The following notion is inspired
by the Bloom algebras introduced by Ad\'amek et al.~\cite{ahm14}.
\begin{defi}
  An \emph{ffg-Bloom algebra} for the functor $F$ is a triple
  $(A, a, \dagger)$ where $a: FA \to A$ is an $F$-algebra and
  $\dagger$ is an operation
  \[
    \frac{TX \xrightarrow{c} FTX, \text{$X$ finite}}{TX \xrightarrow{\sol c}
    A}
  \]
  subject to the following axioms:
  \begin{enumerate}
  \item solution: $\sol c$ is a coalgebra-to-algebra morphism,
    i.e.~the diagram below commutes:
    \[
      \xymatrix{
        TX \ar[r]^-{\sol c}
        \ar[d]_c 
        &
        A
        \\
        FTX 
        \ar[r]_-{F\sol c}
        &
        FA 
        \ar[u]_a
        }
    \]
  \item functoriality: for every coalgebra morphism $m: (TX, c) \to
    (TY,d)$ in $\coafr F$ we have $\sol c = \sol d \cdot m$:
    \[
      \vcenter{
      \xymatrix{
        TX \ar[r]^c \ar[d]_m & FTX \ar[d]^{Fm} \\
        TY \ar[r]_-d & FY
      }}
      \qquad
      \implies
      \qquad
      \vcenter{
      \xymatrix@R-1.5pc{
        TX \ar[rd]^-{\sol c}
        \ar[dd]_m
        \\
        & A \\
        TY \ar[ru]_-{\sol d}
      }}
    \]
  \end{enumerate}
  A \emph{morphism of ffg-Bloom algebras} from $(A,a,\dagger)$ to
  $(B,b,\ddagger)$ is an $F$-algebra morphism preserving solutions,
  i.e.~an $F$-algebra morphism $h: (A,a) \to (B,b)$ such that for
  every $c: TX \to FTX$ in $\coafr F$ we have
  \[
    c^\ddagger = (TX \xrightarrow{\sol c} A \xrightarrow{h} B).
  \]
\end{defi}
\begin{obs}
  The algebra $\zeta^{-1}: F(\phi F) \to \phi F$ together with the
  operation $\ddagger$ given by the colimit injections,
  i.e.~$c^\ddagger = \inj_c: TX \to \phi F$ for every $c: TX \to FTX$ in
  $\coafr F$, clearly is an ffg-Bloom algebra. Indeed, the solution
  axiom holds since $\inj_c$ is a coalgebra morphisms from $(TX, c)$
  to $(\phi F, \zeta)$ and functoriality holds since the $\inj_c$ form a
  compatible cocone of the diagram $D: \coafr F \subto \coa F$.
\end{obs}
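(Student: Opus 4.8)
The plan is to verify directly the three data required of an ffg-Bloom algebra for the triple $(\phi F, \zeta^{-1}, \ddagger)$; each verification reduces to the universal property of the colimit defining $\phi F$, and the only genuinely external input is Theorem~\ref{thm:phifix}. First, one needs $\zeta^{-1}: F(\phi F) \to \phi F$ to be an $F$-algebra structure: since $F$ preserves sifted colimits, Theorem~\ref{thm:phifix} guarantees that $\zeta: \phi F \to F(\phi F)$ is an isomorphism, so $\zeta^{-1}$ is well-defined. The operation $\ddagger$ is then prescribed on the nose by $c^\ddagger := \inj_c$ for each $c: TX \to FTX$ in $\coafr F$, where $\inj_c: TX \to \phi F$ is the associated colimit injection.

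For the solution axiom I would simply unfold it: it demands $\zeta^{-1} \cdot F(c^\ddagger) \cdot c = c^\ddagger$, which after composing with the isomorphism $\zeta$ becomes $F(\inj_c) \cdot c = \zeta \cdot \inj_c$. This is precisely the assertion that $\inj_c$ is an $F$-coalgebra morphism from $(TX, c)$ to $(\phi F, \zeta)$, which holds by definition of the colimit cocone of the inclusion $D: \coafr F \subto \coa F$. For functoriality, a coalgebra morphism $m: (TX, c) \to (TY, d)$ in $\coafr F$ is a morphism of the diagram $D$, so the cocone compatibility equation reads $\inj_c = \inj_d \cdot m$, i.e.~exactly $c^\ddagger = d^\ddagger \cdot m$.

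There is no real obstacle here: all three conditions collapse to routine bookkeeping about the colimit cocone, and the one point that might superficially look like it needs work---that $\zeta^{-1}$ is an $F$-algebra at all, rather than merely the inverse of some coalgebra structure---is already handled by the cited fixed-point theorem. Hence $(\phi F, \zeta^{-1}, \ddagger)$ is an ffg-Bloom algebra.
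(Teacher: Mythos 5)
Your verification matches the paper's argument exactly: the well-definedness of $\zeta^{-1}$ comes from Theorem~\ref{thm:phifix} (under the standing assumption that $F$ preserves sifted colimits), the solution axiom unfolds to the statement that $\inj_c$ is a coalgebra morphism into $(\phi F,\zeta)$, and functoriality is precisely the cocone compatibility of the colimit injections. This is the same routine check the paper gives, so there is nothing to add.
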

\begin{thm}\label{thm:Bloomini}
  The above Bloom algebra on $\phi F$ is the initial ffg-Bloom algebra.
\end{thm}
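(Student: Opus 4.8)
The plan is to establish initiality of $(\phi F, \zeta^{-1}, \ddagger)$ by constructing, for an arbitrary ffg-Bloom algebra $(A, a, \dagger)$, a unique ffg-Bloom algebra morphism $u\colon \phi F \to A$, and exploiting the fact that $\phi F$ is a sifted (indeed filtered-like) colimit of the coalgebras in $\coafr F$. First I would define $u$ on the colimit: since $\phi F = \colim(\coafr F \subto \coa F)$ with injections $\inj_c\colon TX \to \phi F$, and since the $F$-algebra $(A,a)$ together with the solution operation $\dagger$ provides a cocone $(\sol c)_{c\in\coafr F}$ of the underlying diagram $D\colon \coafr F \subto \coa F \to \A$ — compatibility of this cocone is exactly the functoriality axiom — the universal property of the colimit in $\A$ yields a unique morphism $u\colon \phi F \to A$ in $\A$ with $u \cdot \inj_c = \sol c$ for every $c\colon TX \to FTX$ in $\coafr F$. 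This single equation already forces $u$ to preserve solutions, so the only remaining thing to check for $u$ to be an ffg-Bloom algebra morphism is that it is an $F$-algebra morphism, i.e.\ that $u \cdot \zeta^{-1} = a \cdot Fu$.

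To verify $u \cdot \zeta^{-1} = a \cdot Fu$ I would use that the $\inj_c$ form a jointly strongly epimorphic family (Lemma~\ref{lem:strepi}), so it suffices to precompose with each $\inj_c$; and since $F$ preserves sifted colimits, the family $F\inj_c\colon FTX \to F(\phi F)$ is likewise jointly epic, so I may also freely decompose along $F\inj_c$. Concretely, precomposing the target equation with $\zeta\cdot\inj_c\colon TX \to F(\phi F)$ — equivalently, using that $\zeta^{-1}\cdot F\inj_c \cdot c = \inj_c$ (which is the statement that $\inj_c$ is a coalgebra morphism $(TX,c)\to(\phi F,\zeta)$, rewritten) — one computes
\[
  u \cdot \zeta^{-1}\cdot F\inj_c \cdot c = u \cdot \inj_c = \sol c = a \cdot F\sol c \cdot c = a \cdot Fu \cdot F\inj_c \cdot c,
\]
where the third equality is the solution axiom for $\dagger$ in $A$ and the last uses $u\cdot\inj_c = \sol c$. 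Since the $c$ are (jointly with the structure of the sifted colimit) enough to conclude that the morphisms $F\inj_c\cdot c$ are jointly epic — here one needs that every element of $F(\phi F)\cong\phi F$ is hit, which follows from $\zeta$ being an isomorphism (Theorem~\ref{thm:phifix}) together with joint ep-ness of the $\inj_c$ — we deduce $u\cdot\zeta^{-1} = a\cdot Fu$ on all of $F(\phi F)$. Thus $u$ is an ffg-Bloom algebra morphism.

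For uniqueness, suppose $v\colon \phi F \to A$ is another ffg-Bloom algebra morphism. Then by definition $v$ preserves solutions, i.e.\ $v\cdot \inj_c = \sol c$ for every $c$ in $\coafr F$; but the $\inj_c$ are jointly epic, so $v = u$. The main obstacle I anticipate is the careful bookkeeping in the $F$-algebra-morphism verification: one must be precise about which families are jointly epic and why $F$ preserving sifted colimits (Remark~\ref{rem:sifted}) lets one ``cancel'' $F\inj_c$ on the left, and one must invoke $\zeta$ being invertible to pass between statements about $\phi F$ and about $F(\phi F)$. Everything else is the routine colimit argument, and the construction of $u$ itself is immediate once the functoriality axiom is recognized as cocone-compatibility.
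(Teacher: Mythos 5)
Your proposal is correct and follows essentially the same route as the paper: the $\sol c$ form a cocone by functoriality, giving a unique $u$ with $u\cdot\inj_c=\sol c$; the $F$-algebra morphism equation is checked by precomposing with the jointly epic family $\zeta\cdot\inj_c=F\inj_c\cdot c$ (using the solution axioms on both sides, exactly the paper's diagram chase); and uniqueness follows since any solution-preserving morphism agrees with $u$ on the jointly epic colimit injections. The only cosmetic difference is that the paper phrases the cancellation step directly via joint epiness of the $\inj_c$ together with invertibility of $\zeta$, whereas you phrase it through joint epiness of $F\inj_c\cdot c$ — these are the same observation.
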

\begin{proof}
  It remains to prove the universal property. Let $(A, a, \dagger)$
  be any ffg-Bloom algebra. Then the morphisms $\sol c: TX \to A$, for $c: TX
  \to FTX$ ranging over $\coafr F$, form a compatible cocone on the
  diagram $D$ by functoriality. Therefore we have a unique morphism
  $h: \phi F \to A$ such that the triangles below commute
  \[
    \vcenter{
      \xymatrix{
        TX \ar[d]_{\inj_c} \ar[rd]^-{\sol c} \\
        \phi F \ar[r]_-h & A
      }
    }
    \qquad
    \text{for every $c: TX \to FTX$ in $\coafr F$.}
  \]
  In order to see that $h$ is an $F$-algebra morphism consider the diagram
  below:
  \begin{equation}\label{eq:algmor}
    \vcenter{
      \xymatrix{
        \mbox{\hspace{1ex}$TX$}
        \ar[r]^-c
        \ar[d]_{\inj_c}
        & 
        \mbox{$FTX$\ }
        \ar[d]^{F\inj_c}
        \\
        \phi F
        \ar@<2pt>[r]^-{\zeta}
        \ar[d]_h
        &
        F(\phi F)
        \ar@<2pt>[l]^-{\zeta^{-1}}
        \ar[d]^{Fh}
        \\
        A
        \ar@{<-} `l[u] `[uu]^{\sol c} [uu]
        &
        FA 
        \ar[l]^-a
        \ar@{<-} `r[u] `[uu]_{F \sol c} [uu]
      }}
  \end{equation}
  Its outside commutes, for every $c: TX \to FTX$ in $\coafr F$, by
  the solution axiom for $A$, and the left-hand and right-hand
  parts by the definition of $h$. The upper square commutes by the
  solution axiom for $\phi F$. Therefore, for every $c: TX \to FTX$ in
  $\coafr F$ we have
  \[
    h \cdot \inj_c = a \cdot Fh \cdot \zeta \cdot \inj_c. 
  \]
  Use that the colimit injections $\inj_c$ form an epimorphic family
  to conclude that $h$ is an $F$-algebra morphism, i.e.~$h \cdot
  \zeta^{-1} = a \cdot Fh$. This proves existence of a morphism of
  ffg-Bloom algebras from $\phi F$ to $A$. 

  For the uniqueness suppose that $g: \phi F \to A$ is any morphism of
  ffg-Bloom algebras. Then
  \[
    g \cdot \inj_c = g \cdot c^\ddagger = \sol c
  \]
  holds for every $c: TX \to FTX$ in $\coafr F$. Thus, $g=h$ by the
  universal property of the colimit $\phi F$.  
\end{proof}
The following result provides a simple alternative characterization of the
category of ffg-Bloom algebras for $F$ without mentioning 
$\dagger$ and its axioms. This result is similar to~\cite[Prop.~3.4]{ahm14} for
ordinary Bloom algebras. Here $\alg F$ denotes the category of all
$F$-algebras.
\begin{prop}
  The category of ffg-Bloom algebras is isomorphic to the slice
  category $(\phi F, \zeta^{-1})/\!\alg F$.
\end{prop}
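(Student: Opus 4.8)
The plan is to exhibit a pair of mutually inverse functors between the category of ffg-Bloom algebras and the slice category $(\phi F, \zeta^{-1})/\!\alg F$, and to observe that each composite is literally the identity. In one direction, given an ffg-Bloom algebra $(A, a, \dagger)$, Theorem~\ref{thm:Bloomini} provides a canonical morphism of ffg-Bloom algebras $h_A\colon \phi F \to A$, which in particular is an $F$-algebra morphism; I would send $(A,a,\dagger)$ to the object $h_A\colon (\phi F,\zeta^{-1}) \to (A,a)$ of the slice category. On morphisms: a morphism $f\colon (A,a,\dagger) \to (B,b,\ddagger)$ of ffg-Bloom algebras is an $F$-algebra morphism with $f \cdot c^\dagger = c^\ddagger$ for all $c$ in $\coafr F$; since $h_A$ is determined by $h_A \cdot \inj_c = c^\dagger$ and similarly for $h_B$, and the $\inj_c$ are jointly epimorphic, one checks $f \cdot h_A = h_B$, i.e.~$f$ is a morphism in the slice category. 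This assignment is clearly functorial.

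Conversely, given an object $g\colon (\phi F, \zeta^{-1}) \to (A,a)$ of the slice category, I would equip $(A,a)$ with the solution operation $c^\dagger := g \cdot \inj_c\colon TX \to A$ for every $c\colon TX \to FTX$ in $\coafr F$. The solution axiom follows by pasting the square expressing that $\inj_c$ is a coalgebra morphism into $(\phi F,\zeta)$ with the square expressing that $g$ is an $F$-algebra morphism; functoriality of $\dagger$ follows because the $\inj_c$ form a compatible cocone on $D\colon \coafr F \subto \coa F$, so $\inj_d \cdot m = \inj_c$ for any coalgebra morphism $m\colon (TX,c)\to(TY,d)$, whence $c^\dagger = g\cdot\inj_c = g\cdot\inj_d\cdot m = d^\dagger \cdot m$. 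A morphism in the slice category from $g$ to $g'$ is an $F$-algebra morphism $f$ with $f \cdot g = g'$; then $f \cdot c^\dagger = f \cdot g \cdot \inj_c = g' \cdot \inj_c = c^{\dagger'}$, so $f$ preserves solutions and is a morphism of ffg-Bloom algebras. Again this is evidently functorial.

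It then remains to check that the two functors are mutually inverse, and here the key point is the uniqueness clause of Theorem~\ref{thm:Bloomini}. Starting from a slice object $g$, one forms the ffg-Bloom algebra with $c^\dagger = g\cdot\inj_c$ and then takes the canonical morphism $h_A$ of Theorem~\ref{thm:Bloomini}; but $g$ itself satisfies $g\cdot\inj_c = c^\dagger$, so by uniqueness $h_A = g$. Starting from an ffg-Bloom algebra $(A,a,\dagger)$, passing to $h_A$ and back produces the solution operation $c \mapsto h_A\cdot\inj_c = c^\dagger$, recovering the original $\dagger$ on the nose. Since both round-trips are identities on objects and the actions on morphisms agree with the respective identities, the two categories are isomorphic.

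The only mild subtlety — the ``main obstacle'', such as it is — lies in verifying that the solution operation $c^\dagger := g\cdot\inj_c$ genuinely satisfies the solution axiom, since one must use that $\zeta$ is an isomorphism (guaranteed by Theorem~\ref{thm:phifix}, which applies because $F$ preserves sifted colimits by the standing assumption of this section) to relate the coalgebra structure $\zeta$ on $\phi F$ used to express that $\inj_c$ is a coalgebra morphism with the algebra structure $\zeta^{-1}$ used to express that $g$ is an $F$-algebra morphism. Once this compatibility is spelled out the rest is bookkeeping with jointly epimorphic families, and there is no real difficulty.
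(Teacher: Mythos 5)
Your proposal is correct and follows essentially the same route as the paper: both directions (initiality of $\phi F$ via Theorem~\ref{thm:Bloomini} one way, $c^\dagger := g\cdot\inj_c$ the other), the same verification of the solution axiom by pasting the coalgebra-morphism square for $\inj_c$ with the algebra-morphism square for $g$, and the same use of the uniqueness clause of initiality to see that the two identity-on-morphisms functors are mutually inverse on objects. No gaps.
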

\begin{proof}
  (1)~Given an ffg-Bloom algebra $(A, a, \dagger)$, initiality of $\phi F$
  provides an $F$-algebra morphism $h: \phi F \to A$, i.e.~an object of the slice
  category. Moreover, this object assignment clearly gives rise to a
  functor using the initiality of $\phi F$. 

  (2)~In the reverse direction, suppose we are given any $F$-algebra $(A,a)$ and
  $F$-algebra morphism $h: (\phi F, \zeta^{-1}) \to (A,a)$. Then we
  define for every $c: TX \to FTX$ in $\coafr F$, 
  \[
    \sol c = (TX \xrightarrow{\inj_c} \phi F 
    \xrightarrow{h} A).
  \]
  Then using diagram~\eqref{eq:algmor} we see that $\sol c$ satisfies
  the solution axiom: indeed, the outside of the diagram commutes
  since all its inner parts do. Moreover, functoriality of $\dagger$
  follows from that of $\ddagger$: given any $m: (TX,c) \to (TY,d)$ in
  $\coafr F$ we have
  \[
    \sol d \cdot m = h \cdot \inj_d \cdot m = h \cdot \inj_c = \sol c.
  \]
  Furthermore, given a morphism in the slice category, i.e.~we have
  $h: (\phi F, \zeta^{-1}) \to (A,a)$, $g: (\phi F, \zeta^{-1}) \to
  (B,b)$ and $m: (A,a) \to (B,b)$ such that $m \cdot h = g$, we see
  that $m$ is a morphism of ffg-Bloom algebras from $(A,a,\dagger)$ to
  $(B,b,\ddagger)$, where $c^\ddagger: TX
  \to B$ is defined as $g \cdot \inj_c$: indeed, $m$ is an
  $F$-algebra morphism and we have
  \[
    m \cdot \sol c = m \cdot h \cdot \inj_c = g \cdot \inj_c = c^\ddagger. 
  \]
  That this gives a functor from the slice category to the category of
  ffg-Bloom algebras is again straightforward. 

  (3)~We have defined two identity-on-morphisms functors and it
  remains to show that they are mutually inverse on objects.

  From ffg-Bloom algebras to the slice category and back we form for
  the given ffg-Bloom algebra $(A,a,\dagger)$ the ffg-Bloom algebra
  $(A,a,\ddagger)$ where $c^\ddagger = h \cdot \inj_c$ for the unique
  morphism $h: \phi F \to A$ of ffg-Bloom algebras. Hence, since $h$
  preserves solutions we thus have $c^\ddagger = h \cdot \inj_c = \sol
  c$ for every $c: TX \to FTX$ in $\coafr F$. 

  From the slice category to ffg-Bloom algebras and back we take for a
  given $F$-algebra morphism $h: (\phi F, \zeta^{-1}) \to (A,a)$ the
  Bloom algebra $(A,a,\dagger)$ with $\sol c = h \cdot \inj_c$, which
  shows that $h$ is a morphism of ffg-Bloom algebras. Thus, going back
  to the slice category we get back to $h$. 
\end{proof}

\section{Proper Functors and Full Abstractness of \texorpdfstring{$\phi F$}{phi F}}
\label{sec:prop}

In this section we are going to investigate when the three left-hand
fixed points in~\eqref{diag:hierarchy} collapse to one,
i.e.~$\phi F \cong \rho F \cong \theta F$. We introduce proper
functors and show that a functor is proper if and only if $\phi F$ is
fully abstract, i.e.~a subcoalgebra of the final one. This also
entails that the rational fixed point $\rho F$ is fully abstract and at
the same time it is determined by the coalgebras with free finitely
generated carrier. More precisely, the finality of a given locally fp
coalgebra for $F$ can be established by checking the universal
property only for the coalgebras in $\coafr F$
(Corollary~\ref{cor:final}). Here we continue to work under
Assumptions~\ref{ass:ass}.

\begin{rem}
  \label{rem:zigzag}
  \begin{enumerate}
  \item Recall that a \emph{zig-zag} in a category $\A$ is a diagram of the
    form
    \iffull
    \[
      \xymatrix{
        Z_0 \ar[rd]_{f_0} && Z_2 \ar[ld]^{f_1} \ar[rd]_{f_2} &&
        % \hspace*{1em}
        \cdots
        \ar[ld]^{f_3}
        %&
        %\ar@{}[d]|(.6){\objectstyle\cdots}
        %&
        \ar[rd]_{f_{n-2}}
        &&
        Z_{n} 
        \ar[ld]^{f_{n-1}}
        \\
        & Z_1 && Z_3
        &
        \cdots
        % & &
        & Z_{n-1}
      }
    \]
    \else
    \[
      Z_0 \xrightarrow{f_0} Z_1 \xleftarrow{f_1} Z_2 \xrightarrow{f_2} 
      Z_3 \xleftarrow{f_3} \cdots 
      \xrightarrow{f_{n-2}} Z_{n-1} \xleftarrow{f_{n-1}} Z_n.
    \]
    \fi For $\A = \Set^T$, we say that the zig-zag \emph{relates}
    $z_0 \in Z_0$ and $z_n \in Z_n$ if there exist $z_i \in Z_i$,
    $i = 1, \ldots, n-1$ such that $f_i(z_i) = z_{i+1}$ for $i$ even
    and $f_i(z_{i+1}) = z_i$ for $i$ odd.
  \item  \'Esik and Maletti~\cite{em_2010} introduced the notion of a
  proper semiring in order to obtain the decidability of the
  (language) equivalence of weighted automata. A semiring $\S$ is
  called \emph{proper} provided that for every two $\S$-weighted automata
  $A$ and $B$ whose initial states $x$ and $y$, respectively, accept the
  same weighted language there exists a zig-zag 
  \iffull
  \[
    \xymatrix{
      A = M_0 \ar[rd] && M_2 \ar[ld] \ar[rd] &&
      % \hspace*{2em} \ar@{}[d]|(.6){\objectstyle\cdots}
      \cdots
      \ar[ld]\ar[rd] && M_{n} = B
      \ar[ld] \\
      & M_1 && M_3 & \cdots & M_{n-1}
    }
  \]
  \else
  \[
    A = M_0 \to M_1 \ot M_2 \to M_3 \ot \cdots \to M_{n-1} \ot M_n = B
  \]
  \fi
  of simulations that \emph{relates} $x$
  and $y$. Recall here that a \emph{simulation} from a weighted
  automaton $(i, (M^a)_a{a \in A}, o)$ with $n$ states to another one
  $(j, (N^a)_{a \in A}, p)$ with $m$ states is an $\S$-semimodule
  morphism represented by an $n \times m$ matrix $H$ over $\S$ such
  that $i \cdot H = j$, $o \cdot H = p$ and $M_a \cdot H = H \cdot
  N_a$. 

  \'Esik and Maletti show that every Noetherian semiring is proper as well
  as the semiring $\Nat$ of natural numbers, which is not
  Noetherian. However, the tropical semiring $(\Nat \cup \{\infty\},
  \min, +, \infty, 0)$ is not proper.
\end{enumerate}
\end{rem}

\medskip Recall from Example~\ref{ex:leading} that $\S$-weighted
automata with input alphabet $\Sigma$ are equivalently coalgebras with
carrier $\S^n$, where $n \geq 1$ is the number of states, for the
functor $FX = \S \times X^\Sigma$ on the category $\smod$. Note that
the $\S^n$ are precisely the free finitely generated $\S$-semimodules,
whence $\S$-weighted automata are precisely the coalgebras in
$\coafr F$, which explains why we are interested in collecting
precisely their behaviour in the form of the fixed point $\phi
F$. Moreover, since simulations of $\S$-weighted automata are clearly
in one to one correspondence with $F$-coalgebra morphisms, one easily
generalizes the notion of a proper semiring as follows. Recall that
$\eta_X: X \to TX$ denotes the unit of the monad $T$.

\begin{defi}\label{dfn:proper}
  We call the functor $F: \A \to \A$ \emph{proper} whenever for every pair of
  coalgebras $c: TX \to FTX$ and $d: TY \to FTY$ in $\coafr F$ and
  every $x \in X$ and $y \in Y$ such that $\eta_X(x) \sim \eta_Y(y)$
  are behaviourally equivalent there exists a zig-zag in $\coafr F$
  relating $\eta_X(x)$ and $\eta_Y(y)$.
\end{defi}

\begin{exa}
  \label{ex:srng}
  A semiring $\S$ is proper iff the functor $FX = \S \times X^\Sigma$
  on $\smod$ is proper for every input alphabet $\Sigma$. We know that
  Noetherian semirings are proper (cf.~Example~\ref{ex:fp=fg}.3), and
  the semiring $\Nat$ of natural numbers is proper. Recently, Sokolova
  and Woracek~\cite{SokolovaW17} have shown that the non-negative
  rationals $\Rat_+$ and non-negative reals $\Real_+$ form proper
  semirings.
\end{exa}

\begin{exa}
  Constant functors are always proper. Indeed, suppose that $F$ is the
  constant functor on some algebra $A$. Then we have $\nu F = A$, and
  for any $F$-coalgebra $B$ its coalgebra structure $c: B \to FB = A$
  is also the unique $F$-coalgebra morphism from $B$ to $\nu F = A$. 

  Now given any $c: TX \to FTX = A$ and $d: TY \to FTY = A$ and
  $x \in TX$, $y \in TY$ as in Definition~\ref{dfn:proper}. Then
  $\eta_X(x) \sim \eta_Y(y)$ is equivalent to
  $c(\eta_X(x)) = d(\eta_Y(y))$. Let $a$ be this element of $A$, and
  extend $x: 1 \to X$, $y: 1 \to Y$ and $a: 1 \to A$ to $T$-algebra
  morphisms $Tx: T1 \to TX$, $Ty: T1 \to TY$ and

  $\kl a: T1 \to A = FT1$ (the latter yielding an $F$-coalgebra). Then 
  \[
    \xymatrix{
      TX \ar@{=}[rd] && T1 
      \ar[ld]_{Tx} \ar[rd]^{Ty} && TY \ar@{=}[ld] \\
      & TX && TY
      }
  \]
  is the required zig-zag in $\coafr F$ relating $\eta_X(x)$ and $\eta_Y(y)$. 
\end{exa}
\begin{exa}\label{ex:PCAprop}
  Sokolova and Woracek~\cite{SokolovaW17} have recently proved that
  the functor $FX = [0,1] \times X^\Sigma$ on the category $\PCA$ of positively
  convex algebras (see Example~\ref{ex:fp=fg}.4) is proper. In addition, its subfunctor $\hat F$
  given by
  \begin{align*}
    \hat F X = \{ (o, f) \in [0,1] \times X^\Sigma \mid\  
    & 
      \forall s \in \Sigma: \exists p_s\in [0,1], x_s \in X:
      \\
    & o + \sum\limits_{s\in \Sigma} p_s \leq 1, f(s) = p_sx_s\}
  \end{align*}
  is proper. The latter functor was used as coalgebraic type functor
  for the axiomatization of probabilistic systems
  in~\cite{SilvaS11}. In fact, the completeness proof of the
  expression calculus in loc.~cit.~makes use of our
  Corollary~\ref{cor:final} below.
\end{exa}

In general, it seems to be non-trivial to establish that a given
functor is proper (even for the identity functor this may fail; in the
light of Theorem~\ref{thm:main} below this follows from
Example~\ref{ex:nonproper}(1)). However, we will provide in
Proposition~\ref{prop:cong} sufficient conditions on $\A$ and $F$ the
entail properness using our main result:

\begin{thm}\label{thm:main}
  The functor $F$ is proper iff the coalgebra $\phi F$ is a subcoalgebra of $\nu F$. 
\end{thm}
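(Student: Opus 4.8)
The plan is to work entirely with the canonical $F$-coalgebra morphism $m\colon \phi F \to \nu F$ from Notation~\ref{not:h}; since $\A = \Set^T$, saying that $\phi F$ is a subcoalgebra of $\nu F$ means precisely that $m$ is monic, i.e.\ injective. First I would record a concrete description of $\phi F$: the forgetful functor $\coa F \to \A$ creates colimits, the category $\coafr F$ is sifted (Lemma~\ref{lem:coprod} together with Remark~\ref{rem:sifted}(\ref{rem:sifted3})), and the forgetful functor $\Set^T \to \Set$ preserves sifted colimits; hence the underlying set of $\phi F$ is the colimit in $\Set$ of the diagram $\coafr F \subto \coa F \to \Set$. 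Consequently every element of $\phi F$ is of the form $\inj_c(u)$ for some $c\colon TX \to FTX$ in $\coafr F$ and some $u\in TX$, and $\inj_c(u) = \inj_d(v)$ holds in $\phi F$ if and only if $u$ and $v$ are related by a zig-zag in $\coafr F$ in the sense of Remark~\ref{rem:zigzag}. Moreover, since colimit injections and $m$ are coalgebra morphisms into $\nu F$, finality yields $m\cdot\inj_c = \fin c$ for every $(TX,c)$ in $\coafr F$, and for $u\in TX$, $v\in TY$ (considered in $(TX,c)$ and $(TY,d)$ respectively) one has $u\sim v$ iff $\fin c(u) = \fin d(v)$.

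Given this setup, the direction ``$m$ monic $\Rightarrow$ $F$ proper'' is immediate: if $c,d$ are in $\coafr F$ with generators $x\in X$, $y\in Y$ and $\eta_X(x)\sim\eta_Y(y)$, then $m(\inj_c(\eta_X x)) = \fin c(\eta_X x) = \fin d(\eta_Y y) = m(\inj_d(\eta_Y y))$, so injectivity of $m$ forces $\inj_c(\eta_X x) = \inj_d(\eta_Y y)$ in $\phi F$, and the concrete description of $\phi F$ then exhibits a zig-zag in $\coafr F$ relating $\eta_X(x)$ and $\eta_Y(y)$. Hence $F$ is proper.

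For the converse, ``$F$ proper $\Rightarrow$ $m$ monic'', I would take $p,q\in\phi F$ with $m(p)=m(q)$, write $p=\inj_c(u)$, $q=\inj_d(v)$ as above, and then reduce the representing elements $u,v$ to \emph{generators}, since Definition~\ref{dfn:proper} only speaks of elements of the form $\eta_X(x)$. The reduction step adjoins a fresh generator: pass from $X$ to $X+1$ and let $e\colon T(X+1)\epito TX$ be the $T$-algebra morphism that is $\eta_X$ on the generators coming from $X$ and sends the new generator $\ast$ to $u$ (it is split epi via $T\inl$, hence surjective). Since $F$ preserves surjective $T$-algebra morphisms and $T(X+1)$ is free finitely generated, hence regular projective (Remark~\ref{rem:proj}(\ref{rem:proj2})), the morphism $c\cdot e$ lifts along the surjection $Fe$ to a coalgebra structure $c'\colon T(X+1)\to FT(X+1)$ making $e$ a morphism $(T(X+1),c')\to(TX,c)$ in $\coafr F$, so that $\inj_{c'}(\eta_{X+1}(\ast)) = \inj_c(e(\eta_{X+1}(\ast))) = \inj_c(u) = p$. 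Doing the same for $q$, we may assume $p=\inj_c(\eta_X x)$ and $q=\inj_d(\eta_Y y)$ with $x,y$ genuine generators. Then $m(p)=m(q)$ gives $\fin c(\eta_X x)=\fin d(\eta_Y y)$, i.e.\ $\eta_X(x)\sim\eta_Y(y)$; properness supplies a zig-zag in $\coafr F$ relating them, and chaining the colimit injections along it forces $p=\inj_c(\eta_X x)=\inj_d(\eta_Y y)=q$. Thus $m$ is injective and $\phi F$ is a subcoalgebra of $\nu F$.

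The hard part will be exactly the reduction to generators in the converse: properness is only about images of the monad unit, whereas a general element of $\phi F$ is represented by an arbitrary term $u\in TX$, so one must manufacture a coalgebra in $\coafr F$ carrying a generator that is sent to $\inj_c(u)$ — the fresh-generator construction above, relying on projectivity of free algebras and on $F$ preserving surjections, is what makes this work. A secondary point deserving care (but needing no extra hypotheses) is the justification that the underlying set of $\phi F$ genuinely is the colimit in $\Set$, with elements and equalities as described above; this uses only siftedness of $\coafr F$ and preservation of sifted colimits by $\Set^T\to\Set$, and in particular does \emph{not} require $F$ itself to preserve sifted colimits.
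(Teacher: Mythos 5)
Your proof is correct, and its overall strategy coincides with the paper's: both directions hinge on the identity $m\cdot\inj_c=\fin c$, on the description of equality in $\phi F$ via zig-zags in $\coafr F$ (the paper's Lemma~\ref{lem:zigzag}, which rests exactly on the siftedness and $\Set$-level computation you invoke), and on a reduction of arbitrary elements of $\phi F$ to elements represented by generators. The only real difference is how that reduction is achieved. The paper proves Lemma~\ref{lem:fac} using perfect presentability of free finitely generated algebras (Remark~\ref{rem:pp}) to factor a point $1\to\phi F$ through some colimit injection, followed by a fresh-variable trick in which the coalgebra structure on $T(X+Z)$ is defined directly via the explicit splitting $T\inr$. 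You instead obtain joint surjectivity of the colimit injections for free from the $\Set$-description of the sifted colimit, and then perform the fresh-generator step by lifting $c\cdot e$ along the surjection $Fe$ using regular projectivity of $T(X+1)$ and the assumption that $F$ preserves surjective $T$-algebra morphisms. Both routes stay within Assumptions~\ref{ass:ass}; yours avoids citing perfect presentability, while the paper's version of the trick does not even need $F$ to preserve surjections there, since the structure map is defined outright from the splitting (you could do the same with $T\inl$, making your projectivity appeal unnecessary). Your closing remark that none of this requires $F$ to preserve sifted colimits matches the paper, where Theorem~\ref{thm:main} is indeed proved under Assumptions~\ref{ass:ass} alone.
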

The latter condition states that the unique coalgebra morphism $m: \phi F
\to \nu F$ is a monomorphism in $\A$. 

We present the proof of this theorem in Section~\ref{sec:proof}. Here we
continue with a discussion of the consequences of this result.

\begin{cor}
  \label{cor:Crat}
  If $F$ is proper, then $\phi F$ is the rational fixed point of $F$. 
\end{cor}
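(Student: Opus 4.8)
The plan is to show that the canonical coalgebra morphism $h: \phi F \to \rho F$ from Notation~\ref{not:h} is an isomorphism; since $h$ lives in $\coa F$, it is then automatically an isomorphism of $F$-coalgebras, which is precisely the assertion. One half of this I already have from Proposition~\ref{prop:hepi}: $h$ is a strong epimorphism in $\A$. So it remains only to prove that $h$ is a monomorphism and then invoke the standard fact that a morphism which is simultaneously a strong epi and a mono is an isomorphism. (Concretely: apply the diagonal fill-in property of the strong epi $h$ to the commutative square carrying $h$ on both its top and bottom edge and the identities on the vertical edges; the resulting diagonal $\rho F \to \phi F$ is a two-sided inverse of $h$.)

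To see that $h$ is monic I would use properness through Theorem~\ref{thm:main}: since $F$ is proper, the unique coalgebra morphism $m: \phi F \to \nu F$ into the final coalgebra is a monomorphism in $\A$. Moreover $m$ factors through $h$: because $\nu F$ is the final coalgebra, the composite of $h$ with the unique coalgebra morphism $\rho F \to \nu F$ is again the unique coalgebra morphism $\phi F \to \nu F$, i.e.~it equals $m$. A morphism $h$ whose composite with some further morphism is monic must itself be monic (if $hu = hv$, then post-composing with $\rho F \to \nu F$ gives $mu = mv$, whence $u = v$). Hence $h$ is a monomorphism, therefore an isomorphism by the previous paragraph, and so $\phi F \cong \rho F$.

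The only genuinely substantial ingredient here is Theorem~\ref{thm:main}, whose proof is deferred to Section~\ref{sec:proof}; granting it, the corollary is pure diagram chasing together with the elementary fact ``strong epi $+$ mono $=$ iso'' that holds in any category. The one place to be slightly careful is the factorization $m = (\rho F \to \nu F) \cdot h$, but this is immediate from the finality of $\nu F$ and the observation that $h$ is itself an $F$-coalgebra morphism.
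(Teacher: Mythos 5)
Your argument is correct and is essentially the paper's own proof: factor the unique morphism $m:\phi F\to\nu F$ as $u\cdot h$ by finality of $\nu F$, deduce from Theorem~\ref{thm:main} that $m$ is monic and hence so is $h$, and conclude with Proposition~\ref{prop:hepi} that $h$, being both a strong epi and a mono, is an isomorphism. No gaps; the extra remark on the diagonal fill-in merely spells out the standard ``strong epi $+$ mono $=$ iso'' fact that the paper uses implicitly.
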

\begin{proof}
  Let $u: \rho F \to \nu F$ be the unique $F$-coalgebra morphism. Then
  we have a commutative triangle of $F$-coalgebra morphisms due to
  finality of $\nu F$:
  \iffull
  \[
    \xymatrix{
      \phi F \ar@{->>}[r]^-h 
      & 
      \rho F \ar[r]^-{u} 
      & \nu F.
      \ar@{<->} `u[l] `[ll]_-m [ll]
      }
  \]
  \else
  $
    m = (\phi F \overset{h}{\epito} \rho F \overset{u}{\to} \nu F).
  $
  \fi
  Since $F$ is proper, $m$ is a monomorphism in $\A$, hence so is
  $h$. Since $h$ is also a strong epimorphism by
  Proposition~\ref{prop:hepi}, it is an isomorphism. Thus, $\phi F \cong
  \rho F$ is the rational fixed point of $F$.
\end{proof}

\begin{cor}
  \label{cor:fp=fg}
  Suppose that $F$ preserves non-empty monomorphisms. Then the functor
  $F$ is proper iff
  $\phi F \cong \rho F \cong \theta F \monoto \nu F$.
\end{cor}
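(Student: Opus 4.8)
The plan is to reduce the claimed equivalence to Theorem~\ref{thm:main} and then to unwind the factorization picture from Corollary~\ref{cor:hepi}, using repeatedly that in an algebraic category a morphism which is simultaneously a strong epimorphism and a monomorphism is an isomorphism (it is both surjective and injective). Concretely, I would first establish the reformulation
\[
  \phi F \cong \rho F \cong \theta F \monoto \nu F
  \qquad\Longleftrightarrow\qquad
  m\colon \phi F \to \nu F \text{ is a monomorphism},
\]
and then simply chain it with Theorem~\ref{thm:main}, which says that the right-hand side holds if and only if $F$ is proper.

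For the direction ($\Rightarrow$) of the reformulation: if the three fixed points are isomorphic via the canonical morphisms and $\theta F \monoto \nu F$, then composing the mono $\theta F \monoto \nu F$ with the isomorphism $\phi F \cong \theta F$ yields a monic $F$-coalgebra morphism $\phi F \to \nu F$; by finality of $\nu F$ this morphism must coincide with $m$, so $m$ is monic. For the direction ($\Leftarrow$): assume $m$ is monic. Since $F$ preserves non-empty monos, Corollary~\ref{cor:hepi} gives the factorization
\[
  m = \big(\phi F \xrightarrow{\;h\;} \rho F \xrightarrow{\;e\;} \theta F \xrightarrow{\;n\;} \nu F\big),
\]
where $h$ is a strong epi by Proposition~\ref{prop:hepi}, $e$ is a strong epi by (the paragraph preceding) Theorem~\ref{thm:rhotheta}, and $n$ is a mono. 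From $m = n\cdot e\cdot h$ monic it follows that $h$ is monic, hence $h$ is an isomorphism; then $n\cdot e = m\cdot h^{-1}$ is monic, so $e$ is monic, hence $e$ is an isomorphism. Thus $\phi F \cong \rho F \cong \theta F$, and $\theta F \monoto \nu F$ holds (either because $n$ is monic, or directly by Theorem~\ref{thm:lffsub}). Putting the reformulation together with Theorem~\ref{thm:main} yields the corollary.

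I do not expect a genuine obstacle here — this is an assembly of earlier results — but there is one point to state carefully: the real content of the right-hand condition is the triple isomorphism, since $\theta F \monoto \nu F$ already holds under the standing hypothesis by Theorem~\ref{thm:lffsub}. Consequently the "monic $m$" reformulation must be phrased in terms of $\phi F$ (the coalgebra that need not embed into $\nu F$ a priori), not in terms of $\theta F$; once this is observed, the equivalence with Theorem~\ref{thm:main} is immediate.
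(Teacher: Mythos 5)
Your proof is correct and follows essentially the same route as the paper: both directions reduce to Theorem~\ref{thm:main} via the factorization $\phi F \epito \rho F \epito \theta F \monoto \nu F$ from Corollary~\ref{cor:hepi}, with ``strong epi $+$ mono $=$ iso'' applied to $h$ and then to $\rho F \epito \theta F$. The only cosmetic difference is that the paper cites Corollary~\ref{cor:Crat} for the step $\phi F \cong \rho F$, whose proof you have simply inlined.
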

\begin{proof}
  If the three fixed points are isomorphic, then $F$ is proper by
  Theorem~\ref{thm:main}.

  Conversely, since $F$ preserves non-empty monomorphisms, we have the
  situation displayed in~\eqref{diag:hierarchy} (see
  Corollary~\ref{cor:hepi}). Now if $F$ is proper we know from
  Corollary~\ref{cor:Crat} that $\phi F \cong \rho F$. Thus, $\rho F$
  is a subcoalgebra of $\nu F$, i.e.~the composition of the last two
  morphisms in~\eqref{diag:hierarchy} is a monomorphism. Thus, so is
  $\rho F \epito \theta F$. Since this is also a strong epimorphism,
  we conclude that $\rho F \cong \theta F$. 
\end{proof}
%Indeed, this follows from Corollary~\ref{cor:Crat} and
%Theorem~\ref{thm:rhotheta}. 
%
Note that this result also entails full abstractness of
$\phi F \cong \rho F$.

A key result for establishing soundness and completeness of coalgebraic regular expression calculi is the following corollary (cf.~\cite[Corollary~3.36]{bms13} and its applications in Sections~4 and 5 of loc.~cit.).
\begin{cor}
  \label{cor:final}
  Suppose that $F$ is proper. Then an $F$-coalgebra $(R, r)$ is a
  final locally fp coalgebra if and only if $(R,r)$ is locally fp and for every
  coalgebra $(TX, c)$ in $\coafr F$ there exists a unique
  $F$-coalgebra morphism from $TX$ to $R$.
\end{cor}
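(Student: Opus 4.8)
The plan is to deduce the statement from Corollary~\ref{cor:Crat}: since $F$ is proper, $\phi F$ is (isomorphic to) the rational fixed point $\rho F$ and hence \emph{the} final locally fp coalgebra; moreover, being a final object of the category of locally fp coalgebras, $\phi F$ is itself locally fp, and it carries the colimit cocone $\inj_c\colon TX\to\phi F$ over $\coafr F$. The ``only if'' direction is then formal and does not even use properness: a final locally fp coalgebra $(R,r)$ is in particular locally fp, and every $(TX,c)$ in $\coafr F$ has a finitely presentable carrier, hence is a locally fp coalgebra, so finality of $(R,r)$ supplies a unique coalgebra morphism $TX\to R$.

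For the ``if'' direction, assume $(R,r)$ is locally fp and, for each $c\colon TX\to FTX$ in $\coafr F$, let $g_c\colon TX\to R$ be the unique coalgebra morphism. First I would check that the $g_c$ form a compatible cocone over the inclusion $\coafr F\subto\coa F$: for a morphism $m\colon (TX,c)\to(TY,d)$ in $\coafr F$ the composite $g_d\cdot m$ is again a coalgebra morphism $TX\to R$, so $g_d\cdot m=g_c$ by uniqueness. This induces a unique coalgebra morphism $g\colon\phi F\to R$ with $g\cdot\inj_c=g_c$ for all $c$. Conversely, since $R$ is locally fp and $\phi F$ is the final locally fp coalgebra, there is a unique coalgebra morphism $k\colon R\to\phi F$, and since $\phi F$ is itself locally fp the composite $k\cdot g\colon\phi F\to\phi F$ must be $\id_{\phi F}$. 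Thus $g$ is a split monomorphism of coalgebras, and it remains to prove $g\cdot k=\id_R$ --- this is the main obstacle.

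To settle it I would combine the local presentability of $R$ with the fact, established in the proof of Proposition~\ref{prop:hepi}, that every coalgebra in $\coaf F$ is a regular quotient of a coalgebra in $\coafr F$. Write $(R,r)=\colim_{i\in\mathcal I}(A_i,a_i)$ as a colimit of a filtered diagram of coalgebras with finitely presentable carriers, with colimit injections $\iota_i\colon A_i\to R$ forming a jointly epimorphic family (Lemma~\ref{lem:strepi}). For each $i$ pick some $(TX_i,c_i)$ in $\coafr F$ and a surjective coalgebra morphism $q_i\colon TX_i\epito A_i$. Then $\iota_i\cdot q_i$ and $g\cdot k\cdot\iota_i\cdot q_i$ are both coalgebra morphisms $TX_i\to R$, hence equal by the uniqueness hypothesis; cancelling the epimorphism $q_i$ yields $g\cdot k\cdot\iota_i=\iota_i$, and since the $\iota_i$ are jointly epic we obtain $g\cdot k=\id_R$. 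Therefore $g$ is an isomorphism of coalgebras, so $(R,r)\cong\phi F$ is a final locally fp coalgebra. All the content is in this last step; the rest is routine manipulation of colimit cocones.
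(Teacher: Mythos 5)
Your proof is correct, but it is organized differently from the paper's. The paper verifies the universal property of $(R,r)$ directly: it first reduces finality among all locally fp coalgebras to coalgebras in $\coaf F$ (by writing an arbitrary locally fp coalgebra as a filtered colimit of a diagram in $\coaf F$ and gluing the unique morphisms), and then, for $a: A \to FA$ in $\coaf F$, obtains existence by composing the colimit injection $\ha a: A \to \rho F$ with the cocone-induced morphism $m': \phi F \cong \rho F \to R$, and uniqueness by presenting $A$ as a quotient $q: TX \epito A$ of a coalgebra in $\coafr F$ and cancelling the epi against the uniqueness hypothesis. You instead prove $R \cong \phi F$ outright: the cocone-induced $g: \phi F \to R$, the finality morphism $k: R \to \phi F$ (which exploits the hypothesis that $R$ is locally fp), $k \cdot g = \id$ by finality of $\phi F$, and $g \cdot k = \id$ by running the same quotient-plus-epi-cancellation trick on a filtered colimit presentation of $R$ itself. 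The core ingredients coincide (Corollary~\ref{cor:Crat}, the fact from the proof of Proposition~\ref{prop:hepi} that every fp-carried coalgebra is a surjective quotient of one in $\coafr F$, joint epicness of colimit injections), so neither route is deeper than the other; the paper's version has the mild advantage of exhibiting the unique morphism out of an arbitrary locally fp coalgebra explicitly and of using the locally-fp hypothesis on $R$ only to make the finality claim meaningful, while your version is a clean ``uniqueness of final objects'' argument that leans more heavily on that hypothesis (both for $k$ and for the colimit presentation of $R$) but avoids re-proving finality coalgebra by coalgebra.
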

\begin{proof}
  The implication ``$\Rightarrow$'' clearly holds

  For ``$\Leftarrow$'' it suffices to prove that for every
  $a: A \to FA$ in $\coaf F$ there exists a unique $F$-coalgebra
  morphism from $A$ to $R$. In fact, it then follows that $R$ is the
  final locally fp coalgebra. To see this write an arbitrary locally fp coalgebra
  $A$ as a filtered colimit of a diagram $D: \D \to \coaf F \subto
  \coa F$ with colimit
  injections $h_d: Dd \to A$ ($d$ an object in $\D$). Then the
  unique $F$-coalgebra morphisms $u_d: Dd \to R$ form a compatible
  cocone, and so one obtains a unique $u: A \to R$ such that
  $u \cdot h_d = u_d$ holds for every object $d$ of $\D$. It is now
  straightforward to prove that $u$ is a unique $F$-coalgebra morphism
  from $A$ to $R$.

  Now let $a: A \to FA$ be a coalgebra in $\coaf F$. For every
  $(TX,c)$ in $\coafr F$ denote by $c^\ddagger: TX \to R$ the unique
  $F$-coalgebra morphism that exists by assumption. These morphisms $c^\ddagger$
  form a compatible cocone of the diagram $\coafr F \subto \coa
  F$. Thus, we obtain a unique $F$-coalgebra morphism
  $m': \phi F \cong \rho F \to R$ such that the following diagram
  commutes for every $c: TX \to FTX$ in $\coafr F$:
  \[
    \xymatrix{
      TX
      \ar[d]_{\inj_c} \ar[rd]_(.4){\ha c} \ar[rrd]^{c^\ddagger} \\
      \phi F \ar@{=}[r]_-{\cong} & \rho F \ar[r]_{m'} & R
    }
  \]
  Therefore we have an $F$-coalgebra morphism 
  \[
    h = (A \xrightarrow{\ha a} \rho F \xrightarrow{m'} R). 
  \]
  To prove it is unique, assume that $g: A \to R$ is any $F$-coalgebra
  morphism. As in the proof of Proposition~\ref{prop:hepi}, we know
  that $A$ is the quotient of some $TX$ in $\coafr F$ via $q:
  TX \epito A$, say. Then we have 
  \iffull\[\else\/$\fi
    m' \cdot \ha a \cdot q = g \cdot q
  \iffull\]\else\/$ \fi
  because there is only one $F$-coalgebra morphism from $TX$ to $R$ by
  hypothesis. It follows that $h = m' \cdot \ha a = g$ since $q$ is
  epimorphic. 
\end{proof}

The next result provides sufficient conditions for properness of
$F$. It can be seen as a category-theoretic generalization of \'Esik's
and Maletti's result~\cite[Theorem~4.2]{em_2010} that Noetherian
semirings are proper.
\begin{prop}
  \label{prop:cong}
  Suppose that finitely generated algebras in $\A$ are closed under
  kernel pairs and that $F$ maps kernel pairs to weak pullbacks in
  $\Set$. Then $F$ is proper.
\end{prop}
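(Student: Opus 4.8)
The plan is to transpose \'Esik and Maletti's argument that Noetherian semirings are proper (\cite[Theorem~4.2]{em_2010}) into the present categorical setting. Fix coalgebras $c: TX \to FTX$ and $d: TY \to FTY$ in $\coafr F$ and elements $x \in X$, $y \in Y$ with $\eta_X(x) \sim \eta_Y(y)$. First I would reduce to behavioural equivalence inside a \emph{single} coalgebra: by Lemma~\ref{lem:coprod} the coproduct of $(TX,c)$ and $(TY,d)$ in $\coa F$ lies in $\coafr F$, with carrier $TZ$ for the finite set $Z = X+Y$, structure $e: TZ \to FTZ$, and coalgebra-morphism coproduct injections $T\inl: (TX,c) \to (TZ,e)$ and $T\inr: (TY,d) \to (TZ,e)$. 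Since a coalgebra morphism $h$ satisfies $\fin{(\cdot)} = \fin{(\cdot)}\cdot h$ into $\nu F$, and $\eta$ is natural, we obtain
\[
  \fin{e}(\eta_Z(\inl(x))) = \fin{c}(\eta_X(x)) = \fin{d}(\eta_Y(y)) = \fin{e}(\eta_Z(\inr(y))),
\]
where the middle equality uses $\eta_X(x)\sim\eta_Y(y)$.

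Next I would build a coalgebra in $\coafr F$ out of the kernel pair of $\fin{e}$. Let $\pi_1,\pi_2: P \to TZ$ be this kernel pair in $\A = \Set^T$; concretely $P$ is the subalgebra of $TZ\times TZ$ on the pairs identified by $\fin{e}$. Since its domain $TZ$ is finitely generated, the first hypothesis makes $P$ finitely generated, so there is a finite set $W$ and a surjective $T$-algebra morphism $q: TW \epito P$ (extending the inclusion of a finite generating set). I then claim there is a coalgebra structure $\gamma: TW \to FTW$ turning $\pi_1 q$ and $\pi_2 q$ into coalgebra morphisms into $(TZ,e)$. To produce it, apply $F$ to the kernel-pair square of $\fin{e}$; by the second hypothesis the result is a weak pullback in $\Set$. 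The morphisms $e\cdot\pi_1\cdot q$ and $e\cdot\pi_2\cdot q$ from $TW$ to $FTZ$ form a cone over its cospan, since $F\fin{e}\cdot e = t\cdot\fin{e}$ and $\fin{e}\cdot\pi_1 = \fin{e}\cdot\pi_2$; hence there is a map $g: TW \to FP$ in $\Set$ with $F\pi_i\cdot g = e\cdot\pi_i\cdot q$ for $i=1,2$. As $F$ preserves surjective $T$-algebra morphisms (Assumptions~\ref{ass:ass}), $Fq: FTW \to FP$ is surjective, so for each generator $w\in W$ I can choose a preimage under $Fq$ of $g(\eta_W(w))$; extending this choice freely gives a $T$-algebra morphism $\gamma: TW \to FTW$ with $Fq\cdot\gamma$ and $g$ agreeing on the generators $\eta_W(W)$. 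Then $F(\pi_i q)\cdot\gamma = F\pi_i\cdot Fq\cdot\gamma$ and $e\cdot\pi_i\cdot q$ are $T$-algebra morphisms $TW \to FTZ$ that agree on $\eta_W(W)$, hence coincide; thus $(TW,\gamma)\in\coafr F$ and $\pi_1 q,\pi_2 q$ are coalgebra morphisms.

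Finally I would assemble the zig-zag. Since $q$ is surjective and $(\eta_Z(\inl(x)),\eta_Z(\inr(y)))\in P$ by the first paragraph, pick $v\in TW$ with $q(v) = (\eta_Z(\inl(x)),\eta_Z(\inr(y)))$, so that $\pi_1 q(v) = \eta_Z(\inl(x))$ and $\pi_2 q(v) = \eta_Z(\inr(y))$. Then, using naturality of $\eta$ on the outer legs, the zig-zag of coalgebra morphisms in $\coafr F$
\[
  (TX,c) \xrightarrow{T\inl} (TZ,e) \xleftarrow{\pi_1 q} (TW,\gamma) \xrightarrow{\pi_2 q} (TZ,e) \xleftarrow{T\inr} (TY,d)
\]
relates $\eta_X(x)$ and $\eta_Y(y)$ via the intermediate elements $\eta_Z(\inl(x))$, $v$, $\eta_Z(\inr(y))$, so $F$ is proper. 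I expect the middle step to be the main obstacle: transferring the ``virtual'' coalgebra structure on the kernel pair $P$ to its free finitely generated cover $TW$. This is the only place where all three ingredients --- closure of finitely generated algebras under kernel pairs, $F$ sending kernel pairs to weak pullbacks in $\Set$, and $F$ preserving surjections --- come together, and one must check carefully that the extension $\gamma$ of the choices made on the generators really makes $\pi_1 q$ and $\pi_2 q$ coalgebra morphisms, which works precisely because two $T$-algebra morphisms out of a free algebra that agree on generators are equal.
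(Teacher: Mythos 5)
Your proof is correct and follows essentially the same route as the paper: form the coproduct coalgebra on $TZ$, take the (finitely generated) kernel pair, cover it by a free finitely generated algebra $TW$, use the weak-pullback hypothesis to equip $TW$ with a coalgebra structure making the two projections composed with the cover into morphisms of $\coafr F$, and read off the zig-zag. The only deviations are harmless simplifications: you take the kernel pair of $\fin e$ directly rather than of the epimorphic part of its (strong epi, mono)-factorization, so you never need the image coalgebra or preservation of monos by $F$, and you transfer the coalgebra structure to $TW$ by choosing preimages of generators along the surjection $Fq$ (using that $F$ preserves surjective $T$-algebra morphisms) instead of the paper's set-theoretic splitting of the cover.
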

\takeout{% Proof based on Appendix replaced by new one
\begin{proof}
  From Proposition~\ref{prop:coeq} and Corollary~\ref{cor:colim} we
  know that $\phi F \cong \rho F$. Furthermore, since $F$ maps kernel pairs
  to weak pullbacks in $\Set$ we see that $F$ preserves monomorphisms;
  indeed, $m: A \monoto B$ is a mono in $\A$ iff and only if its
  kernel pair is $\id_A,\id_A$. Thus $F\id_A, F\id_A$ form a weak
  pullback in $\Set$, which is in fact a pullback, whence $Fm$ is
  monomorphic.
  
  By Lemma~\ref{lem:fg=fp}, it follows that finitely generated objects
  are finitely presentable. Therefore, by Proposition~\ref{prop:sub},
  $\rho F$ and thus $\phi F$ is a subcoalgebra of $\nu F$, whence $F$ is
  proper by Theorem~\ref{thm:main}.
\end{proof}
}% end takeout
\begin{proof}
  First, since $F$ maps kernel pairs
  to weak pullbacks in $\Set$ we see that $F$ preserves monomorphisms;
  indeed, $m: A \monoto B$ is a mono in $\A$ iff and only if its
  kernel pair is $\id_A,\id_A$. Thus $F\id_A, F\id_A$ form a weak
  pullback in $\Set$, which is in fact a pullback, whence $Fm$ is
  monomorphic.

  Now let $(TX, c)$ and $(TY, d)$ be in $\coafr F$, $x \in X$ and $y \in
  Y$ such that $\fin c(\eta_X(x)) = \fin d(\eta_Y(y))$. It is our task
  to construct a zig-zag relating $\eta_X(x)$ and $\eta_Y(y)$.
  
  Form $Z = X + Y$ and let $e: TZ \to FTZ$ be the coproduct of the
  coalgebras $(TX, c)$ and $(TY,d)$ in $\coafr F$ (see
  Lemma~\ref{lem:coprod}). Take the factorization of
  $\fin e: TZ \to \nu F$ into a strong epi $q: TZ \epito A$ followed
  by a monomorphism $m: A \monoto \nu F$. Since $F$ preserves
  non-empty monos, we obtain a unique coalgebra structure
  $a: A \to FA$ such that $q$ and $m$ are coalgebra morphisms (see
  Remark~\ref{rem:fs}(\ref{i:fs2})). Now take the kernel pair
  $f,g: K \parallel TZ$ of $q$. Since $TZ$ and its quotient $A$ are
  finitely generated $T$-algebras, so is $K$ because finitely
  generated $T$-algebras are closed under taking kernel pairs by
  assumption. Now $F$ maps the kernel pair $f,g$ to a weak pullback
  $Ff, Fg$ of $Fq$ along itself in $\Set$. Thus, we have a map
  $k\colon K \to FK$ such that the diagram below commutes:
  \begin{equation}
    \label{diag:kappa}
    \vcenter{
    \xymatrix{
      K
      \ar@<-.5ex>[d]_f
      \ar@<.5ex>[d]^g
      \ar@{-->}[r]^-k
      &
      FK
      \ar@<.5ex>[d]^{Fg}
      \ar@<-.5ex>[d]_{Ff}
      \\
      TZ
      \ar[d]_q
      \ar[r]^-e
      &
      FTZ
      \ar[d]^{Fq}
      \\
      A
      \ar[r]_-a
      &
      FA
      }}
  \end{equation}
  Notice that we do not claim that $k$ is a $T$-algebra
  morphism. However, since $K$ is a finitely generated $T$-algebra,
  it is the quotient of some free finitely generated $T$-algebra $TR$
  via $p\colon TR \epito K$, say. Now we choose some splitting
  $s\colon K \to TR$ of $p$ in $\Set$, i.\,e., $s$ is a map such that
  $p \o s = \id$. Next we extend the map
  $r_0 = Fs \o k \o p \o \eta_R$ to a $T$-algebra morphism
  $r\colon TR \to FTR$; it follows that the outside of the diagram
  below commutes:
  \begin{equation}
    \label{diag:d}
    \vcenter{
      \xymatrix{
        R
        \ar[d]_{\eta_R}
        \ar[rd]^{r_0}
        \\
        TR
        \ar@{-->}[r]^-r
        \ar[d]_p
        &
        FTR
        \ar[d]^{Fp}
        \\
        K
        \ar[r]_-k
        &
        FK
      }
    }
  \end{equation}
  (Notice that to obtain $r$ we cannot simply use projectivity of $TR$
  since $k$ is not necessarily a $T$-algebra homomorphism.)

  We do not claim that this makes $p$ a coalgebra morphism (i.\,e., we
  do not claim the lower square in~\eqref{diag:d} commutes). However,
  $f\o p$ and $g \o p$ are coalgebra morphisms from $(TR, r)$ to
  $(TZ, e)$; in fact, to see that
  $$
  e \o (f \o p) =  F(f \o p) \o r
  $$
  it suffices that this equation of $T$-algebra morphisms holds
  when both sides are precomposed with $\eta_R$. To this end we compute
  $$
  \begin{array}{rcl@{\qquad}p{3cm}}
    e \o f \o p \o \eta_R & = & Ff \o k \o p \o \eta_R &
    see~\eqref{diag:kappa}, \\
    & = & Ff \o Fp \o r_0 & outside of~\eqref{diag:d},
    \\
    & = & Ff \o Fp \o r \o \eta_R & definition of $d$.
  \end{array}
  $$
  Similarly, $g \o p$ is a coalgebra morphism.

  Now consider the following zig-zag in $\coafr F$ (recall that the
  algebra $TZ$ is the coproduct of $TX$ and $TY$ with coproduct
  injections $T\inl$ and $T\inr$):
  \[
    \xymatrix{
      TX \ar[rd]_-{T\inl} 
      && 
      TR
      \ar[ld]_-{f \cdot p}
      \ar[rd]^-{g \cdot p}
      && 
      TY
      \ar[ld]_-{T\inr}
      \\
      & TZ && TZ 
      }
  \]
  We now show that this zig-zag relates $\eta_X(x)$ and
  $\eta_Y(y)$. Let $x' = T\inl(\eta_X(x))$ and $y' =
  T\inr(\eta_Y(y))$. Then we have
  \[
    \fin e (x') = \fin e \cdot T\inl(\eta_X(x)) = \fin c (\eta_X(x)) =
    \fin d (\eta_Y(y)) = \fin e \cdot T\inr(\eta_Y(y)) = \fin e (y').
  \]
  Hence, since $\fin e = m \cdot q$ and $m$ is monomorphic, we obtain
  $q(x') = q(y')$. Thus, there exists some $k \in K$ such that
  $f(k) = x'$ and $g(k) = y'$ by the universal property of the kernel
  pair. Finally, since $p: TR \epito K$ is surjective we obtain some
  $z \in TR$ such that $p(z) = k$ whence $f\cdot p(z) = x'$ and
  $g \cdot p(z) = y'$. This completes the proof.
\end{proof}
\begin{rem}
\begin{enumerate}
\item Note that closure of finitely generated algebras under kernel pairs
can equivalently be stated in general algebra terms as follows: every
congruence $R$ of a finitely generated algebra $A$ is finitely
generated as a subalgebra $R \subto A \times A$ (observe that this is
\emph{not} equivalent to stating that $R$ is a finitely generated
congruence).

\item For a lifting $F$ of a set functor $F_0$, the condition that $F$
  maps kernel pairs to weak pullbacks in $\Set$ holds whenever $F_0$
  preserves weak pullbacks. Hence, all the functors on algebraic
  categories mentioned in Example~\ref{ex:rat} satisfy this
  assumption.

\item For the special case of a lifting, a variant of the argument in
  the proof of Proposition~\ref{prop:cong} was used
  in~\cite[Proposition~3.34]{bms13} in order to prove that every
  coalgebra in $\coaf F$ is a coequalizer of a parallel pair of
  morphisms in $\coafr F$. This has inspired
  Winter~\cite[Proposition~7]{Winter15} who uses a very similar
  argument to prove that, for a distributive law $\lambda$,
  $\lambda$-bisimulations (see Bartels~\cite{bartels_thesis}) are
  sound and complete for $\lambda$-bialgebras (see
  Remark~\ref{rem:bialg}). It turns out that, for a lifting $F$,
  Proposition~\ref{prop:cong} is a consequence of Winter's result, or,
  in other words, our result can be understood as a slight
  generalization of Winter's one.
\end{enumerate}
\end{rem}
\begin{exas}
  \begin{enumerate}
  \item The first condition in Proposition~\ref{prop:cong} is not
    necessary for properness of $F$. In fact, it fails in the category
    of semimodules for $\Nat$, viz.~the category of commutative
    monoids: in fact, consider the finitely generated commutative
    monoid $\Nat \times \Nat$ and its submonoid infinitely generated
    by 
    \[
      \{ (n, n+1) \mid n \in \Nat\}, 
    \]
    which is easily seen not be finitely generated. However, as we mentioned in
    Example~\ref{ex:srng}, $FX = \Nat \times X^\Sigma$ is proper on
    the category of commutative monoids. 
  \item In Example~\ref{ex:fp=fg}(\ref{ex:fp=fg4}) we mentioned that, in the category
    $\PCA$ of positively convex algebras, fg- and fp-objects
    coincide. However, fg-objects are not closed under kernel
    pairs. In fact, the interval $[0,1]$ is the free positively convex
    algebra on two generators, but
    $\{(0,0), (1,1)\} \cup (0,1) \times (0,1)$ is a congruence on
    $[0,1]$ that is not an fg-object (i.e.~a
    polytope)~\cite[Example~4.13]{SokolovaW15}.  Thus, properness of
    the functors in Example~\ref{ex:PCAprop} does not follow from
    Proposition~\ref{prop:cong}.
  \end{enumerate}
\end{exas}

\section{Proof of Theorem~\ref{thm:main}}
\label{sec:proof}

In this section we will present the proof of our main result
Theorem~\ref{thm:main}. We start with two technical lemmas.

\begin{rem}
  \label{rem:pp}
  Recall~\cite[Proposition~11.28.2]{arv} that every free $T$-algebra
  $TX$ is \emph{perfectly presentable}, i.e.~the hom-functor
  $\Set^T(TX, -)$ preserves sifted colimits
  (cf.~Remark~\ref{rem:sifted}). It follows that for every sifted
  diagram $D: \D \to \Set^T$ and every $T$-algebra morphism
  $h: TX \to \colim D$ there exists some $d \in \D$ and
  $h': TX \to Dd$ such that 
  \[
    \xymatrix{
      & Dd \ar[d]^{\inj_d} \\
      TX \ar@{-->}[ru]^-{h'} \ar[r]_-h & \colim D.
    }
  \]
\end{rem}

\begin{lem}
  \label{lem:fac}
  For every finite set $X$ and map $f: X \to \phi F$ there exists an object
  $(TY,d)$ in $\coafr F$ and a map $g: X \to Y$ such that \iffull the
  triangle below commutes: 
  \[
    \xymatrix{
      &&
      X 
      \ar[d]^{f} 
      \ar[lld]_-{g}
      \\
      Y \ar[r]_-{\eta_Y} & TY \ar[r]_-{\inj_d} & \phi F
      }
  \]
  \else
  $
    f = (X \xrightarrow{g} Y \xrightarrow{\eta_Y} TY \xrightarrow{\inj_d} \phi F).
  $
  \fi
\end{lem}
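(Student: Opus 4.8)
The plan is to first reduce to a single coalgebra in $\coafr F$ by exploiting that free $T$-algebras are perfectly presentable, and then to enlarge that coalgebra by freely adjoining $X$ as a set of new generators. Concretely, I would first extend $f$ to a $T$-algebra morphism $\kl f: TX \to \phi F$. Since $\phi F$ is a sifted colimit of the inclusion $\coafr F \subto \coa F$ (Remark~\ref{rem:sifted}) and $TX$ is perfectly presentable (Remark~\ref{rem:pp}), there exist a coalgebra $(TZ, e)$ in $\coafr F$ and a $T$-algebra morphism $h: TX \to TZ$ with $\inj_e \cdot h = \kl f$. Putting $s_x := h(\eta_X(x)) \in TZ$ for $x \in X$, we obtain $f(x) = \kl f(\eta_X(x)) = \inj_e(s_x)$; thus all values of $f$ are reached from a single coalgebra of $\coafr F$, although not necessarily by generators.

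Next I would set $Y := Z + X$, which is again finite, and let $q: TY \to TZ$ be the $T$-algebra morphism extending the map $Z + X \to TZ$ that equals $\eta_Z$ on the summand $Z$ and sends each $x \in X$ to $s_x$. Since $q \cdot T\inl$ and $\id_{TZ}$ agree on generators, $q \cdot T\inl = \id_{TZ}$, so $q$ is a surjective $T$-algebra morphism. As $F$ preserves surjective $T$-algebra morphisms (Assumptions~\ref{ass:ass}), $Fq$ is surjective, and by projectivity of the free algebra $TY$ (Remark~\ref{rem:proj}(\ref{rem:proj2})) there is a $T$-algebra morphism $d: TY \to FTY$ with $Fq \cdot d = e \cdot q$; equivalently, $q$ is a morphism $(TY,d) \to (TZ,e)$ in $\coafr F$. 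I expect this step to be the crux: adjoining the elements of $X$ as fresh free generators and then transporting the coalgebra structure back along the split epi $q$ is the same projectivity trick as in the proof of Proposition~\ref{prop:hepi}, but one must be careful to land with a map $g$ into the \emph{index set} $Y$ rather than merely into $TY$.

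Finally, I would take $g := \inr: X \to Y$. Since $q$ is a coalgebra morphism, the colimit cocone of $\phi F$ satisfies $\inj_d = \inj_e \cdot q$, and by construction $q(\eta_Y(\inr(x))) = s_x$ for every $x \in X$; hence $\inj_d \cdot \eta_Y \cdot g$ is the map sending $x$ to $\inj_e(s_x) = f(x)$, so it equals $f$, which is the desired factorisation. Apart from the construction of $(TY,d)$, the argument is just bookkeeping with the universal properties of free $T$-algebras and of the colimit $\phi F$.
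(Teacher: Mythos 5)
Correct, and essentially the paper's own proof: extend $f$ to $\kl f\colon TX\to\phi F$, use perfect presentability of $TX$ together with siftedness of the colimit to factor through some $(TZ,e)$ in $\coafr F$, adjoin $X$ as fresh generators to get a split epimorphism $q\colon T(Z+X)\epito TZ$, transport the coalgebra structure back along $q$, and take $g$ to be the coproduct injection of $X$. The only minor difference is that the paper defines the coalgebra structure on $T(Z+X)$ explicitly by composing $e\o q$ with ($F$ applied to) the section $T\inl\colon TZ\to T(Z+X)$, so it needs neither projectivity nor preservation of surjections at this point, whereas you invoke projectivity of $T(Z+X)$ against the surjection $Fq$ --- equally valid under Assumptions~\ref{ass:ass}, since $e\o q$ is indeed a $T$-algebra morphism.
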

\iffull
\begin{proof}
  We begin by extending $f$ to a $T$-algebra morphism $h = \kl f: TX \to
  \phi F$. By Remark~\ref{rem:pp}, there exists some $c: TZ \to FTZ$ in
  $\coafr F$ and a $T$-algebra morphism $h': TX \to TZ$ such that $h =
  \inj_c \cdot h'$. Let $f' = h' \cdot \eta_X$, let $Y = X+Z$ and
  consider the $T$-algebra morphism $\kl{[f',\eta_Z]}: TY \to
  TZ$. This is a split epimorphism in $\Set^T$; we have $T\inr: TZ \to
  TY$ with
  \[
    \kl{[f',\eta_Z]} \cdot T\inr = \kl{\eta_Z} = \id_{TZ},
  \]
  where the last equation follows from the uniqueness property of
  $\kl{\eta_Z}$ (see Section~\ref{sec:algcoalg})
  by the laws of $\kl{(-)}$. We therefore get a coalgebra structure
  \[
    d = (TY \xrightarrow{\kl{[f',\eta_Z]}} TZ \xrightarrow{c} 
    FTZ \xrightarrow{T\inr} FTY)
  \]
  such that $\kl{[f',\eta_Z]}$ is an $F$-coalgebra morphism from
  $(TY, d)$ to $(TZ,c)$. Since $Y$ is a finite set,
  $(TY, d)$ is an $F$-coalgebra in $\coafr F$, and hence
  $\inj_c \cdot \kl{[f',\eta_Z]} = \inj_d$. Thus we see that
  $g = \inl: X \to Y$ is the desired morphism due to the commutative
  diagram below:
  \[
    \xymatrix@C+1pc{
      &&& X
      \ar `l[llld]_{g=\inl} [llld]
      \ar[ld]^{f'} 
      \ar[d]^f 
      \\
      Y
      \ar[r]_-{\eta_Y}
      \ar@/^1pc/[rr]^-{[f',\eta_Z]} 
      &
      TY
      \ar[r]_-{\kl{[f',\eta_Z]}}
      &
      TZ
      \ar[r]_-{\inj_c}
      &
      \phi F
      \ar@{<-} `d[l] `[ll]^-{\inj_d} [ll]
    }
  \]
\end{proof}
\fi
\begin{rem}
  \label{rem:setcolimit}
  Recall that a colimit of a diagram $D: \D \to \Set$ is computed as
  follows:
  \[
    \colim D = \big(\!\coprod\limits_{d \in \D} Dd\big)/\mathord{\sim},
  \]
  where $\sim$ is the least equivalence on the coproduct (i.e.~the
  disjoint union) of all $Dd$ with $x \sim Df(x)$ for every
  $f:d \to d'$ in $\D$ and every $x \in Dd$. In other words, for every
  pair of objects $c, d$ of $\D$ and $x \in Dc$, $y \in Dd$ we have
  $x \sim y$ iff there is a zig-zag in $\D$ whose $D$-image \iffull
  \[
    \xymatrix{
      Dc = Dz_0 \ar[rd]_{Df_0} && Dz_2 \ar[ld]^{Df_1} \ar[rd]_{Df_2} && \cdots
      \ar[ld]^{Df_3} \ar[rd]_{Df_{n-2}} && z_{n} = Dd 
      \ar[ld]^{Df_{n-1}} \\
      & Dz_1 && Dz_3 & \cdots & Dz_{n-1}
    }
  \]
  \fi
  relates $x$ and $y$ (cf.~Remark~\ref{rem:zigzag}). 
\end{rem}
\begin{lem}
  \label{lem:zigzag}
  Let $(TX, c)$ and $(TY, d)$ be coalgebras in $\coafr F$, 
  $x \in TX$, and $y \in TY$. Then the following are equivalent:
  \begin{enumerate}
  \item $\inj_c(x) = \inj_d(y) \in \phi F$, and
  \item there is a zig-zag in $\coafr F$ relating $x$ and $y$. 
  \end{enumerate}
\end{lem}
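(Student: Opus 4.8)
The plan is to reduce the statement about the algebraic colimit $\phi F$ to the underlying set-level colimit, where the characterization of colimit identification by zig-zags (Remark~\ref{rem:setcolimit}) applies directly. Recall from Remark~\ref{rem:sifted} that the forgetful functor $U\colon \Set^T \to \Set$ preserves sifted colimits and that $\coafr F$ is sifted; since all colimits of $F$-coalgebras are created on the level of $\A = \Set^T$, the underlying set of $\phi F$ is the colimit in $\Set$ of the diagram $U\circ D$, where $D\colon \coafr F \subto \coa F$ is the inclusion. Hence, by Remark~\ref{rem:setcolimit}, $\inj_c(x) = \inj_d(y)$ in $\phi F$ if and only if there is a zig-zag in $\coafr F$ whose $D$-image relates $x$ and $y$ in the sense of Remark~\ref{rem:zigzag}. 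This already gives both implications once we have matched up the two notions of ``zig-zag relating $x$ and $y$'': the one from Remark~\ref{rem:setcolimit} (zig-zags in the index category $\coafr F$) and the one in the statement of the lemma. These coincide by definition, so the implication (2)$\Rightarrow$(1) is essentially immediate: a zig-zag in $\coafr F$ relating $x$ and $y$ is precisely a witness that $x$ and $y$ are identified in the set-level colimit, hence $\inj_c(x)=\inj_d(y)$.

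For the converse (1)$\Rightarrow$(2), I would invoke the same description of the set-level colimit: if $\inj_c(x)=\inj_d(y)$, then by Remark~\ref{rem:setcolimit} there is a zig-zag
\[
  (TX,c) = (TZ_0,c_0) \xrightarrow{f_0} (TZ_1,c_1) \xleftarrow{f_1}
  (TZ_2,c_2) \xrightarrow{f_2} \cdots \xleftarrow{f_{n-1}} (TZ_n,c_n) = (TY,d)
\]
in $\coafr F$ and elements $z_i \in TZ_i$ witnessing the relation, i.e.~$z_0 = x$, $z_n = y$, and $f_i(z_i) = z_{i+1}$ for $i$ even, $f_i(z_{i+1}) = z_i$ for $i$ odd. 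This is exactly a zig-zag in $\coafr F$ relating $x$ and $y$ in the sense required, so we are done.

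The only genuinely non-routine point is to be careful about \emph{which} diagram's colimit we are using and why its underlying set is computed pointwise. The subtlety is that $\phi F$ is defined as a colimit in $\coa F$, not directly in $\Set$; the chain of reductions is: colimits in $\coa F$ are created by the forgetful functor $\coa F \to \A$ (stated in Section~\ref{sec:algcoalg}), so $\phi F$ is the colimit of the underlying diagram in $\A = \Set^T$; then, because $\coafr F$ is sifted (Remark~\ref{rem:sifted}(\ref{rem:sifted3})) and $U\colon \Set^T \to \Set$ preserves sifted colimits (Remark~\ref{rem:sifted}(2)), the underlying set of $\phi F$ is the colimit in $\Set$ of the diagram of underlying sets. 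Once this is in place, Remark~\ref{rem:setcolimit} finishes the argument in one line in each direction. I would therefore structure the proof as: (i) recall the two-step reduction to a set-level sifted colimit; (ii) apply Remark~\ref{rem:setcolimit} to translate equality of $\inj_c(x)$ and $\inj_d(y)$ into the existence of a relating zig-zag in $\coafr F$; (iii) observe this is literally condition (2).
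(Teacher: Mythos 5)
Your proposal is correct and follows exactly the paper's argument: use that $\coafr F$ is sifted, that the forgetful functor $\coa F \to \Set^T \to \Set$ preserves this colimit (colimits of coalgebras being created in $\A$, and $U$ preserving sifted colimits), and then read off the equivalence from the zig-zag description of set-level colimits in Remark~\ref{rem:setcolimit}. Your write-up is just a slightly more detailed unfolding of the same reduction.
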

\begin{proof}
  By Remark~\ref{rem:sifted}(\ref{rem:sifted3}), $\phi F$ is a sifted colimit. Hence, the
  forgetful functor $\coa F \to \Set^T \to \Set$ preserves this
  colimit. Thus the colimit $\phi F$ is formed as recalled in
  Remark~\ref{rem:setcolimit}:
  \[
    \phi F \cong \big(\coprod_c TX_c\big)/\mathord{\sim},
  \]
  where $c: TX_c \to FTX_c$ ranges over the objects of $\coafr
  F$. Therefore, we have the desired equivalence.
\end{proof}
\begin{proof}[Proof of Theorem~\ref{thm:main}]
  ``$\Rightarrow$'' Suppose that for $m: \phi F \to \nu F$ we have $x, y
  \in \phi F$ with $m(x) = m(y)$. We apply Lemma~\ref{lem:fac} to 
  \iffull
  \[
    1 \xrightarrow{x} \phi F \qquad\text{and}\qquad 1 \xrightarrow{y} \phi F,
  \]
  \else
  $1 \xrightarrow{x} \phi F$ and $1 \xrightarrow{y} \phi F$, \fi
  respectively, to obtain two objects
  $c:TX \to FTX$ and $d: TY \to FTY$ in $\coafr F$ with $x' \in X$ and
  $y' \in Y$ such that $\inj_c(\eta_X(x')) = x$ and
  $\inj_d(\eta_Y(y')) = y$. By the uniqueness of coalgebra morphisms
  into $\nu F$ we have 
  \begin{equation}
    \label{eq:coalgmor}
    \fin{c} = m \cdot \inj_c 
    \qquad\text{and}\qquad 
    \fin{d} = m \cdot \inj_d. 
  \end{equation}
  Thus we compute:
  \[
    \fin c(\eta_X(x')) = m \cdot \inj_c\cdot \eta_X(x') = m(x) 
    = 
    m(y) = m \cdot \inj_d \cdot \eta_Y(y') = \fin d(\eta_Y(y')). 
  \]
  Since $F$ is proper by assumption, we obtain a zig-zag in $\coafr F$
  relating $\eta_X(x')$ and $\eta_Y(y')$. By Lemma~\ref{lem:zigzag},
  these two elements are merged by the colimit injections, and we have
  $x = \inj_c(\eta_X(x')) = \inj_d(\eta_Y(y') = y$. We conclude that
  $m$ is monomorphic.

  ``$\Leftarrow$'' Suppose that $m: \phi F \monoto \nu F$ is a
  monomorphism. Let $c: TX \to FTX$ and $d: TY \to FTY$ be objects of
  $\coafr F$, and let $x \in X$ and $y \in Y$ be such that $\fin c
  (\eta_X(x)) = \fin d(\eta_Y(y))$. Using~\eqref{eq:coalgmor} and the
  fact that $m$ is monomorphic we get $\inj_c(\eta_X(x)) =
  \inj_d(\eta_Y(y))$. By Lemma~\ref{lem:zigzag}, we thus obtain a
  zig-zag in $\coafr F$ relating $\eta_X(x)$ and $\eta_Y(y)$. This
  proves that $F$ is proper.
\end{proof}

\section{Conclusions and Further Work}
\label{sec:con}
%\iffull\else\enlargethispage{10pt}\fi
\takeout{% Stoffsammlung
\begin{itemize}
\item TODO: list open problems from p.~6 of the notes!
\item the discussion on completeness proofs on p.~9f in the notes,
  i.e.~that completeness proof in~\cite{bms13} extends to $\S = \Nat$
\item proper functors on convex sets
\end{itemize}
}% end takeout

Inspired by \'Esik and Maletti's notion of a proper semiring, we have
introduced the notion of a proper functor. We have shown that, for a
proper endofunctor $F$ on an algebraic category preserving regular
epis and monos, the rational fixed point $\rho F$ is fully abstract
and moreover determined by those coalgebras with a free finitely
generated carrier (i.e.~the target coalgebras of generalized
determinization).

Our main result also shows that properness is necessary for this kind
of full abstractness. For categories in which fg-objects are closed
under kernel pairs we saw that when $F$ maps kernel pairs to weak
pullbacks in $\Set$, then it is proper. This provides a number of
examples of proper functors. However, in several categories of
interest the condition on kernel pairs fails, e.g.~in
$\Nat$-semimodules (commutative monoids) and positively convex
algebras. There can still be proper functors, e.g.
$FX = \Nat \times X^\Sigma$ on the former and
$FX = [0,1] \times X^\Sigma$ on the latter. But establishing
properness of a functor without using Proposition~\ref{prop:cong}
seems non-trivial, and we leave the task of finding more examples of
proper functors for further work. 
 
One immediate consequence of our results is that the soundness and
completeness proof for the expression calculi for weighted
automata~\cite{bms13} extends from Noetherian to proper semirings.
In fact, \'Esik and Kuich~\cite[Theorems~7.1 and 8.5]{ek_2012} already
provide sound and complete axiomatizations of weighted language
equivalence for (certain subclasses of) proper semirings $\S$ by showing
that $\S$-rational weighted languages form certain free algebras. 

In the future, when additional proper functors are known, it will be
interesting to study regular expression calculi for their coalgebras
and use the technical machinery developed in the present paper for
soundness and completeness proofs.

Another task for future work is to study the new fixed point $\phi F$
in its own right. Here we have already proven that $\phi F$ is
characterized uniquely (up to isomorphism) as the initial ffg-Bloom
algebra. In the future, it might be interesting to investigate
\emph{free} (rather than initial) ffg-Bloom algebras. Moreover,
related to ordinary Bloom algebras~\cite{ahm14} there is the notion of
an Elgot algebra~\cite{amv_elgot}. It is known that for every object
$Y$ of an lfp category, the parametric rational fixed point
$\rho (F(-) + Y)$ yields a free Elgot algebra on $Y$. In addition, the
category of algebras for the ensuing monad is isomorphic to the
category of Elgot algebras for $F$. In~\cite{amu18_cmcs}, the new
notion of an ffg-Elgot algebra for $F$ is introduced, and it is shown
that for free finitely generated algebras $Y$ the parametric fixed
point $\phi(F(-) + Y)$ forms a free ffg-Elgot algebra for $F$ on $Y$,
and furthermore the category of ffg-Elgot algebras for $F$ is monadic
over our algebraic base category $\A$. It remains an open question
whether ffg-Elgot algebras (or ffg-Bloom algebras) are monadic over
$\Set$.

%\vspace*{-10pt}
% 
% Bibliography
%
%\bibliographystyle{plainurl}% the recommended bibstyle
\iffull
\bibliographystyle{abbrv}
\bibliography{ourpapers,coalgebra}
\else
\bibliographystyle{plainurl}
\bibliography{refs}
\fi

\clearpage
\appendix

\section*{Appendix: Category Theoretic Proof of Proposition~\ref{prop:hepi}}

Note first that for every $c: TX \to FTX$ in $\coaf F$ we clearly have
\[
\ha c = (TX \xrightarrow{\inj_c} \phi F \xrightarrow{h} \rho F)
\]
by the finality of $\rho F$. Recall that for strong epis the same
cancellation law as for epis holds: if $e \cdot e'$ is a strong epi, then
so is $e$; a similar law holds for strongly epimorphic families.
Hence, we are done if we show that the $\ha c$ where
$c: TX \to FTX$ ranges over $\coafr F$ forms a jointly strongly
epimorphic family, too. This is done by using that the $\ha a$, where
$a: A \to FA$ ranges over $\coaf F$, form a strongly epimorphic family
(to see this use Lemma~\ref{lem:strepi} once again). 

The key observation is as follows: given any $a: A \to FA$ in
$\coaf F$ we know that its carrier is a regular quotient of some free
$T$-algebra $TX$ with $X$ finite, via $q: TX \epito A$, say. Since $F$
preserves regular epis ($=$ surjections) we can use projectivity of
$TX$ (see Remark~\ref{rem:proj}(\ref{rem:proj2})) to obtain a coalgebra
structure $c$ on $TX$ making $q$ an $F$-coalgebra morphism:
%\iffull
\[
  \xymatrix{
    TX 
    \ar@{-->}[r]^-{c} 
    \ar@{->>}[d]_q
    &
    FTX 
    \ar@{->>}[d]^{Fq}
    \\
    A \ar[r]_-a & FA
  }
\]
%\else
%indeed, we have the regular epi $Fq: FTX \epito FA$ and the morphism
%$a \cdot q: TX \to FA$; thus we obtain $c: TX \to FTX$ with $Fq \cdot
%c = a \cdot q$ as desired. \fi 
This implies that we have $\ha c = \ha a \cdot q$. 

Now suppose that we have two parallel morphisms $f, g$ such that for every $c:
TX \to FTX$ in $\coafr F$ we have $f \cdot \ha c = g \cdot \ha
c$. Then for every $a: A \to FA$ in $\coaf F$ we obtain
\[
f \cdot \ha a \cdot q = f \cdot \ha c = g \cdot \ha c = g \cdot \ha a
\cdot q,
\]
which implies that $f \cdot \ha a = g \cdot \ha a$ since $q$ is
epimorphic. Hence $f = g$ since the $\ha a$ form a jointly epimorphic
family. This proves that the $\ha c$ form a jointly epimorphic
family. 

To see that they form a strongly jointly epimorphic family, assume
that we are given a monomorphism $m: M \monoto N$ and morphisms $g:
\rho F \to N$ and $f_c: TX \to M$ for every $c: TX \to FTX$ in $\coafr
F$ such that $m \cdot f_c  = g \cdot \ha c$. We extend the family
$(f_c)$ to one indexed by all $a: A \to FA$ in $\coaf F$ as
follows. We have that any such $(A,a)$ is a quotient coalgebra of some
$(TX, c)$ via $q: TX \epito A$, which is the coequalizer of some
parallel pair $k_1, k_2: K \to TX$ in $\A$. Thus we have
\begin{align*}
  m \cdot f_c \cdot k_1 
  &= g \cdot \ha c \cdot k_1 \\
  &= g \cdot \ha a \cdot q \cdot k_1 \\
  &= g \cdot \ha a \cdot q \cdot k_2 \\
  &= g \cdot \ha c \cdot k_2 \\
  &= m \cdot f_c \cdot k_2,
\end{align*} 
which implies that $f_c \cdot k_1 = f_c \cdot k_2$ since $m$ is
monomorphic. Therefore we obtain a unique $f_a: A \to M$ such that
$f_a \cdot q = f_c$ using the universal property of the coequalizer
$q$. Hence we can compute
%\iffull\[\else\/$\fi
\[
  m \cdot f_a \cdot q = m \cdot f_c = g \cdot \ha c = g \cdot \ha a
  \cdot q,
\]
%\iffull\]\else\/$ \fi
which implies $m \cdot f_a = g \cdot \ha a$ since $q$ is
epimorphic. Now we use that the $\ha a$ are jointly strongly
epimorphic (cf.~Lemma~\ref{lem:strepi}) to obtain a unique morphism
$d: \rho F \to M$ with $d \cdot \ha a = f_a$ and $m \cdot d = g$ for
all $a: A \to FA$ in $\coaf F$. In particular, $d$ is the desired
diagonal fill-in since $\coafr F$ is a full subcategory of $\coaf F$. As for
the uniqueness of the fill-in $d$ we still need to check that any $d$
with $d \cdot \ha c = f_c$ for all $c: TX \to FTX$ in $\coafr F$ and
$m \cdot d = g$ also fulfils $d \cdot \ha a = f_a$ for every
$a: A \to FA$ in $\coaf F$. Indeed, this follows from
\[
d \cdot \ha a \cdot q = d \cdot \ha c = f_c = f_a \cdot q
\]
using that $q$ is epimorphic. \qed

\end{document}